\newtheorem{thm}{Theorem}[section]
\newtheorem{prop}[thm]{Proposition}
\newtheorem{lem}[thm]{Lemma}
\newtheorem{cor}[thm]{Corollary}
\theoremstyle{definition}
\theoremstyle{remark}
\def\im{\mbox{Im }}
\def\mod{{\rm\ mod\ }}
\def\tr{{\rm tr\,}}
\def\p{\partial}
\def\wt{\widetilde}
\def\ker{\mbox{Ker }}
\def\ua{\uparrow}
\def\da{\downarrow}
\newcommand{\C}{\mathbb{C}}
\newcommand{\R}{\mathbb{R}}
\newcommand{\Z}{\mathbb{Z}}
\def\widebar{\accentset{{\cc@style\underline{\mskip10mu}}}} 
\def\wideubar{\underaccent{{\cc@style\underline{\mskip10mu}}}} 
\begin{document}

\title{Generalized homology and Atiyah-Hirzebruch spectral sequence in crystalline symmetry protected topological phenomena}


\author{Ken Shiozaki}
\affil{Yukawa Institute for Theoretical Physics, Kyoto University, Kyoto 606-8502, Japan \email{ken.shiozaki@yukawa.kyoto-u.ac.jp}}
\author{Charles Zhaoxi Xiong}
\affil{Department of Physics, Harvard University, Cambridge, Massachusetts 02138, USA}
\author{Kiyonori Gomi}
\affil{Department of Mathematics, Tokyo
Institute of Technology,2-12-1 Ookayama, Meguro-ku, Tokyo, 152-8551, Japan}
%
%
%

\begin{abstract}%
We propose that symmetry-protected topological (SPT) phases with crystalline symmetry are formulated by an equivariant generalized homology $h^G_n(X)$ over a real space manifold $X$ with $G$ a crystalline symmetry group. 
The Atiyah-Hirzebruch spectral sequence unifies various notions in crystalline SPT phases, such as the layer construction, higher-order SPT phases, and Lieb-Schultz-Mattis-type theorems. 
This formulation is applicable to not only free fermionic systems but also interacting systems with arbitrary on-site and crystal symmetries.
\end{abstract}


\maketitle

\setcounter{tocdepth}{2}

\section{Introduction}
Symmetry-protected topological (SPT) phases are topologically distinct phases of gapped, invertible states with symmetry~\cite{SPT_origin, Wen_Definition, Cirac,  CGLW13, LG12, PTBO10, CGW11, LV12, GW14, VS13, SRFL08, Kit09, FM13, FK13, WPS14, Kap14, KTTW15, Freed_invertible_iTQFT, FH16, Wit15, 3dBTScBurnell,2dFermionGExtension}.
A quantum state is said to be invertible when it is realized as the unique gapped ground state of a Hamiltonian constructed of local terms for any system size and closed space manifold. 
Two invertible states are said to belong to the same SPT phase if there exists a continuous path of Hamiltonians interpolating between them that preserves both the symmetry and the energy gap~\cite{Wen_Definition}.
For SPT phases protected solely by onsite symmetry, the classification scheme is well-developed. 
The group (super) cohomology theory produces explicit lattice Hamiltonians and topological actions of nonlinear $\sigma$ models in discretized Euclidean spacetime~\cite{CGLW13,GW14}, so that gauging the onsite symmetry yields twisted discrete pure gauge theories~\cite{DW90,LG12}. 
The cobordism group \cite{Kap14,KTTW15,Kapustin_equivariant,Yonekura18} and other invertible field theory invariants \cite{Freed_invertible_iTQFT,FH16} give classifications of low-energy effective response actions of SPT phases.
A physical feature of nontrivial SPT phases is the quantum anomaly in boundary states: in the presence of a real-space boundary, the topological response action is not invariant under gauge transformations. 
This is a manifestation of the \'{}t Hooft anomaly, which can only be canceled by the low-energy degrees of freedom living on the boundary of an SPT phase with the opposite anomaly.

Soon after SPT phases were proposed, it was realized that crystalline symmetries \cite{TeoFuKane08,Fu11} can also serve to protect nontrivial phases of invertible systems, giving rise to the notation of crystalline SPT phases~\cite{CGW11, Cirac, Wen_sgSPT_1d, Hsieh_sgSPT, Hsieh_sgSPT_2, You_sgSPT, SPt, Cho_sgSPT, Yoshida_sgSPT, Hermele_torsor, Jiang_sgSPT, TE18}.
Two systematic approaches to classifying crystalline SPT phases have been proposed. 
Song et al. constructed crystalline SPT phases by placing lower-dimensional invertible states at high-symmetry points and demonstrated how classification could be obtained by considering trivialization of invertible states on lower-dimensional cells by invertible states in adjacent higher-dimensional cells~\cite{Hermele_torsor}. 
Later, this idea was applied to the classification of Lieb-Schultz-Mattis (LSM) theorems through bulk-boundary correspondence~\cite{CGW11,PWJZ17} and a comprehensive classification of 3-dimensional bosonic SPT phases with space group symmetries~\cite{HSHH17}. 
Thorngren and Else~\cite{TE18} proposed a useful picture for interpreting crystalline symmetries as gauge fields. 
They contended if the spatial scale of a crystalline symmetry (such as the lattice vector) is large compared to the correlation length, that the crystalline symmetry will behave in a topological quantum field theory (TQFT) as if it is an onsite symmetry, hence allowing one to recast the classification of crystalline SPT phases as a classification of onsite SPT phases.

The notion of higher-order SPT phases is of importance to crystalline systems and highlights the difference between crystalline SPT phases and SPT phases protected only by onsite symmetries~\cite{BenalcazarScience,BenalcazarPRB,FF17,SchindlerHigher,SongFangFangHigher17,
LangbehnHigher17,FangMapping,KhalafPRX18AII,KhalafPRB18Higher,LukaPRB18SecondOrder,TB18}. 
An $n$-th order SPT phase in $d$ space dimensions is an SPT phase that exhibits an anomalous surface state localized on a $(d-n)$-dimensional spatial edge on the boundary. 
Historically, higher-order SPT phases were first discovered in free fermionic systems and interpreted in known phases from the perspective of layer construction. 
Huang et al.~\cite{HSHH17} and Trifunovic and Brouwer~\cite{TB18} found a fundamental structure behind crystalline SPT phases: there is a filtration $0 \subset F_0 h_n^G \subset \cdots \subset F_d h_n^G = h_n^G(X)$ for the classification of crystalline SPT phases (the notation here is introduced in Sec.~\ref{sec:ahss}) with respect to the space dimension for which the layer construction is defined so that the quotient $F_p h^G_n / F_{p-1} h^G_n$ is the classification of $(d-p+1)$-dimensional higher-order SPT phases. 
In Sec.~\ref{sec:ahss}, we show that the filtration structure holds true for SPT phases with many-body interactions and demonstrate how it may be properly formulated.

Kitaev~\cite{Kitaev_Stony_Brook_2011_invertible_1, Kit13, Kit15} pointed out that the physical properties of invertible states imply that the ``spaces of invertible states" of different dimensions form an $\Omega$-spectrum in the sense of generalized cohomology, which gives us a general mathematical framework for describing SPT phases~\cite{Xio17,CharlesAris18,GT17}. 
The nature of the microscopic degrees of freedom (i.e., bosons versus fermions) is encoded in the $\Omega$-spectrum. That is, there is one $\Omega$-spectrum for bosonic SPT phases and another $\Omega$-spectrum for fermionic SPT phases.
Kitaev proposed that the classification of $d$-dimensional SPT phases with onsite $G_i$ symmetry is given by the generalized cohomology $h^d(BG_i)$, where $h^*(-)$ is the generalized cohomology theory associated with the $\Omega$-spectrum and $BG_i$ is the classifying space of $G_i$.

More generally, one can show, by the same argument, that the spaces of invertible states form an $\Omega$-spectrum even if we restrict attention to subspaces that respect a given onsite symmetry (possibly anti-unitary)~\cite{CharlesAris18,GT17}. 
Taking this as our fundamental assumption, we propose that the classification of SPT phases over a real space $X$ protected by crystalline symmetry $G$ is given by $h^G_n(X)$, where $h^G_n$ is the equivariant generalized {\it homology} theory defined the said $\Omega$-spectrum, and the symmetry group $G$ acts both on the real space $X$ and the $\Omega$-spectrum. 
To support this proposal, we will apply the Atiyah-Hirzeburch spectral sequence (AHSS) to the generalized homology theory $h^G_n$ and show that the various terms and differentials in the spectral sequence have concrete physical meanings. In particular, we will see that the technology developed by Song et al.~\cite{Hermele_torsor} is nothing but the first differential in the AHSS.

The AHSS is one of the many spectral sequences commonly used in the mathematical literature to compute generalized (co)homology groups $h_*(X)$ (resp.\ $h^*(X)$).~\cite{AH}
The first step in applying the AHSS is to collect the local topological data for each cell in a cell-decomposition of the space $X$. 
Then, we compare the local data in cells of different dimensions and glue them together appropriately -- these procedures are known as ``differentials" in the language of spectral sequence -- to obtain global information.
Iterating differentials, we finally get the so-called $E^{\infty}$-page (resp.\ $E_{\infty}$-page), which approximates the generalized (co)homology theory $h^G_*(X)$ (resp.\ $h_G^*(X)$) in the manner of a filtration. 
The filtration is effectively a set of short exact sequences that need to be solved. 
Solving these short exact sequences, we will then be able to determine the generalized (co)homology group $h^G_*(X)$ (resp.\ $h_G^*(X)$). 
We emphasize, for crystalline SPT phases, that differentials and group extensions in the AHSS have well-defined physical meanings even in higher orders, lending themselves to full determination using our understanding of the physical properties of the relevant phases.

The organization of this paper is as follows. 
Sec.~\ref{sec:spectrum} is devoted to introducing the viewpoint of the $\Omega$-spectrum, which Kitaev pointed out, and its connection to the adiabatic pump and an emergent invertible state trapped on a texture. 
In Sec.~\ref{sec:genehomo}, we define the equivariant generalized homology for a given $\Omega$-spectrum of invertible states. 
Some generic properties of a generalized homology are given. 
We formulate the AHSS in Sec.~\ref{sec:ahss} in detail. 
We collect physical implications of the AHSS in Sec.~\ref{sec:phys_ahss}. 
The connection to the higher-order SPT phases and various LSM-type theorems are described there. 
Sec.~\ref{sec:interacting} and \ref{sec:free} present various examples of the computation of the AHSS. 
Sec.~\ref{sec:interacting} is for interacting crystalline SPT phases. 
For free fermions, the generalized homology is recast as the $K$-homology. 
We describe the AHSS for free fermions in Sec.~\ref{sec:free}. 
We summarize this paper and suggest future directions in Sec.~\ref{sec:conc}. 

\noindent
{\it Notation---}
In this paper, ``Inv$^d$" and ``SPT$^d$" mean ``$d$-dimensional invertible" and ``$d$-dimensional SPT", respectively. 
Unless stated otherwise, dimension always refers to the spatial dimension.

\section{Invertible states as a spectrum}
\label{sec:spectrum}
The essence of the generalized cohomology approach for SPT phases by Kitaev is the following homotopy equivalence relation among invertible states with different space dimensions ~\cite{Kit13}
\begin{align}
F_d \cong \Omega F_{d+1}. 
\label{eq:spectra}
\end{align}
Here, $F_d$ is the based topological space consisting of Inv$^d$ states protected by onsite symmetry, where the base point is the trivial tensor product state denoted by $\ket{1}$. 
$\Omega F_{d+1} = \left\{\ell: S^1 \to F_{d+1} | \ell(0)=\ell(1)=\ket{\rm 1} \right\}$ is the based loop space of $F_{d+1}$, the Inv$^{d+1}$ states.
The classification of SPT$^d$ phases is given by disconnected parts of $F_d$, the generalized cohomology $h^d(pt) = [pt,F_d]$.

\subsection{On the homotopy equivalence $F_d \cong \Omega F_{d+1}$}
\label{sec:1d_pump}
\begin{figure}[!]
\begin{center}
\includegraphics[width=\linewidth, trim=0cm 2cm 0cm 2cm]{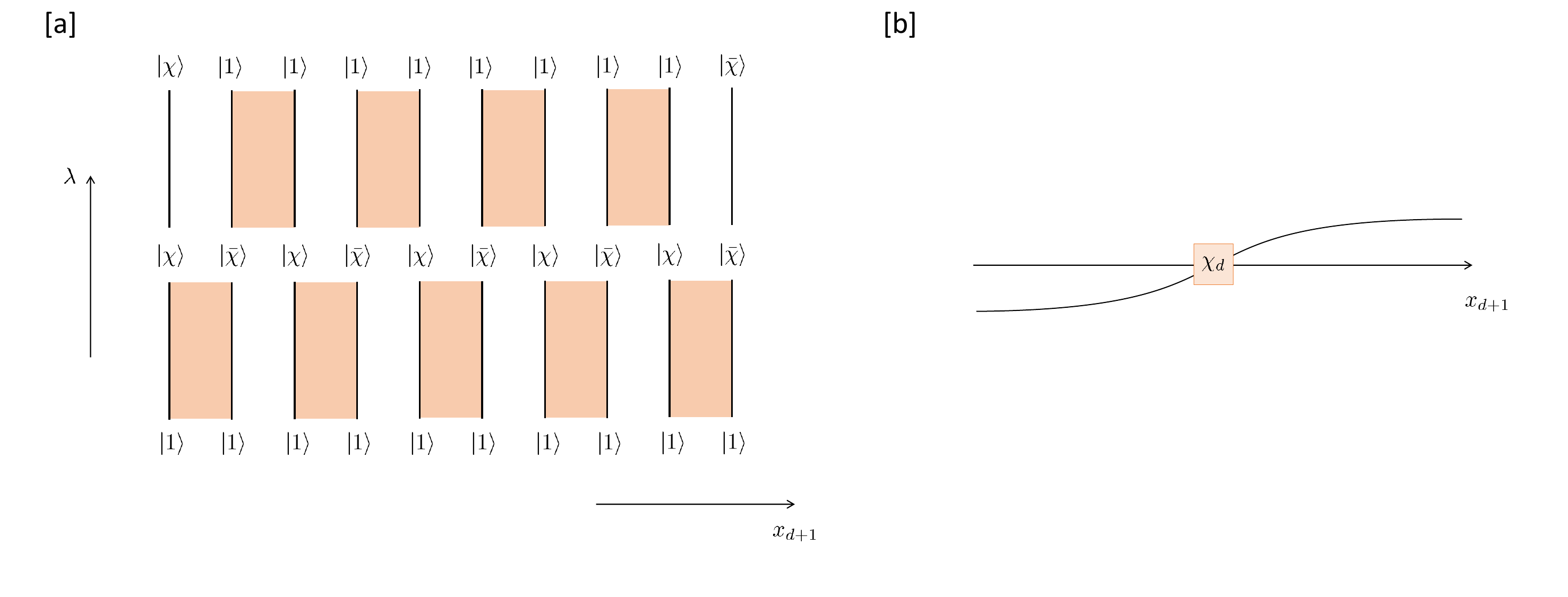}
\end{center}
\caption{[a] An adiabatic cycle in Inv$^{d+1}$ states labeled by an Inv$^d$ state $\ket{\chi}$.
[b] An Inv$^d$ state localized as the texture of Inv$^{d+1}$ states. 
}
\label{fig:spectra}
\end{figure}
The relation (\ref{eq:spectra}) states that an adiabatic cycle in Inv$^{d+1}$ states which begins and ends at the trivial state is uniquely labeled by an Inv$^d$ state (up to homotopy equivalence). 
Especially, the topological classification of such adiabatic processes is given by $[S^1,F_{d+1}]_* = [pt,\Omega F_{d+1}] = [pt,F_d] = h^d(pt)$, the classification of SPT$^d$ phases.
Kitaev provided a canonical construction of the map $f: F_d \to \Omega F_{d+1}$~\cite{Kit13}, which is best understood from Fig.~\ref{fig:spectra}[a] and described below. 
Let $\ket{\chi} \in F_d$ be an Inv$^d$ state and $\ket{\bar \chi} \in F_d$ be its conjugate so that the tensor product of them is adiabatically equivalent to a trivial state $\ket{\chi} \otimes \ket{\bar \chi} \sim \ket{1} \otimes \ket{1}$. 
The existence of the ``inverse" in this sense is a characteristic of invertible states.  
At the initial time $\lambda = 0$, we consider the layer of trivial states $\Ket{\lambda=0} = \bigotimes_{j \in \Z} \ket{1}_{j}$ along the $x_{d+1}$-direction. 
During the adiabatic time evolution in a half period, for a given Inv$^d$ state $\ket{\chi} \in F_d$, we take the adiabatic deformation $\ket{1}_{2j-1} \otimes \ket{1}_{2j} \sim \ket{\chi}_{2j-1} \otimes \ket{\bar \chi}_{2j}$ for adjacent two sites $2j$ and $2j+1$ to get the Inv$^{d+1}$ state 
\begin{align}
\Ket{\lambda = \frac{1}{2}}
= \bigotimes_{j \in \Z} \left( \ket{\chi}_{2j-1} \otimes \ket{\bar \chi}_{2j} \right). 
\end{align}
Next, we take the inverse adiabatic deformation $\ket{\bar \chi}_{2j} \otimes \ket{\chi}_{2j+1} \sim \ket{1}_{2j} \otimes \ket{1}_{2j+1}$ with switching bipartite sublattices. 
The resulting Inv$^{d+1}$ state goes back to the trivial state $\Ket{\lambda = 1} = \Ket{\lambda=0}$, which achieves an adiabatic cycle labeled by an Inv$^d$ state $\ket{\chi}$ which begins at the trivial Inv$^{d+1}$ state. 
For an open system in the $x_{d+1}$-direction, an adiabatic cycle pumps the Inv$^d$ state $\ket{\chi}$ from the right to the left boundary (See Fig.~\ref{fig:spectra} [a]). 

The condition of adiabatic cycles so that the initial state is the trivial Inv$^{d+1}$ state is crucial to have a canonical construction of $F_d \to \Omega F_{d+1}$. 
In fact, for the free loop space ${\cal L} F_{d+1} = {\rm Map}(S^1,F_{d+1})$, the initial state may be topologically nontrivial, and since topologically nontrivial Inv$^{d+1}$ states can not be decomposed into the tensor product of Inv$^d$ states on sites $j \in \Z$ we do not expect a canonical construction of adiabatic cycles. 

To verify the homotopy equivalence (\ref{eq:spectra}), it is required an inverse map $g: \Omega F_{d+1} \to F_d$ such that $g \circ f$ and $f \circ g$ are homotopically equivalent to the identity maps. 
A canonical construction of the inverse map $g$ is not known yet but is understood as an Inv$^d$ state trapped on a texture of a family of Inv$^{d+1}$ states in between a trivial Inv$^{d+1}$ state. 
Let $H(\lambda \in [0,1])$ be an adiabatic Hamiltonian realizing the adiabatic cycle $\Ket{\lambda \in [0,1]}$. 
With this, we have a semiclassical Hamiltonian $H(\lambda = x_{d+1})$ that slowly varies compared with the correlation length along the $x_{d+1}$-direction.
Nontrivial cycle labeled by the Inv$^d$ state $\ket{\chi}$ implies the existence of the localized Inv$^d$ state $\ket{\chi}$ at the texture represented by the Hamiltonian $H(x_{d+1})$ (See Fig.~\ref{fig:spectra}[b]). 
A simple example is the following tensor product state 
\begin{align}
\cdots 1111 \chi \bar \chi \chi \bar \chi \cdots \chi \bar \chi \chi \bar \chi \chi \cdots \bar \chi \chi \bar \chi \chi 1111 \cdots, 
\label{eq:1d_texture}
\end{align}
where there exists a single Inv$^d$ state $\ket{\chi}$ per a texture.

\subsection{On the homotopy equivalence $F_d \cong \Omega^2 F_{d+2}$}
\begin{figure}[!]
\begin{center}
\includegraphics[width=0.8\linewidth, trim=0cm 0cm 0cm 0cm]{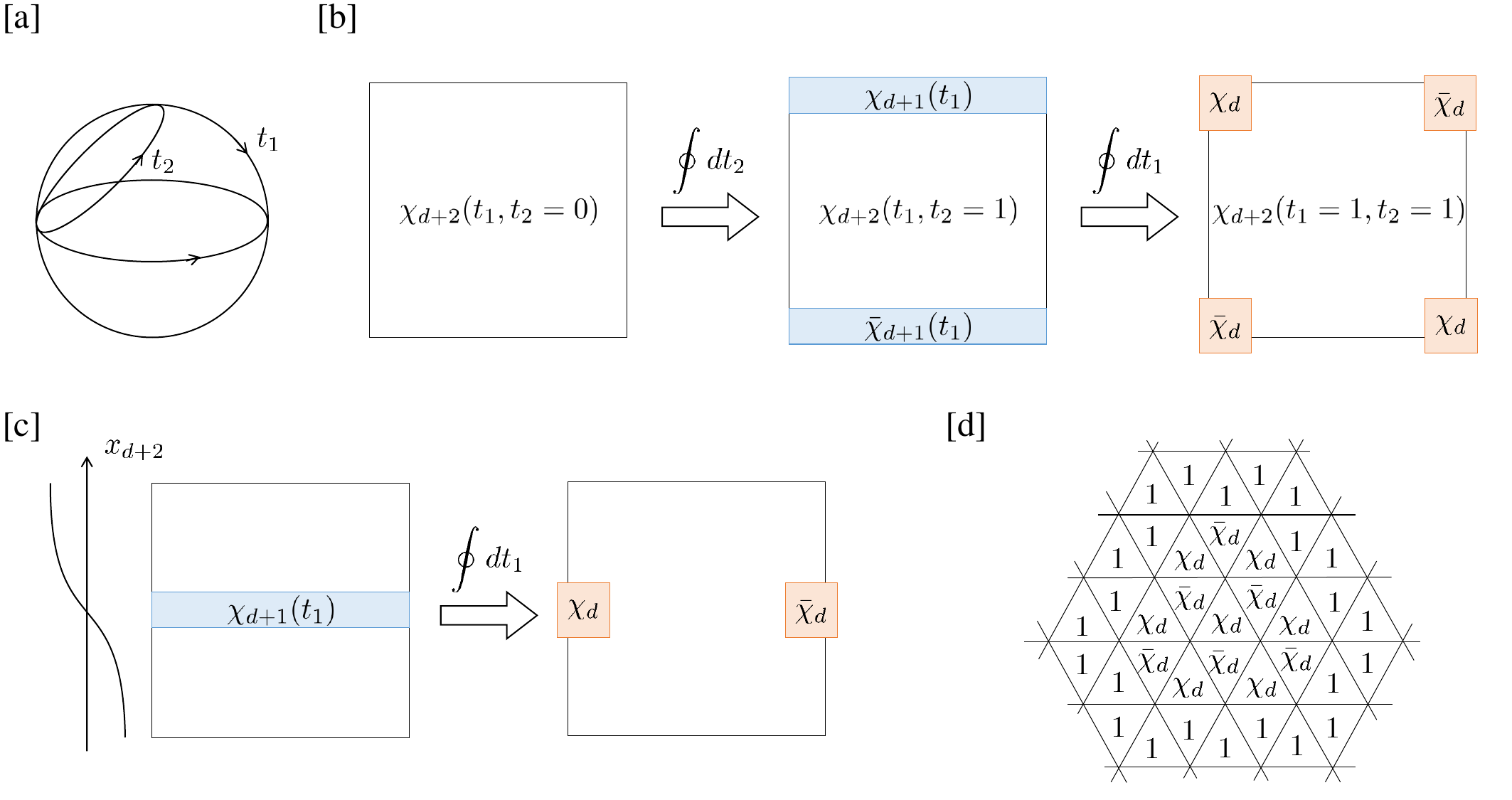}
\end{center}
\caption{
[a] Adiabatic parameters $(t_1,t_2)$ living in the 2-sphere. 
[b] The subsequent adiabatic pumps for the $t_2$ and $t_1$ time directions. 
[c] Adiabatic pump of the $t_1$ direction with a texture. 
[d] A texture-induced Inv$^d$ state.
}
\label{fig:spectra2}
\end{figure}

It would be instructive to see the physical meaning of the iterated based loop space 
\begin{align}
F_d \cong \Omega F_{d+1} \cong \Omega^2 F_{d+2} = \{ \ell: S^2_* \to F_{d+2} | \ell(*) = \ket{1} \}.
\end{align}
An element of $\Omega^2 F_{d+2}$ represents a two-parameter adiabatic cycle. 
By applying the canonical construction of $F_d \to \Omega F_{d+1}$ twice, we have that for $F_d \cong \Omega^2 F_{d+2}$. 
For a given Inv$^d$ state $\ket{\chi_d}$, we have an adiabatic cycle $\ket{\chi_{d+1}(t_1)}$ moving in topologically trivial Inv$^{d+1}$ states so that a one period of adiabatic cycle pumps the state $\ket{\chi_d}$. 
Also, at each time slice $t_1$, there is an adiabatic cycle $\ket{\chi_{d+2}(t_1,t_2)}$ moving in topologically trivial Inv$^{d+2}$ states so that a one period of the second parameter $t_2$ gives a pump of the state $\ket{\chi_{d+1}(t_1)}$. 
The parameter space $(t_1,t_2)$ becomes the 2-sphere (see Fig.~\ref{fig:spectra2} [a]) compactified to the trivial Inv$^{d+1}$ state. 
As in the case of adiabatic pumps with a single time, one can see that on a rectangular system, subsequent adiabatic cycles of $t_1$ and $t_2$ give the pump of the Inv$^d$ state $\ket{\chi_d}$: 
Let us consider topologically trivial Inv$^{d+2}$ states $\ket{\chi_{d+2}(t_1,t_2)}$ parameterized by $S^2$ on a rectangular open system along the $x_{d+1}$- and $x_{d+2}$-directions as shown in Fig.~\ref{fig:spectra2} [b]. 
First we take an adiabatic cycle for $t_2$, resulting in the topologically trivial Inv$^{d+1}$ states $\ket{\chi_{d+1}(t_1)}$ and its conjugate $\ket{\bar \chi_{d+1}(t_1)}$ on the boundary. 
Then, an adiabatic cycle of $t_1$ gives the Inv$^d$ states $\ket{\chi_d}$ and $\ket{\bar \chi_d}$ at the four corners. 

Adiabatic parameters can be replaced by semiclassical variables representing a texture in real space.
When $t_2$ is replaced by the semiclassical parameter slowly depending on the $x_{d+2}$-direction, an adiabatic deformation at the time $t_1$ corresponds to the topologically trivial Inv$^{d+1}$ state $\ket{\chi_{d+1}(t_1)}$ localized at the kink along the $x_{d+2}$-direction. 
The adiabatic cycle of the time $t_1$ gives the localized Inv$^{d}$ states $\ket{\chi_d}$ and $\ket{\bar \chi_d}$ on the boundary of the kink (See Fig.~\ref{fig:spectra2} [c]). 
If both the adiabatic parameters $t_1$ and $t_2$ are replaced by the texture variables, the semiclassical state $\Ket{\chi_{d+2}(\bm{n}(x_{d+1},x_{d+2}))}$, $\bm{n} \in S^2$, represents a skyrmion induced Inv$^d$ state $\ket{\chi_d}$. 
In a similar way to (\ref{eq:1d_texture}), for a bipartite 2-dimensional lattice, a simple realization is given by putting an Inv$^d$ state $\ket{\chi_d}$ at the center and wrapping the center with trivial pairs $\ket{\chi_d} \otimes \ket{\bar \chi_d}$ (See Fig.~\ref{fig:spectra2} [d]).
The Inv$^d$ state $\ket{\chi_d}$ trapped at a single skyrmion-like texture is stable under a perturbation leaving the system gapped.

%

%
%
%

\section{Crystalline SPT phases and Generalized homology}
\label{sec:genehomo}
Let $\{F_d\}_{d \in \Z}$ be the $\Omega$-spectrum so that $F_d$ is the space of Inv$^d$ states with a given onsite symmetry.
Let $X$ be a real space manifold. 
$X$ is typically the infinite Euclidean space $X = \R^d$ in the context of SPT phases, but $X$ can be an arbitrary real space manifold. 
Let $G$ be a symmetry group that acts on the real space $X$ and also the Inv$^{d}$ states, that is, the spectrum $F_d$. 
For a pair $(X,Y)$ of $G$-equivariant real spaces $X$ and $Y$ with $Y \subset X$, the $G$-equivariant generalized homology theory $h^G_n(X,Y)$ is mathematically defined by \cite{Whi,KonoTamaki}
\begin{align}	
h^G_{n}(X,Y) := \underset{k \to \infty}{\rm colim} \left[ S^{n+k}, (X/Y) \wedge  F_k \right]_G, 
\end{align}
where $G$ trivially acts on the sphere $S^{n+k}$. 
When $Y = \emptyset$, we simply write $h^G_n(X) = h^G_n(X,\emptyset)$. 

The group $h^G_n(X, Y)$ enjoys the axioms of the equivariant generalized homology theory. They are dual to those of equivariant generalized cohomology theory described for example in \cite{Bre}, and are shown by generalizing~\cite{Whi}. 
\begin{itemize}
\item
(homotopy)
If $G$-equivariant maps $f, f' : X \to X'$ are $G$-equivariantly homotopic by a homotopy which carries $Y \subset X$ into $Y' \subset X'$ all the way, then the induced maps $f_*, f'_* : h^G_n(X, Y) \to h^G_n(X', Y')$ are the same: $f_* = f'_*$.
\item
(excision)
For $A, B \subset X$, the inclusion $A \to A \cup B$ induces an isomorphism $h^G_n(A, A \cap B) \to h^G_n(A \cup B, B)$.
\item
(exactness)
For $Y \subset X$, there is a long exact sequence $\cdots \to h^G_n(Y) \to h^G_n(X) \to h^G_n(X, Y) \to h^G_{n-1}(Y) \to \cdots$.
\item
(additivity)
For $Y_\lambda \subset X_\lambda$ parameterized by a set $\Lambda = \{ \lambda \}$, the inclusions $X_\lambda \to \sqcup_\lambda X_\lambda$ induces an isomorphism $\prod_\lambda h^G_n(X_\lambda, Y_\lambda) \to h^G_n(\sqcup_\lambda X_\lambda, \sqcup_\lambda Y_\lambda)$.
\end{itemize}

In the above, the spaces and their subspaces are assumed to be $G$-CW complexes and their subcomplexes (see \cite{May} for the definition).

By design, the counterpart of the dimension axiom is $h^G_n(pt) = h_G^{-n}(pt) = \pi_0(F^G_{-n})$, where $F^G_{-n} \subset F_{-n}$ consists of $G$-fixed points. 
Here we assumed that $F_d$ is an equivariant $\Omega$-spectrum, which would be the general assumption in the context.
From the axiom, we can derive various exact sequences. For instance, we have the Mayer-Vietoris exact sequence $\cdots \to h^G_{n+1}(X) \to h^G_n(A \cap B) \to h^G_n(A \sqcup B) \to h^G_n(X) \to \cdots$ for a cover $X = A \cup B$. We also have $h^G_n(X \times D^d, X \times \partial D^d) \cong h^G_{n - d}(X)$, where $D^d$ is the $d$-dimensional disk endowed with trivial $G$-action.


In the sequel, we will use the following property of the generalized homology, which is called ``rolling and unrolling" in \cite{TE18}:
\begin{itemize}
\item
For any normal subgroup $H \subset G$ which acts on $X$ freely and the $\Omega$-spectrum trivially, we have $h^G_n(X) \cong h^{G/H}_n(X/H)$.
\end{itemize}
Note that this is independent of the axioms of generalized homology theory.

Let us see the physical meaning of the integer grading $n \in \Z$. 
When $X$ is the $d$-dimensional disc $D^d$, $Y$ is its boundary $\p D^d$, and the crystalline symmetry group $G$ is trivial $G = \{e\}$, the Poincar\'e-Lefschetz duality implies that $h_n(D^d, \p D^d) \cong h^{d-n}(D^d) \cong h^{d-n}(pt) = [pt,F_{d-n}]$. 
That is, the generalized homology $h_n(D^d, \p D^d)$ gives the classification of SPT$^{d-n}$ phases. 
For this reason, we call the integer degree $n \in \Z$ {\it the degree of SPT phenomena}. 
Similarly, with crystalline symmetry $G$, the generalized homology $h^G_n(X,Y)$ is understood as the classification of degree $n$ SPT phenomena over $X$ up to degree $(n-1)$ SPT phenomena over $Y$. 
For examples, 
\begin{itemize}
\item
$h^G_0(X,Y)$ is the classification of SPT phases over $X$ which can have anomalies over $Y$. 
\item
$h^G_1(X,Y)$ is the classification of adiabatic cycles over $X$ which can change SPT phases nonadiabatically over $Y$. 
\item
$h^G_{-1}(X,Y)$ is the classification of anomalies over $X$ which can have ``sources and sinks of anomalies" over $Y$. 
\end{itemize}

In the rest of this section, we interpret the exactness axiom and Mayer-Vietoris sequences from the viewpoint of SPT physics. 

\subsection{Exactness}
Let $X$ and $Y$, $Y \subset X$, be a pair of $G$-symmetric real spaces. 
Associated with the inclusions $f: Y \to X$ and $g: (X, \emptyset) \to (X,Y)$, 
we have the long exact sequence 
\begin{align}
\cdots 
\xrightarrow{\p^{n+1}}
h^{G|_{Y}}_{n}(Y) 
\xrightarrow{f_*^n}
h^{G}_n(X) 
\xrightarrow{g_*^n}
h^G_n(X,Y) 
\xrightarrow{\p^n}
h^{G}_{n-1}(Y) 
\xrightarrow{f_*^{n-1}}
\cdots
\end{align}
The existence of this sequence is an axiom of the generalized homology. 
Let us illustrate the definitions of homomorphisms $f_*^n, g_*^n$ and $\p^n$ for SPT phases, i.e., \ $n=0$.  

The homomorphism $f_*^0$ is defined by embedding an SPT phase over $Y$ into $X$: 
$$
\begin{array}{c}
\includegraphics[width=0.8\linewidth, trim=0cm 10.5cm 0cm 0cm]{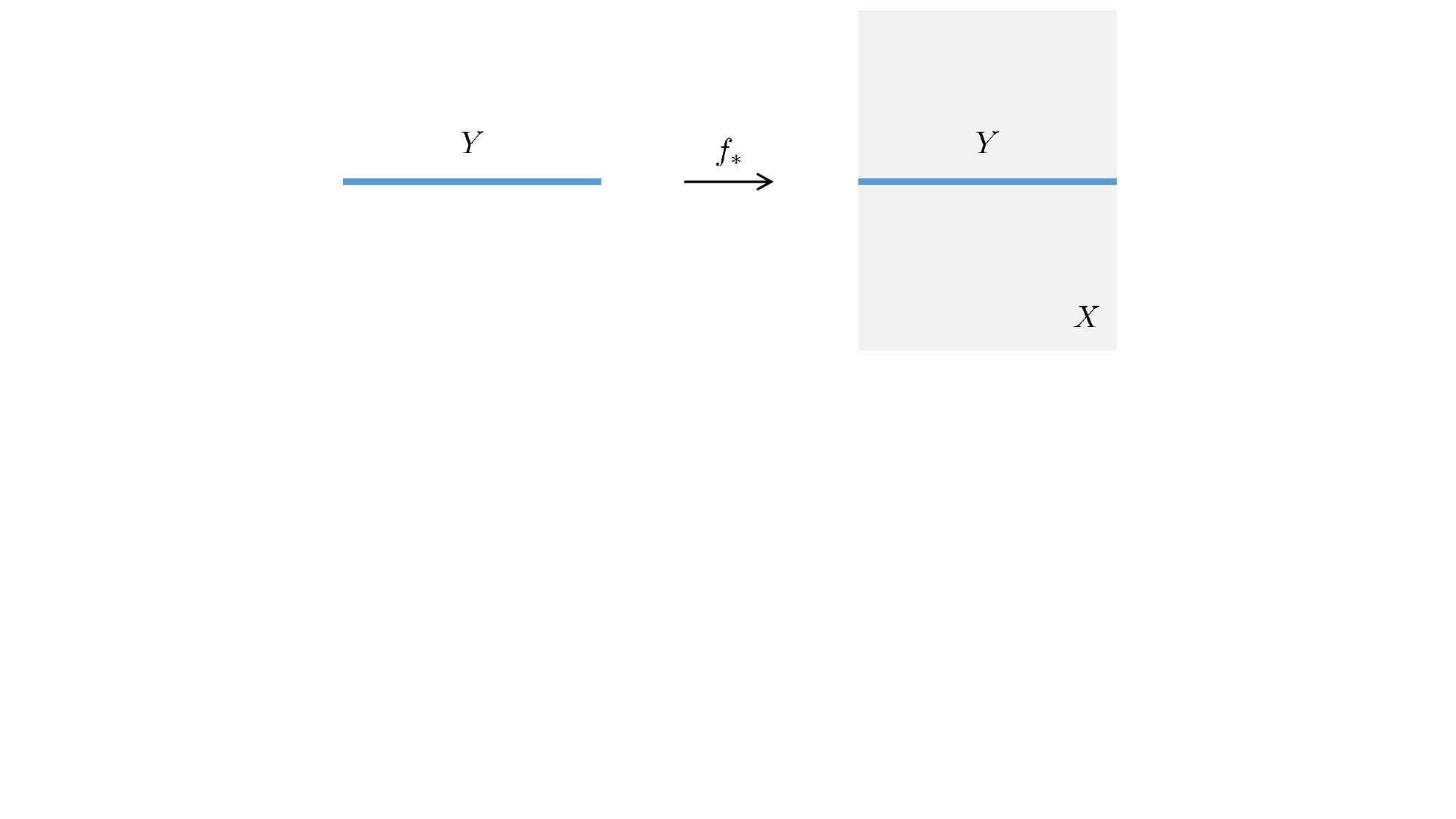}
\end{array}
$$
The homomorphism $f_*^{-1}$ is defined similarly    . 

The homomorphism $g_*^0$ is defined as follows. 
For a given SPT phase $x$ over $X$, cutting out $Y$ from $X$ leads to anomalies localized on $Y$. 
Then, the SPT phase $x$ over $X \backslash Y$ which can have anomalies on $Y$ defines an element of $g^0_*(x) \in h^G_0(X,Y)$:
$$
\begin{array}{c}
\includegraphics[width=0.8\linewidth, trim=0cm 10.5cm 0cm 0cm]{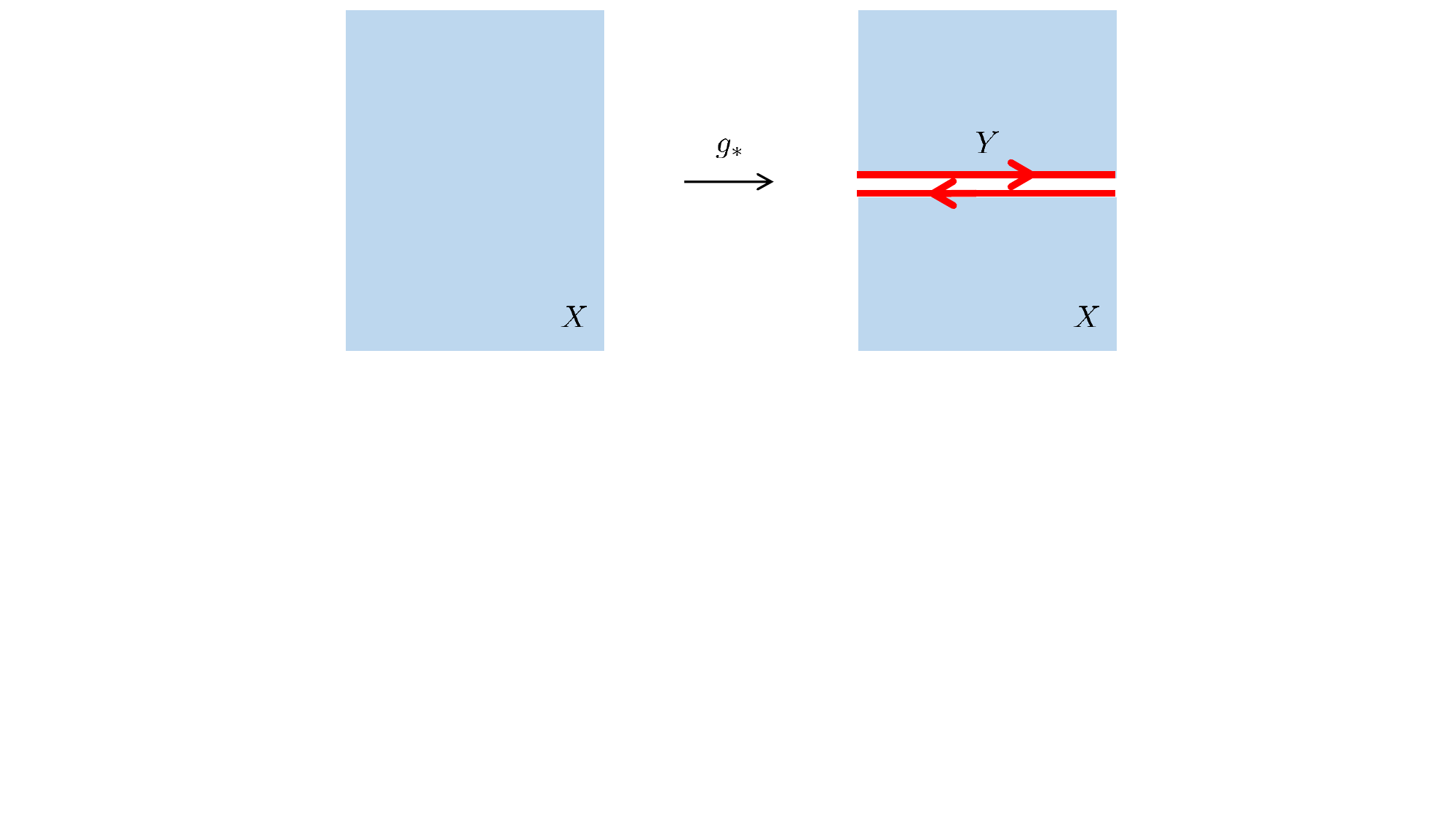}
\end{array}
$$

For an SPT phase $x \in h^G_0(X,Y)$ on $X$ which can be anomalous on $Y$, the boundary map $\p^0(x) \in h^G_{-1}(Y)$ is defined as the anomaly of $x$ on $Y$:
$$
\begin{array}{c}
\includegraphics[width=0.8\linewidth, trim=0cm 10.5cm 0cm 0cm]{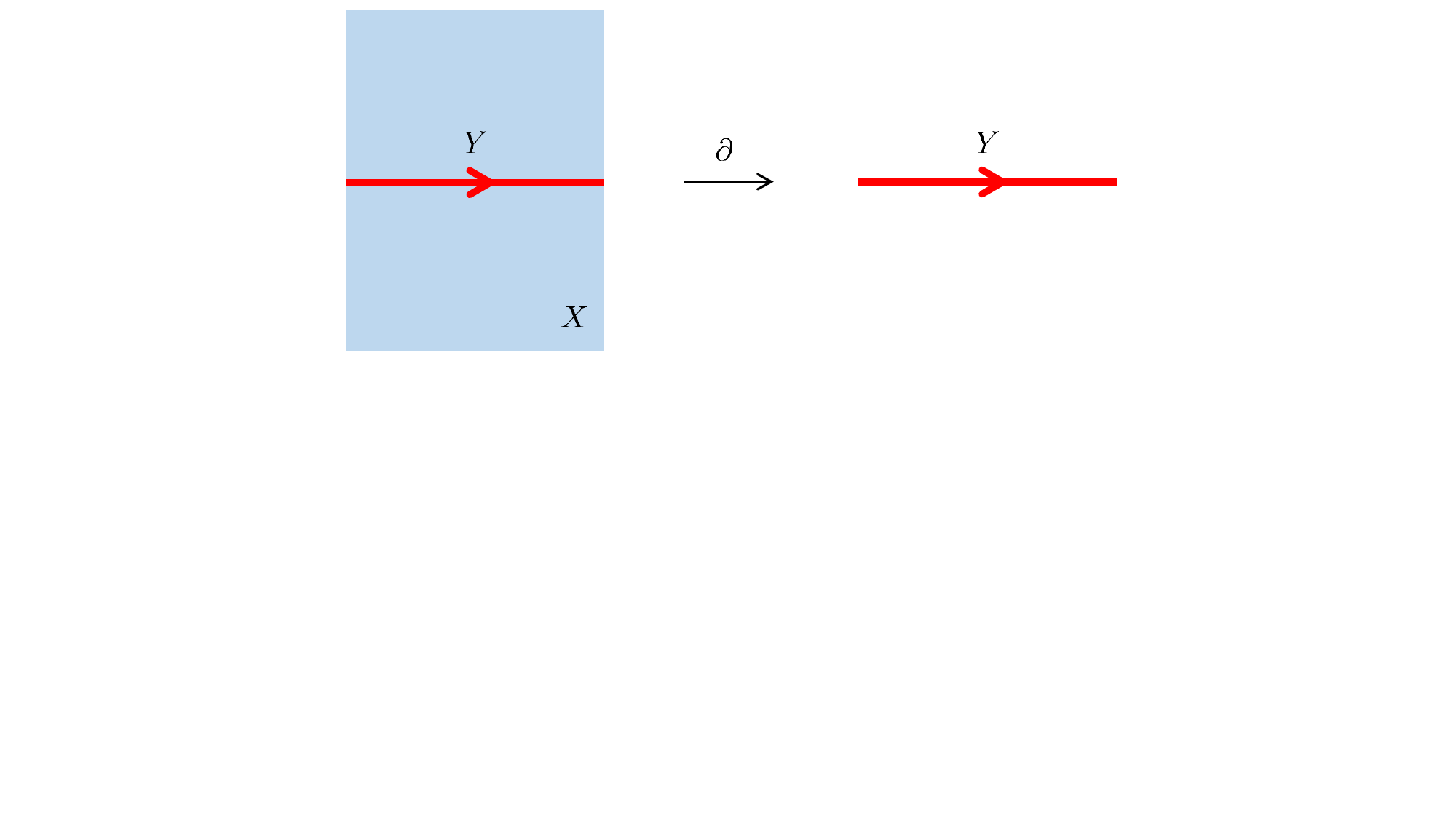}
\end{array}
$$

For the exactness of the sequence, $\im \subset \ker$ for $n=0$ is readily confirmed. 
An SPT phase on $Y$ has no anomaly on $Y$, meaning $\im f^0_* \subset \ker g_*^0$. 
In the same way, an SPT phase on $X$ has no anomaly, leading to $\im g_*^0 \subset \ker \p^0$. 
Since an anomaly $\p^0(x) \in h^G_{-1}(Y)$ given by the boundary map $\p^0$ from $(X,Y)$ should be represented as the boundary of an SPT phase over $X \backslash Y$, $\p^0(x)$ is nonanomalous as an anomaly on $X$. 
(Note that, by definition, an SPT phase $x \in h^G_0(M)$ on $M$ has no anomaly even if $M$ has a boundary $\p M \neq \emptyset$.)

\subsubsection{Boundary map $\p^0$: the bulk-boundary correspondence}
When $X$ has a boundary $\p X$, the boundary map $\p^0: h^G_0(X,\p X) \to h^G_{-1}(\p X)$ is what we call the bulk-boundary correspondence.
With crystalline symmetry, not every SPT phase on $X$ implies a boundary anomaly on $\p X$. 
Conversely, not every anomaly on $\p X$ implies a bulk SPT phase on $X$. 
The precise relationship among bulk SPT phases and boundary anomalies is that for an SPT phase $x \in h^G_0(X,\p X)$ in bulk, the boundary anomaly of the SPT phase $x$ is given by the element $x \mapsto \p^0(x) \in h^G_{-1}(\p X)$.

\subsection{Mayer-Vietoris sequence}
Let us introduce a $G$-symmetric cover $X = U \cup V$ for a real space $X$ where each $U$ and $V$ is $G$-symmetric.~\footnote{
We say that a space $X$ is $G$-symmetric if $g(x) \in X$ for any point $x \in X$ and any $g \in G$. 
} 
Associated with the sequence of inclusions 
\begin{align}
\begin{CD}
\begin{array}{c}
\includegraphics[width=0.1\linewidth, trim=0cm 0cm 0cm 0cm]{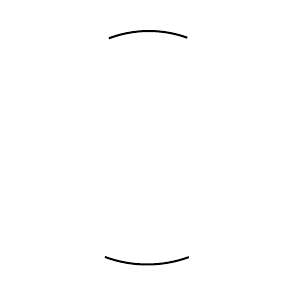}
\end{array}@>>>
\begin{array}{c}
\includegraphics[width=0.1\linewidth, trim=0cm 0cm 0cm 0cm]{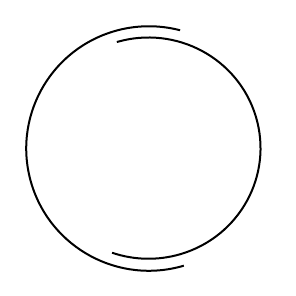}
\end{array} 
@>>>
\begin{array}{c}
\includegraphics[width=0.1\linewidth, trim=0cm 0cm 0cm 0cm]{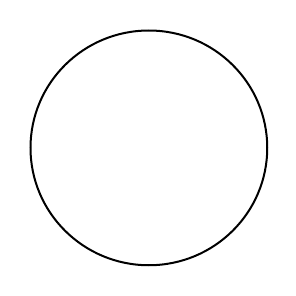}
\end{array}  \\
U \cap V
@>i>>
U \sqcup V 
@>j>>
X = U \cup V 
\end{CD}
\end{align}
we have the long exact sequence called the Mayer-Vietoris sequence of generalized homology~\cite{Hat}
\begin{align}
\cdots 
\xrightarrow{\p^{n+1}}
h^{G}_{n}(U \cap V) 
\xrightarrow{i_*^n}
h^{G}_n(U) \oplus h^{G}_n(V) 
\xrightarrow{j_*^n}
h^G_n(X) 
\xrightarrow{\p^n}
h^{G}_{n-1}(U \cap V) 
\xrightarrow{i_*^{n-1}}
\cdots. 
\label{eq:mv}
\end{align}
We illustrate the homomorphisms $i_*^n,j_*^n$ and $\p^n$ in (\ref{eq:mv}) for SPT phases ($n=0$, says). 
Other homomorphisms with $n \neq 0$ are understood similarly. 

The homomorphism $i_*^0$ is defined as follows. 
First, for a given SPT phase $x \in h^{G}_{0}(U \cap V)$ we adiabatically create a pair of SPT phases $x$ and its conjugate $\bar x = -x$ inside $U \cap V$. 
Next, move the SPT phase $x$ to the interior of $U$ and $(-x)$ to the interior of $V$, which defines the homomorphism $i_*^0: x \mapsto (x|_U,-x|_V) \in h^{G}_0(U) \oplus h^{G}_0(V)$, where $x|_U$ and $x|_V$ represent the SPT phases of $x$ as those in the real spaces $U$ and $V$, respectively. 

Similarly, the homomorphism $i_*^{-1}$ is defined by making a pair of anomalies $x$ and $\bar x = -x$ inside $U \cap V$ belonging to $h^{G}_{-1}(U \cap V)$ and moving those anomalies to $U$ and $V$. 

The homomorphism $j_*^0$ is defined by embedding the SPT phases $x$ over $U$ and $y$ over $V$ into the total real space $X$. 
This defines the homomorphism $j_*^0:(x,y) \mapsto x|_X + y|_X \in h^G_0(X)$. 

The boundary homomorphism $\p^0$ is defined by the anomaly of SPT phases. 
For an SPT phase $x$ over $X$, cutting the real space $X$ to the two pieces $U$ and $V$ yields the anomaly localized on the boundaries $\p U$ that belongs to the generalized homology $h^{G}_{-1}(U \cap V)$. 
This defines the boundary homomorphism $\p^0: x \mapsto x|_{\p U} \in h^{G}_{-1}(U \cap V)$. 

Some parts of the exactness of the Mayer-Vietoris sequences are readily confirmed. 
It holds that $\im i_*^0 \subset \ker j_*^0$, since an adiabatically created pair of SPT phases $x$ and $\bar x = -x$ over $U \cap V$ is a trivial SPT phase over $X$. 
$\im j_*^0 \subset \ker \p^0$ holds true since an SPT phase over $U$ has no anomaly unless cutting $U$, and so is an SPT phase over $V$. 
Similarly, we find that $\im \p^0 \subset \ker i_*^{-1}$ holds since the anomaly over $U \cap V$ created by the boundary homomorphism $\p^0$ becomes non-anomalous over $U$ and $V$. 
We should note that to prove the exactness, we should further show the inverse inclusions.

\section{Atiyah-Hirzebruch spectral sequence}
\label{sec:ahss}
A useful tool to compute a generalized (co)homology $h^G_n(X,Y)$ is the AHSS. 
For SPT phases, $X$ is a real space manifold with a finite dimension. 
In such cases the $E^r$-page of the spectral sequences converges at $E^{\infty} = E^{d+1}$, where $d$ is the dimension of $X$. 
In this section, we illustrate the AHSS for crystalline SPT phenomena. 
We especially describe how to compute the differentials $d^r$ of the AHSS from the viewpoint of physics and how to solve the short exact sequences by $E^{\infty}$-page to get the generalized homology. 
It turns out that the AHSS is the mathematical structure behind prior works \cite{Hermele_torsor, PWJZ17, HSHH17, TB18}.

\subsection{Cell decomposition and $E^1$-page}
\label{sec:e1}
Let $X$ be a $d$-dimensional space manifold and $G$ be a crystalline symmetry group or point group acting on the space $X$. 
The first step of the AHSS is to introduce a $G$-symmetric filtration of $X$, 
\begin{align}
X_0 \subset X_1 \subset \cdots \subset X_d = X, 
\end{align}
where $p$-dimensional subspace $X_p$ is called $p$-skeleton. 
If we have a $G$-symmetric cell decomposition of $X$, associated with its cell decomposition, the $p$-skeleton $X_p$ is given inductively by 
\begin{align}
X_0 = \{0\mbox{-cells}\}, \qquad 
X_p = X_{p-1} \cup \{p\mbox{-cells}\}. 
\end{align}
Here, $p$-dimensional open cells composing the cell decomposition are called $p$-cells. 
It should be remarked in the construction of AHSS that the spectral sequence itself is constructed from the filtration of a space that may not be associated with a cell decomposition, and a cell structure is only necessary for a useful expression of the spectral sequence.

In the rest of this section, we assume a $G$-symmetric cell decomposition so that the little group $G_{D^p_j} \subset G$ for a $p$-cell $D^p_j$ does not change the real space position in $D^p_j$, i.e., the group $G_{D^p_j}$ behaves as an onsite symmetry inside the $p$-cell $D^p_j$. 

The $E^1$-page of the AHSS is defined as
\begin{align}
E^1_{p,-q} 
= h_{p-q}^G(X_p \cup Y, X_{p-1} \cup Y).
\end{align}
By using the cell decomposition of $X$ and $Y$, we can express $E^1_{p,-q}$ as
\begin{equation}\begin{split}
E^1_{p,-q} 
= \prod_{j\in I_p} h^{G_{D^p_j}}_{p-q}(D^p_j,\p D^p_j)
= \prod_{j \in I_p} h^{q}_{G_{D^p_j}}(pt)
= ``\mbox{SPT$^{q}$ phases on $p$-cells}", 
\end{split}\end{equation}
where $j$ runs the set (denoted by $I_p$) of inequivalent $p$-cells of $X$ that are not in $Y$, and $G_{D^p_j}$ is the little group that fixing the $p$-cell $D^p_j$. 
Note that for each $p$-cell $D^p_j$, the group $h^q_{G_{D^p_j}}(pt)$ is dictated solely by the subgroup $G_{D^p_j}$.
The $E^1$-page expresses the collection of ``local data of SPT phases", which we write in the following table 
$$
\begin{array}{c|cccccccc}
\vdots & \vdots & \vdots & \vdots & \vdots & \\
q=0 & {\rm SPT}^{0} & {\rm SPT}^{0} & {\rm SPT}^{0} & {\rm SPT}^{0} & \cdots \\
q=1 & {\rm SPT}^{1} & {\rm SPT}^{1} & {\rm SPT}^{1} & {\rm SPT}^{1} & \cdots \\
q=2 & {\rm SPT}^{2} & {\rm SPT}^{2} & {\rm SPT}^{2} & {\rm SPT}^{2} & \cdots \\
q=3 & {\rm SPT}^{3} & {\rm SPT}^{3} & {\rm SPT}^{3} & {\rm SPT}^{3} & \cdots \\
\vdots & \vdots & \vdots & \vdots & \vdots &\\
\hline 
E^1_{p,-q} & p=0 & p=1 & p=2 & p=3 & \cdots \\
\end{array}
$$

\subsection{First differential and $E^2$-page}
\label{sec:d1}
Because the topological information contained in the $E^1$-page is spatially local, these data should be glued in between different cells properly. 
To do so, we define the first differential 
\begin{align}
d^1_{p,-q}: E^1_{p,-q} \to E^1_{p-1,-q}, 
\end{align}
that is, 
\begin{align}
d^1_{p,-q}: \mbox{``SPT$^q$ phases on $p$-cells" $\to$ ``SPT$^q$ phases on $(p-1)$-cells"}, 
\end{align}
as trivializing SPT$^{q}$ phases on $(p-1)$-cells by pair-creation of SPT$^{q}$ phases on adjacent $p$-cells~\cite{Hermele_torsor}. 
Let's see some first differentials in low-space dimensions. 
The physical meaning of $d^1_{p,-q}$ depends on the degree of SPT phenomena $n=p-q$. 

\begin{itemize}
\item
{\it $d^1_{1,0}$: ``SPT$^0$ phases on 1-cells" $\to$ ``SPT$^0$ phases on 0-cells"}

An SPT$^0$ phase $\ket{\chi}$ in a 1-cell $D^1_j$ is identified with the adiabatic pump labeled by the SPT$^0$ state $\ket{\chi}$. 
SPT$^0$ phases on $0$-cells adjacent to the $1$-cell may be trivialized by the pair-creation of the SPT$^0$ state $\ket{\chi}$ and its conjugate $\ket{\bar \chi}$: 
$$
\begin{array}{c}
\includegraphics[width=0.8\linewidth, trim=0cm 15cm 0cm 0cm]{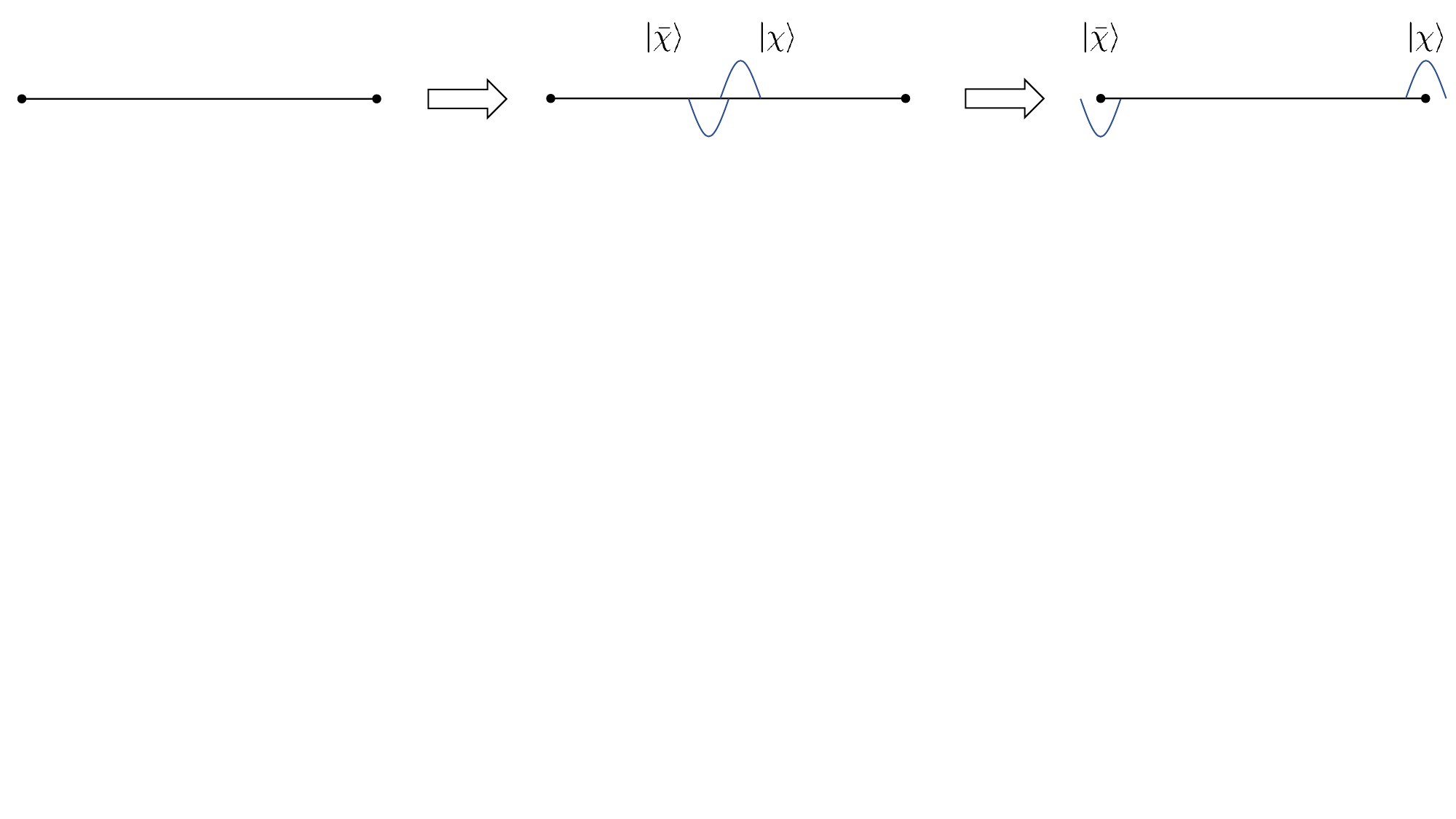}
\end{array}
$$
The point is that this trivialization is doable {\it within a unique gapped ground state}. 
This yields the equivalence relation $E^1_{0,0}/\im (d^1_{1,0})$. 

Note that $d^1_{1,0}$ is also identified with the consistency condition $\ker (d^1_{1,0}) \subset E^1_{1,0}$ for adiabatic cycles inside $1$-cells. 
This is because if there remains an edge SPT$^0$ state at 0-cells per an adiabatic cycle, 
such an adiabatic cycle produces an SPT$^0$ state at the $0$-cell, meaning the resulting invertible state adiabatically differs from the initial state. 

\item
{\it $d^1_{2,-1}$: ``SPT$^1$ phases on 2-cells" $\to$ ``SPT$^1$ phases on 1-cells"}

The homomorphism $d^1_{2,-1}$ represents how SPT$^1$ phases on $2$-cells trivialize SPT$^1$ phases on adjacent $2$-cells. 
This trivialization exists since inside a $2$-cell one can make an SPT$^1$ phase localized on a circle within a unique gapped ground state, which results in the equivalence relation $E^1_{1,-1}/\im (d^1_{2,-1})$: 
$$
\begin{array}{c}
\includegraphics[width=0.8\linewidth, trim=0cm 14cm 0cm 0cm]{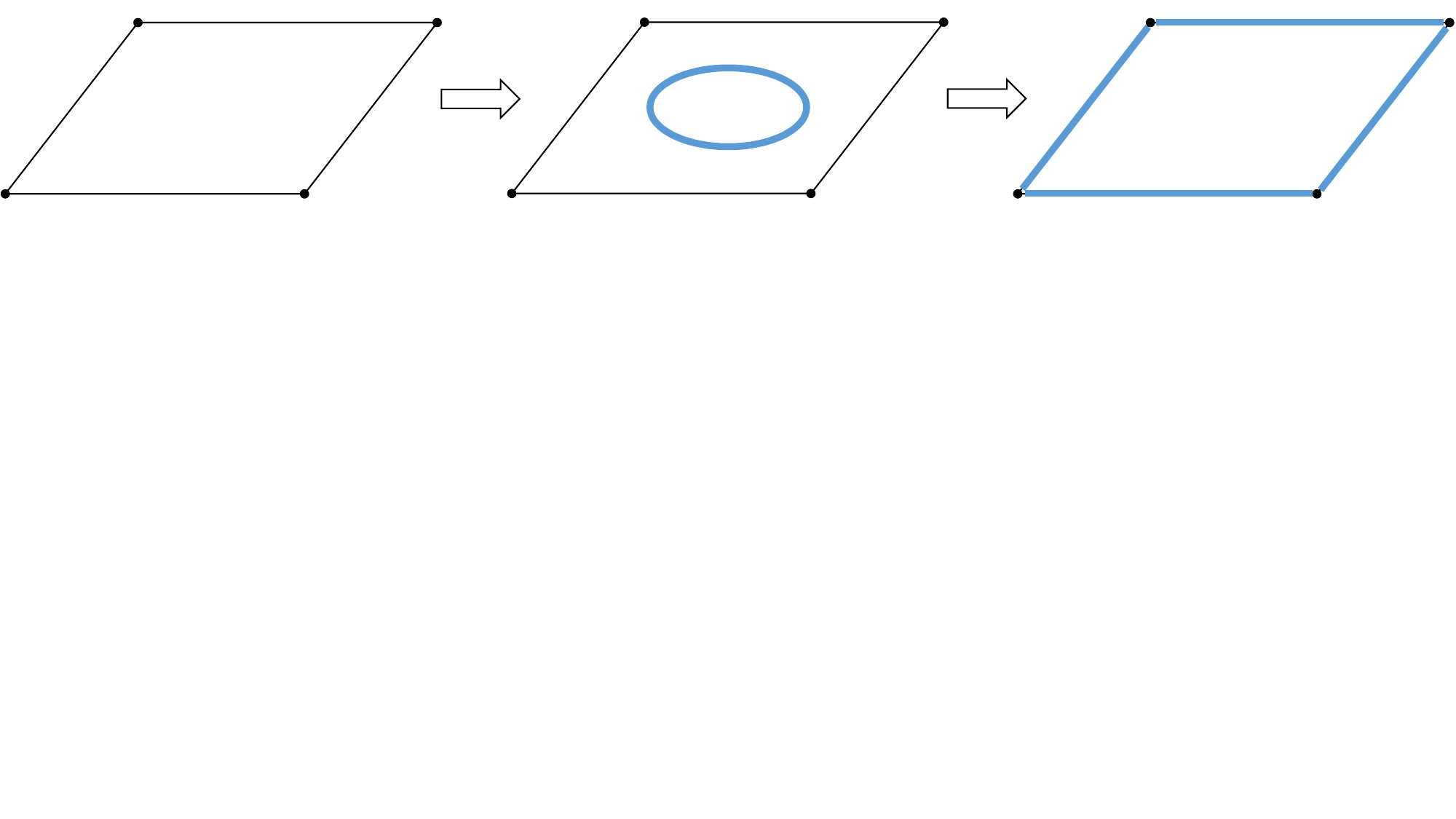}
\end{array}
$$
At the same time, $d^1_{2,-1}$ yields the consistency condition, $\ker (d^1_{2,-1}) \subset E^1_{2,-1}$, to have a nontrivial adiabatic cycle to produce an SPT$^1$ phase in $2$-cells, otherwise there remains a nontrivial SPT$^1$ phase somewhere after a period of adiabatic cycle. 

There is a strong constraint on SPT$^1$ phases created by an adiabatic cycle of $d^1_{2,-1}$. 
A pumped SPT$^1$ phase over a circle should have no flux inside the circle, i.e., the pumped SPT$^1$ phase is a disc state created from the vacuum through the imaginary time path-integral: 
$$
\begin{array}{c}
\includegraphics[width=0.8\linewidth, trim=0cm 14cm 0cm 0cm]{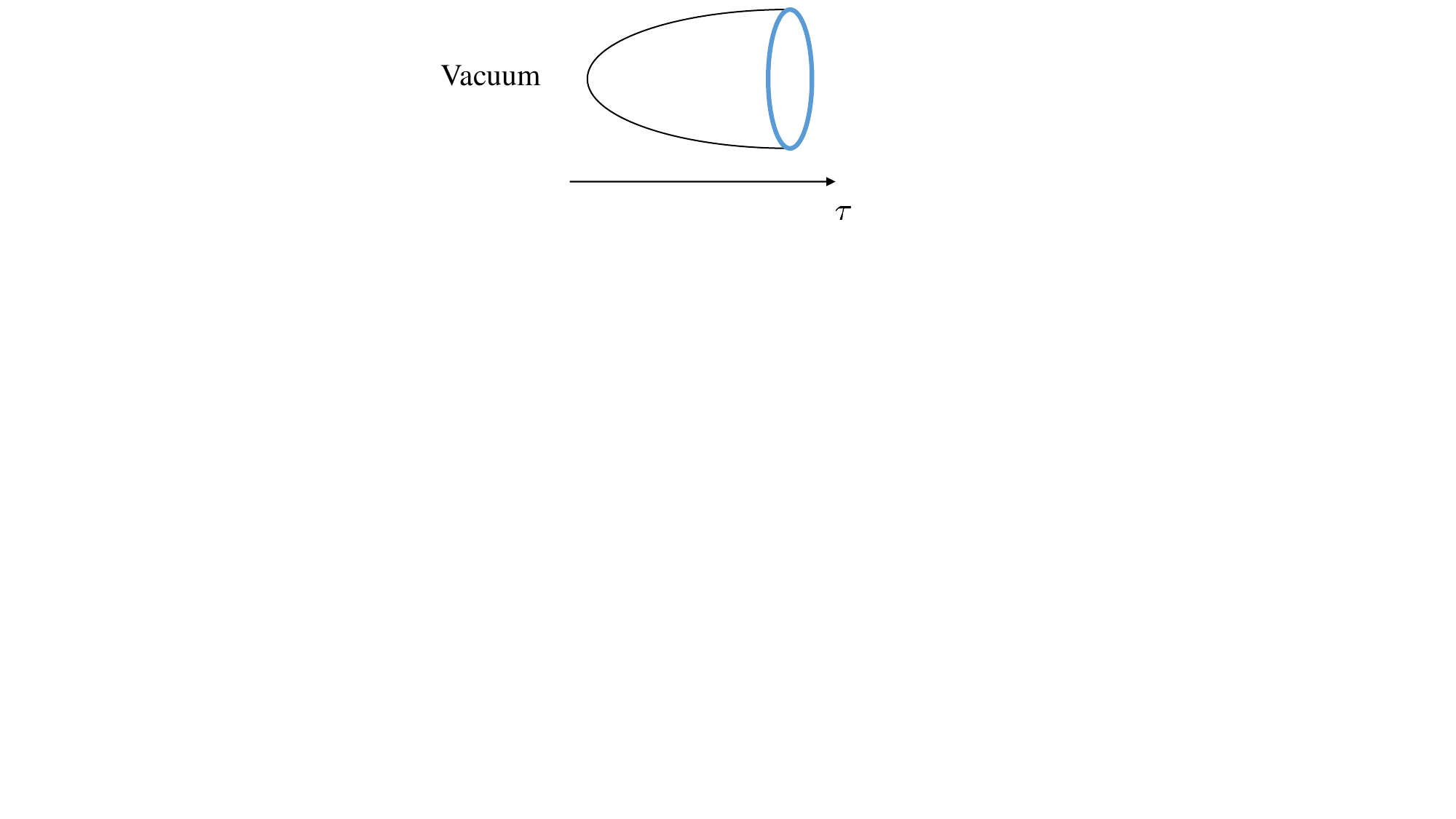}
\end{array}
$$
Put differently, the twisted boundary condition of the pumped SPT$^1$ state is trivial. 

\item
{\it $d^1_{1,-1}$: ``SPT$^1$ phases on 1-cells" $\to$ ``SPT$^1$ phases on 0-cells"}

An SPT$^1$ phase over a 0-cell represents an anomalous $0d$ edge state realized on the $0$-cell. 
Not every Hilbert space belonging to edges of SPT$^1$ phases over $0$-cells contained in $E^1_{0,-1}$ is truly anomalous over the whole space $X$, since some of which become anomaly free. 
If a $1$-cell admits nontrivial SPT$^1$ state, there exists the nonanomalous process to produce left and right anomalous edge states $\gamma_L$ and $\gamma_R$, and they can trivialize the anomalous edges states at the adjacent $0$-cell: 
\begin{align}
\begin{array}{c}
\includegraphics[width=0.8\linewidth, trim=0cm 15cm 0cm 0cm]{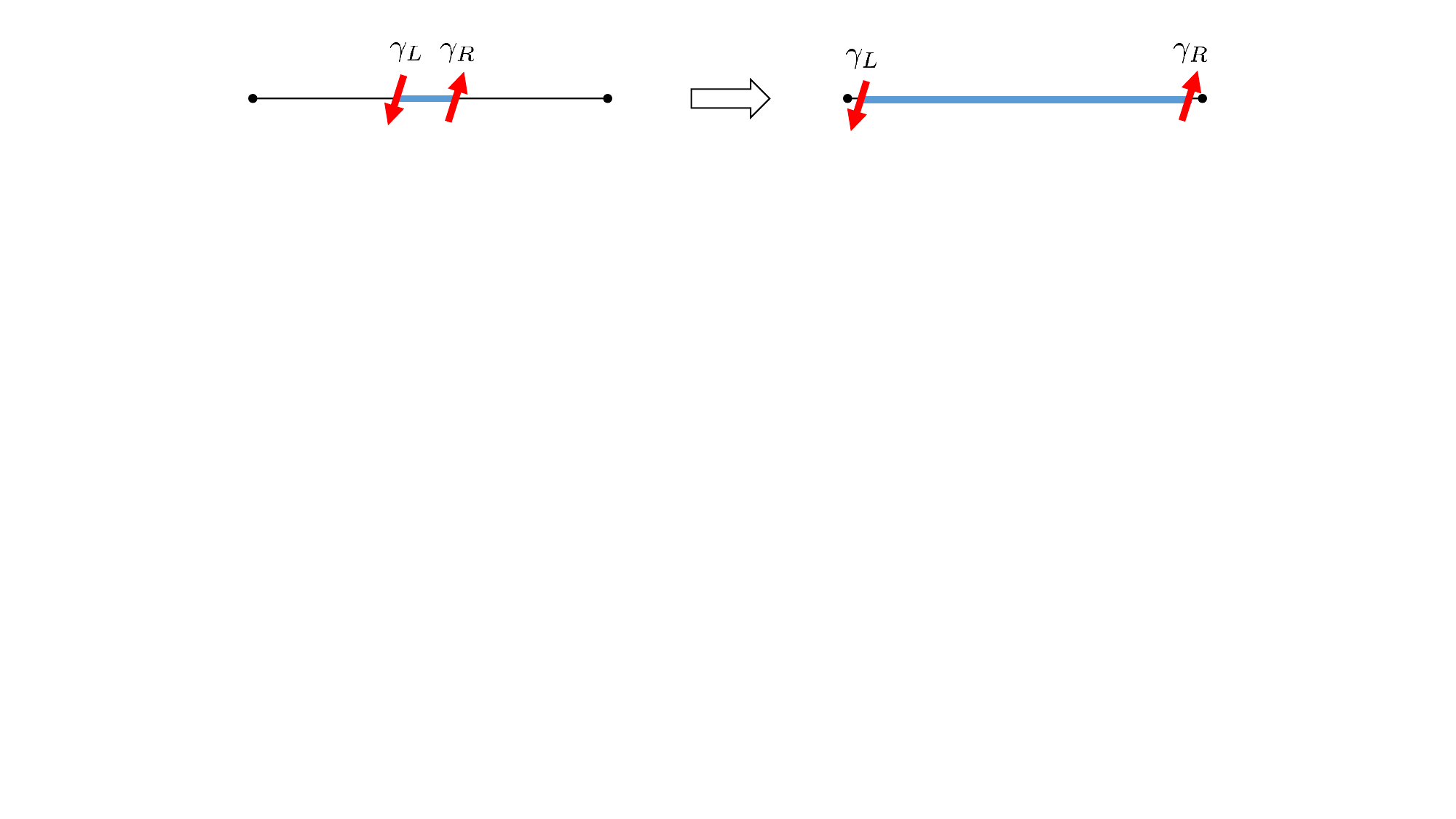}
\end{array}
\label{fig:d111}
\end{align}
This defines the trivialization $E^1_{0,-1} / \im (d^1_{1,-1})$. 
It should be noted that this process is anomaly free, i.e., it does not change the anomaly of the Hilbert space.

Moreover, the homomorphism $d^1_{1,-1}$ is also considered the consistency condition to put SPT$^1$ phases on $1$-cells. 
An anomaly-free combination of SPT$^1$ phases over $1$-cells should be in the subgroup $\ker (d^1_{1,-1}) \subset E^1_{1,-1}$. 
Otherwise, an anomalous edge state remains somewhere, which means the resulting Hilbert space can not carry a unique gapped ground state.

\item
{\it $d^1_{2,-2}$: ``SPT$^2$ phases on 2-cells" $\to$ ``SPT$^2$ phases on 1-cells"}

The abelian group $E^1_{1,-2}$ represents anomalous $1d$ edge states on $1$-cells. 
The homomorphism $d^1_{2,-2}$ indicates how these anomalous edge states are canceled out with the edge anomaly of SPT$^2$ phases adiabatically created in 2-cells: 
$$
\begin{array}{c}
\includegraphics[width=0.8\linewidth, trim=0cm 14cm 0cm 0cm]{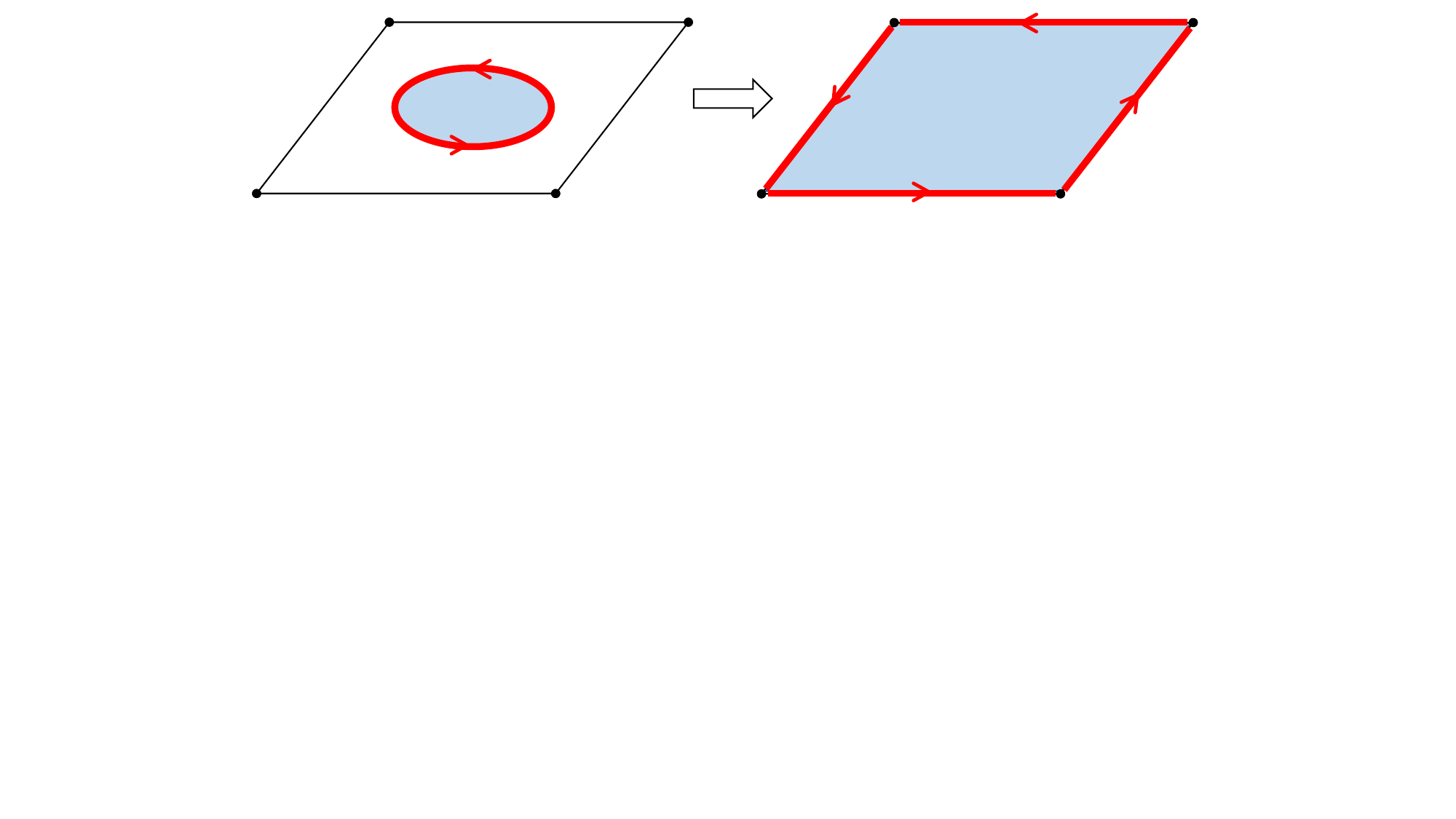}
\end{array}
$$
This yields the equivalence relation $E^1_{1,-2}/\im (d^1_{2,-2})$. 
The point is that this process is anomaly free: 
The resulting state after trivialization acquires the tensor product of a nontrivial SPT phase, but it does not matter whether the Hilbert space in question is anomalous or not.  

The homomorphism $d^1_{2,-2}$ imposes the consistency condition on the set $E^1_{2,-2}$ of SPT$^2$ phases locally defined on 2-cells.
An element in $E^1_{2,-2}$ must be trivial under the homomorphism $d^1_{2,-2}$, otherwise there remains an anomalous edge state somewhere. 
This gives the consistency condition $\ker (d^1_{2,-2}) \subset E^1_{2,-2}$. 

It is important that an SPT$^2$ phase over a disc inside a $2$-cell created by $d^1_{2,-2}$ should have no flux inside the disc. 
Otherwise, an SPT$^2$ phase with a flux may have an anomalous edge state localized at the flux, which breaks the anomaly-free condition of $d^1_{2,-2}$. 

\item
{\it $d^1_{1,-2}$: ``SPT$^2$ phases on 1-cells" $\to$ ``SPT$^2$ phases on 0-cells"}

The abelian group $E^1_{0,-2} $ represents SPT$^2$ phases at 0-cells and can be understood as ``sources and sinks" of anomalous $1d$ edge states (the anomaly of anomalies, says). 
The homomorphism $d^1_{1,-2}$ can be viewed as how the source of anomalous $1d$ edge states at $0$-cells is trivialized by anomalous edge states on adjacent $1$-cells, which is expressed as $E^1_{0,-1} / \im (d^1_{1,-2})$. 
Also, the homomorphism $d^1_{1,-2}$ represents the consistency condition for anomalous $1d$ edge states on $1$-cells to extend those states to adjacent $0$-cells, which is expressed as $\ker (d^1_{1,-2}) \subset E^1_{1,-1}$. 
See the following figure.
$$
\begin{array}{c}
\includegraphics[width=0.8\linewidth, trim=0cm 13cm 0cm 0cm]{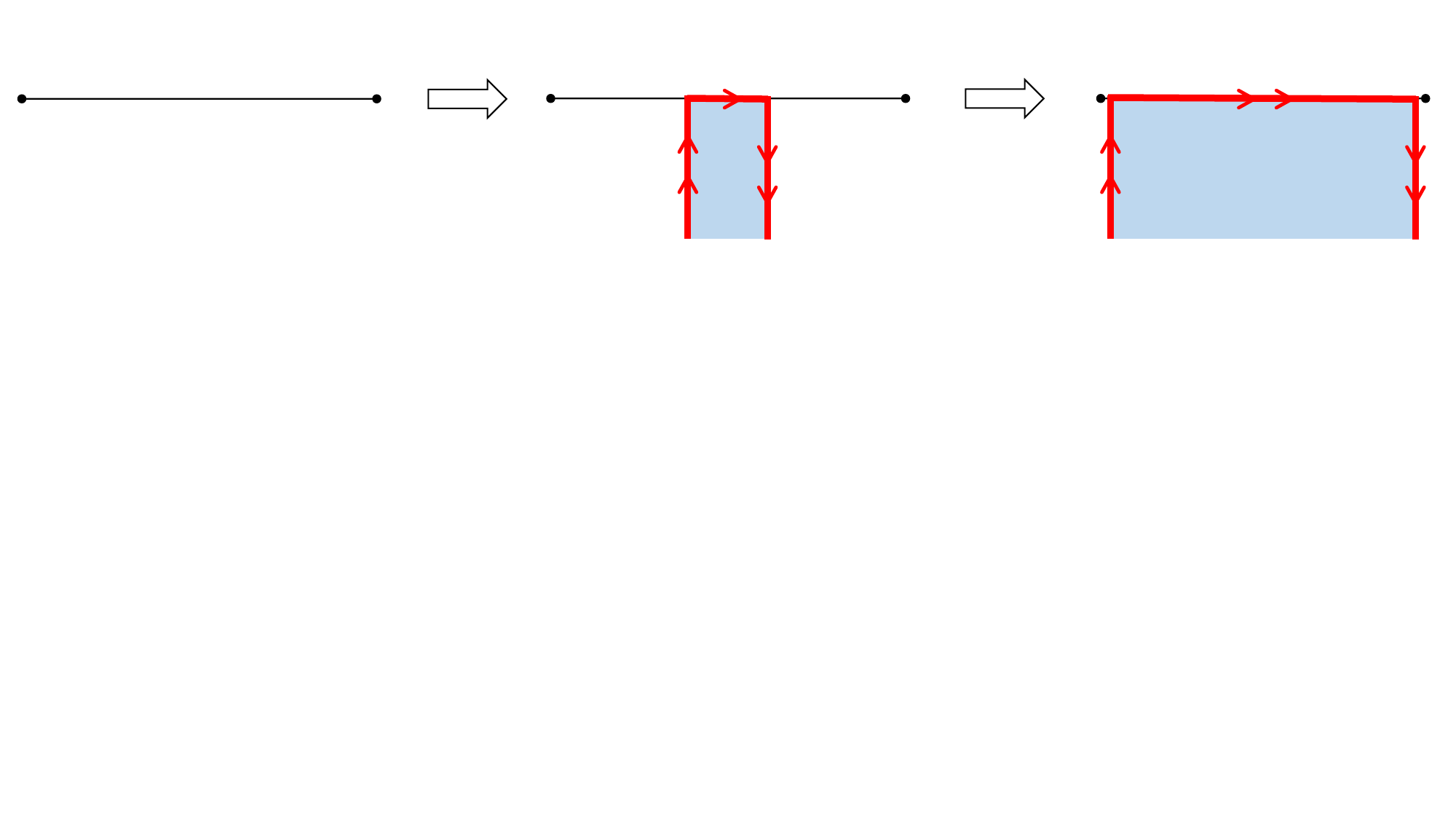}
\end{array}
$$

\end{itemize}

Similarly, we have the physical pictures for all the first differentials $d^1_{p,-q}$. 

The first differential obeys that the boundary of a boundary is trivial 
\begin{align}
d^1_{p-1,-q} \circ d^1_{p,-q} = 0. 
\end{align}
This holds true: an SPT$^{q}$ phase on a $(p-1)$-cell made from an adjacent $p$-cell via the adiabatic cycle is anomaly free. 
Taking the homology of $d^1$ we have the $E^2$-page 
\begin{align}
E^2_{p,-q}:= \ker (d^1_{p,-q}) / \im (d^1_{p-1,-q}). 
\end{align}
The term $E^2_{p,-q}$ has the following physical meaning: 
SPT$^q$ phases in $p$-cells which cannot be trivialized by adjacent $(p+1)$-cells adiabatically and can extend to adjacent $(p-1)$-cells without showing an anomalous state.


\subsection{Higher differentials}
This is not the end of the story. 
In general, there exist $r$-th order trivialization and obstruction, the trivialization of SPT$^{q}$ phases in $p$-cells by $(p-r)$-cells and the obstruction to extend SPT$^{q}$ phases in $p$-cells to $(p+r)$-cells.
In the AHSS, this is expressed by the $r$-th differential 
\begin{align}
d^r_{p,-q}: E^r_{p,-q} \to E^r_{p-r,-q+r-1}, 
\end{align}
i.e.,\ 
$$
d^r_{p,-q}: \mbox{``SPT$^{q}$ phases on $p$-cells" $\to$ ``SPT$^{q-r+1}$ phases on $(p-r)$-cells"}
$$
The $r$-th differential also obeys that 
\begin{align}
d^r_{p-r,-q+r-1} \circ d^r_{p,-q} = 0, 
\end{align}
and its homology defines the $E^{r+1}$-page 
\begin{align}
E^{r+1}_{p,-q}:= \ker (d^r_{p,-q}) / \im (d^r_{p+r,-q-r+1}).
\end{align}
If $X$ is $d$-dimensional, the $E^r$-page converges at $E^{d+1}$-page
\begin{align}
E^1 \Rightarrow 
E^2 \Rightarrow 
\cdots 
\Rightarrow 
E^{d+1} = E^{d+2} = \cdots =: E^{\infty}. 
\end{align}
The converged page is called the limiting page and is denoted by $E^{\infty}$. 
The physical meaning of $E^{\infty}_{p,-q}$ is SPT$^q$ phases in $p$-cells which cannot be trivialized by any adjacent high-dimensional cells adiabatically and can extend to any adjacent low-dimensional cells without anomaly. 

In the context of SPT phases, the higher differentials of the AHSS may be computed in a physical picture. 
We describe the physical definitions of some higher differentials in order. 
\begin{itemize}
\item
{\it $d^2_{2,-1}$: ``SPT$^1$ phases on 2-cells" $\to$ ``SPT$^0$ phases on 0-cells"}

Since $E^2_{2,-1} \subset \ker d^1_{2,-1}$, an adiabatically created SPT$^1$ phase in 2-cells belonging to $E^2_{2,-1}$ can glue together on $1$-cells. 
However, it is nontrivial if its SPT$^1$ phase can collapse at $0$-cells. 
The SPT$^1$ phase of $E^2_{2,-1}$ may have an SPT$^0$ charge around $0$-cells, it behaves as the obstruction to collapse the SPT$^1$ state: 
$$
\begin{array}{c}
\includegraphics[width=0.8\linewidth, trim=0cm 12cm 0cm 0cm]{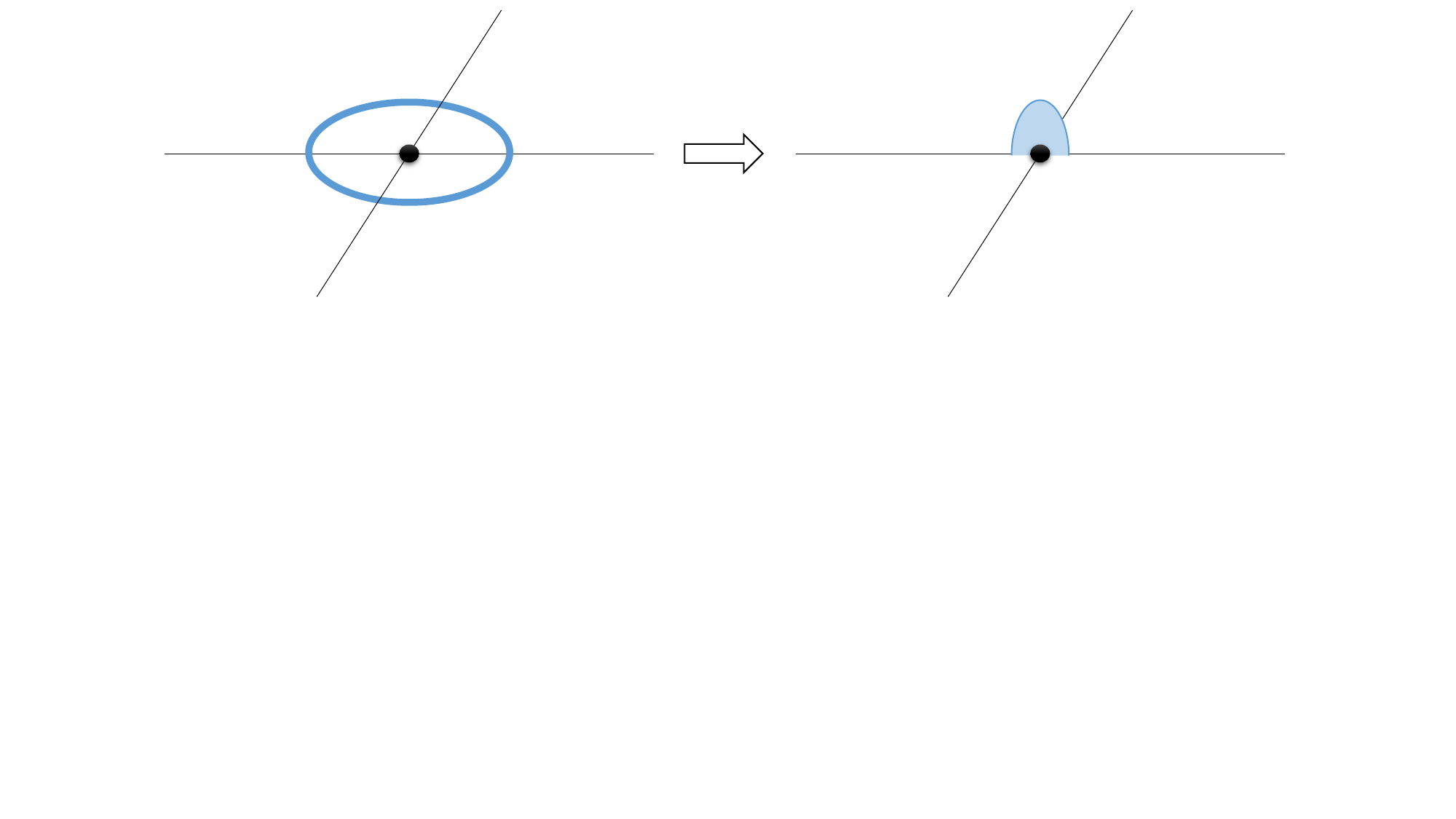}
\end{array}
$$
This defines the homomorphism $d^2_{2,-1}: E^2_{2,-1} \to E^2_{0,0}$. 

In fact, some sort of rotation symmetry around a 0-cell enforces nontrivial symmetry flux at the 0-cell. 
In the presence of such flux, an SPT$^1$ phase may possess an SPT$^0$ phase localized at the flux. 
This phenomenon is well-known in bosonic SPT phases and formulated by the slant product of the group cohomology~\cite{DW90,TantivasadakarnPRB17}. 

See Sec.~\ref{sec:2d_fermion_inversion_i^2=1} for a nontrivial example of $d^2_{2,-1}$. 

\item
{\it $d^2_{2,-2}$: ``SPT$^2$ phases on 2-cells" $\to$ ``SPT$^1$ phases on 0-cells"}

Similarly, the homomorphism $d^2_{2,-2}: E^2_{2,-2} \to E^2_{0,-1}$ measures the obstruction for the collapse of the SPT$^2$ phase in $2$-cells at $0$-cells: 
\begin{align}
\begin{array}{c}
\includegraphics[width=0.8\linewidth, trim=0cm 12cm 0cm 0cm]{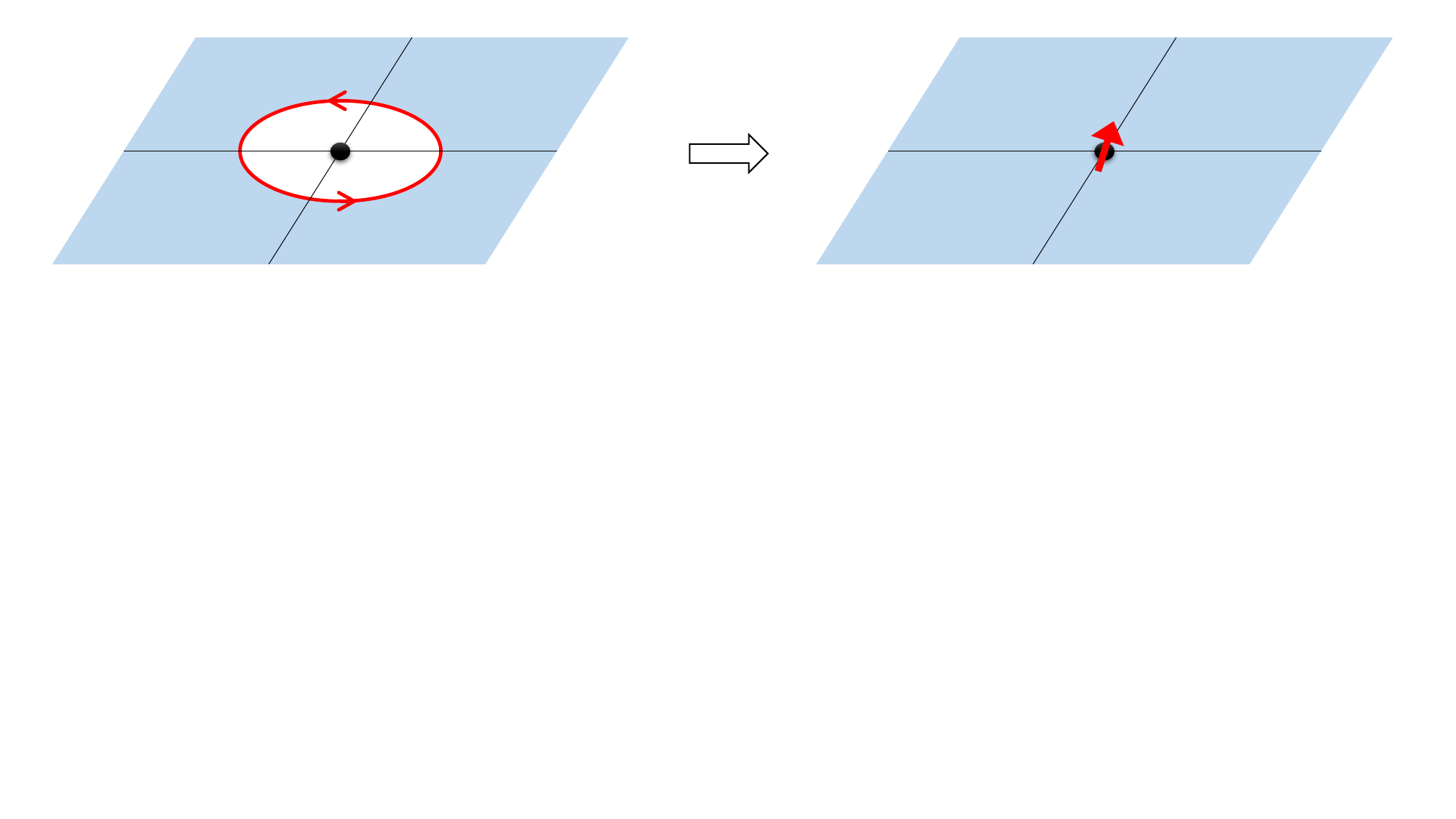}
\end{array}
\label{fig:d222}
\end{align}
There may exist an anomalous SPT$^1$ edge state localized at the hole of the SPT$^2$ phase. 
It behaves as the obstruction to shrink the SPT$^2$ phase at the 0-cell, which defines the homomorphism $d^2_{2,-2}$. 

See Secs.~\ref{sec:2d_fermion_inversion_i^2=1}, \ref{sec:2d boson with Z_4 two-fold rotation symmetry} and \ref{sec:Fermion with magnetic translation symmetry} for nontrivial examples of $d^2_{2,-1}$. 

\item
{\it $d^2_{3,-2}$: ``SPT$^2$ phases on 3-cells" $\to$ ``SPT$^1$ phases on 1-cells"}

Suppose we have an SPT$^2$ phase on a cylinder $S^1 \times \R$ wrapping a $1$-cell. 
In the presence of a rotation symmetry enforcing a symmetry flux along this 1-cell, the dimensional reduction of the SPT$^2$ phase to the 1-cell may lead to a nontrivial SPT$^1$ phase along the 1-cell: 
$$
\begin{array}{c}
\includegraphics[width=0.8\linewidth, trim=0cm 13cm 0cm 0cm]{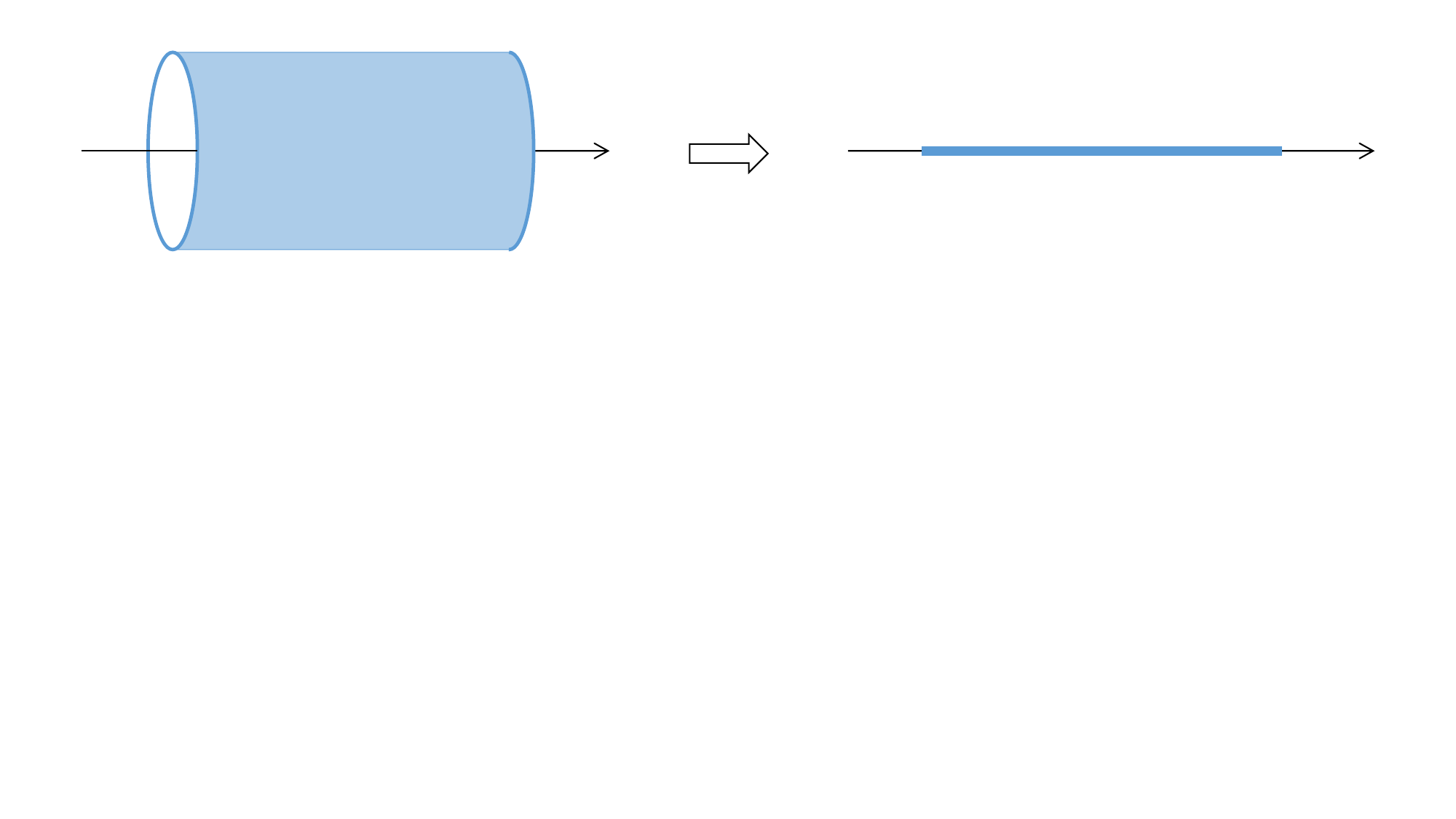}
\end{array}
$$
This defines the homomorphism $d^2_{3,-2}: E^2_{3,-2} \to E^2_{1,-1}$.

\item
{\it $d^2_{3,-3}$: ``SPT$^3$ phases on 3-cells" $\to$ ``SPT$^2$ phases on 1-cells"}

Similarly, the homomorphism $d^2_{3,-3}: E^2_{3,-3} \to E^2_{1,-2}$ measures the obstruction to collapsing SPT$^3$ states in $3$-cells on $1$-cells. 
In the presence of the flux enforced by the symmetry, an SPT$^3$ phase may have anomalous SPT$^1$ edge state localized at the $1$-cell, which defines the homomorphism $d^2_{3,-3}$: 
$$
\begin{array}{c}
\includegraphics[width=0.8\linewidth, trim=0cm 12cm 0cm 0cm]{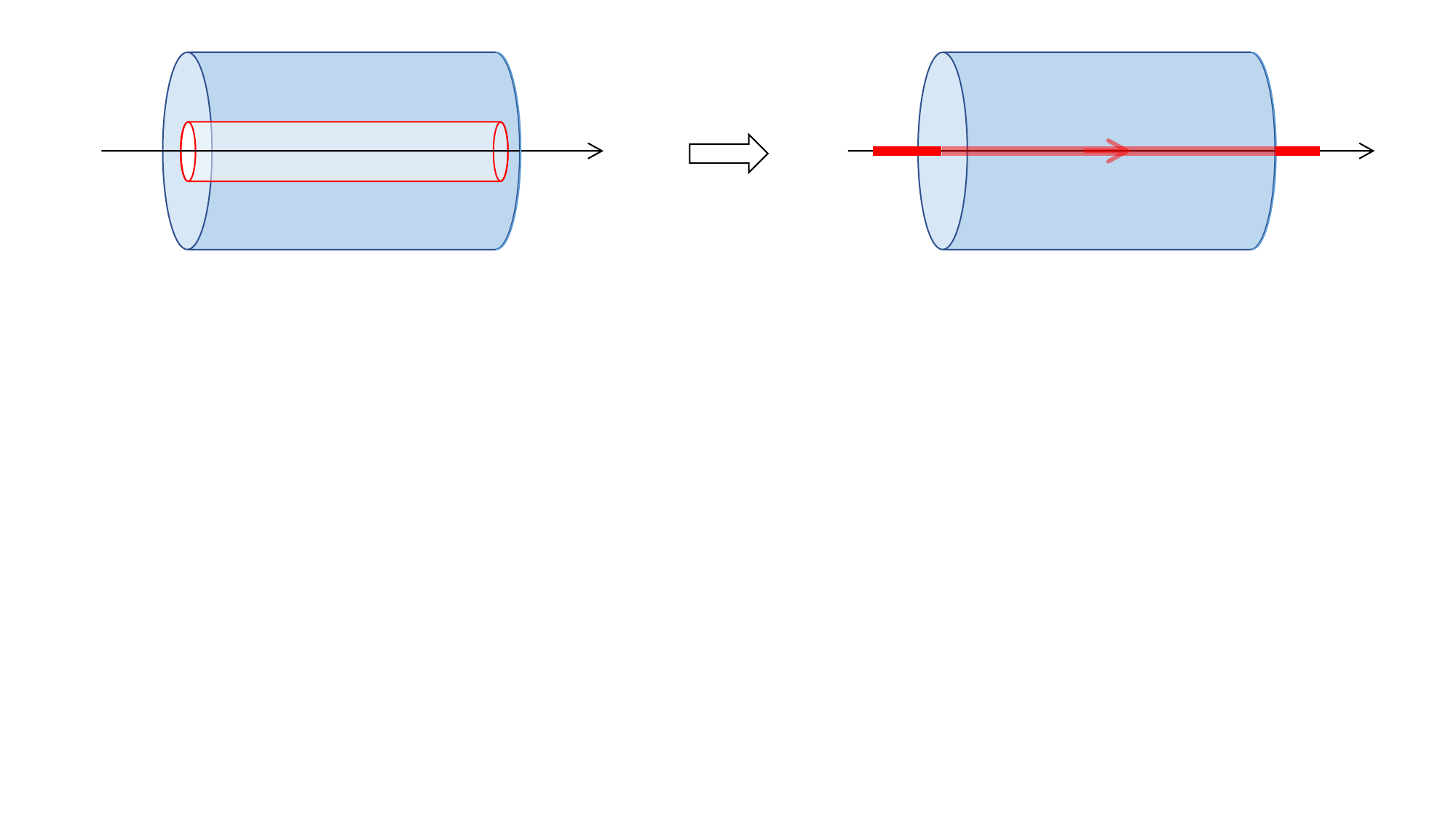}
\end{array}
$$

\item
{\it $d^3_{3,-2}$: ``SPT$^2$ phases on 3-cells" $\to$ ``SPT$^0$ phases on 0-cells"}

The definition of the homomorphism $d^3_{3,-2}:E^3_{3,-2} \to E^3_{0,0}$ is analogous to $d^2_{2,-1}$. 
$E^3_{3,-2} \subset \ker d^2_{3,-2} \subset \ker d^1_{3,-2}$ implies that an SPT$^2$ phase created adiabatically in a $3$-cell glues together except for adjacent $0$-cells. 
The homomorphism $d^3_{3,-2}$ measures the obstruction to collapsing the SPT$^2$ phase at the 0-cells: 
$$
\begin{array}{c}
\includegraphics[width=0.8\linewidth, trim=0cm 11cm 0cm 0cm]{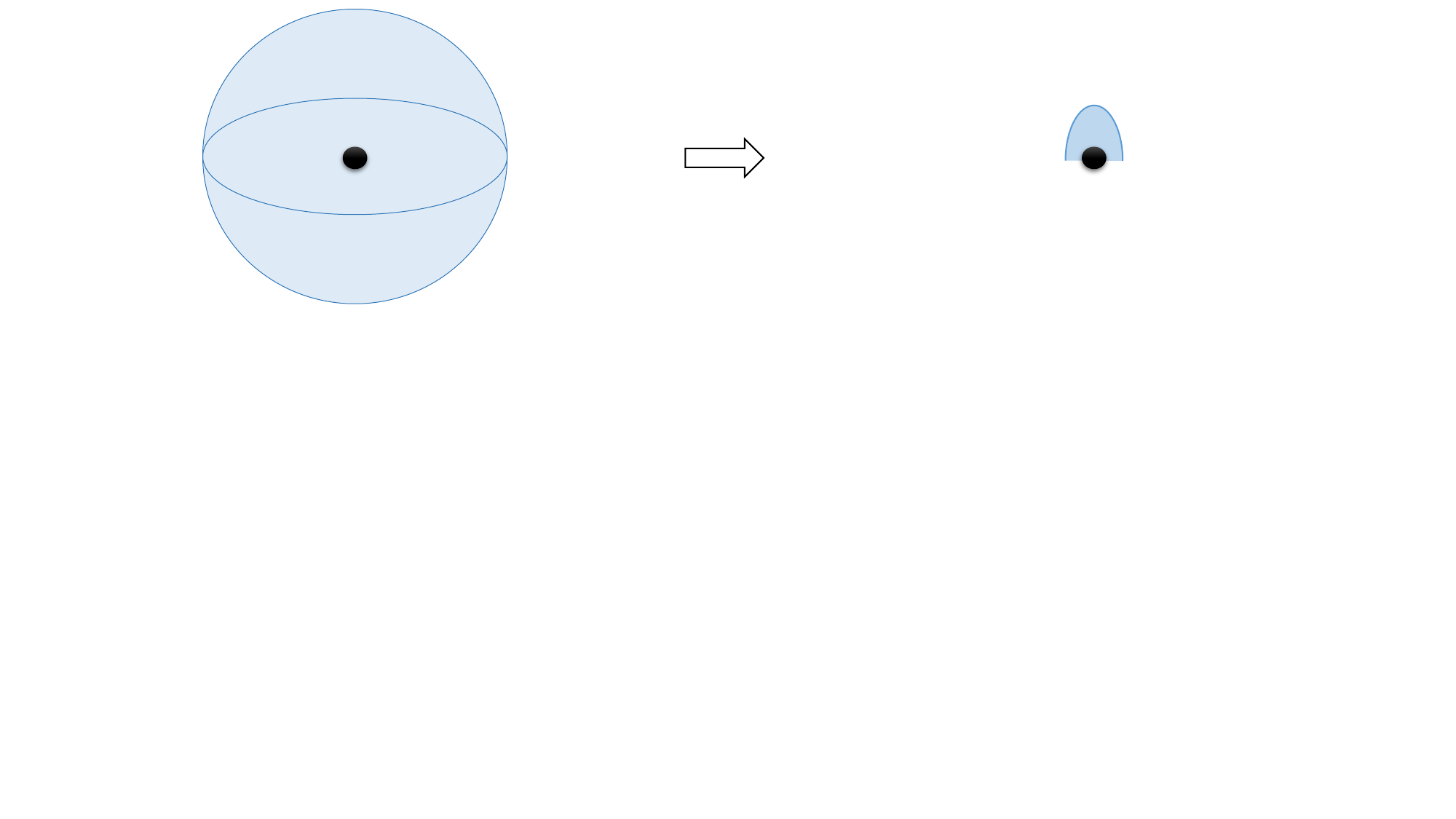}
\end{array}
$$
In general, a sort of inversion symmetry around the 0-cell may enforce a monopole flux inside the 2-sphere enclosing the 0-cell. 
In the presence of such a monopole flux, the SPT$^2$ state defined on the 2-sphere may have a nontrivial SPT$^0$ charge, which defines the homomorphism $d^3_{3,-2}$. 
See Sec.~\ref{sec:3d complex AZ classes with time-reversal inversion symmetry} for an example of nontrivial $d^3_{3,-2}$.

\item
{\it $d^3_{3,-3}$: ``SPT$^3$ phases on 3-cells" $\to$ ``SPT$^1$ phases on 0-cells"}

Similarly, the homomorphism $d^3_{3,-3}:E^3_{3,-3} \to E^3_{0,-1}$ is defined as the obstruction to collapsing an SPT$^3$ phase in a $3$-cell at $0$-cells: 
$$
\begin{array}{c}
\includegraphics[width=0.8\linewidth, trim=0cm 12cm 0cm 0cm]{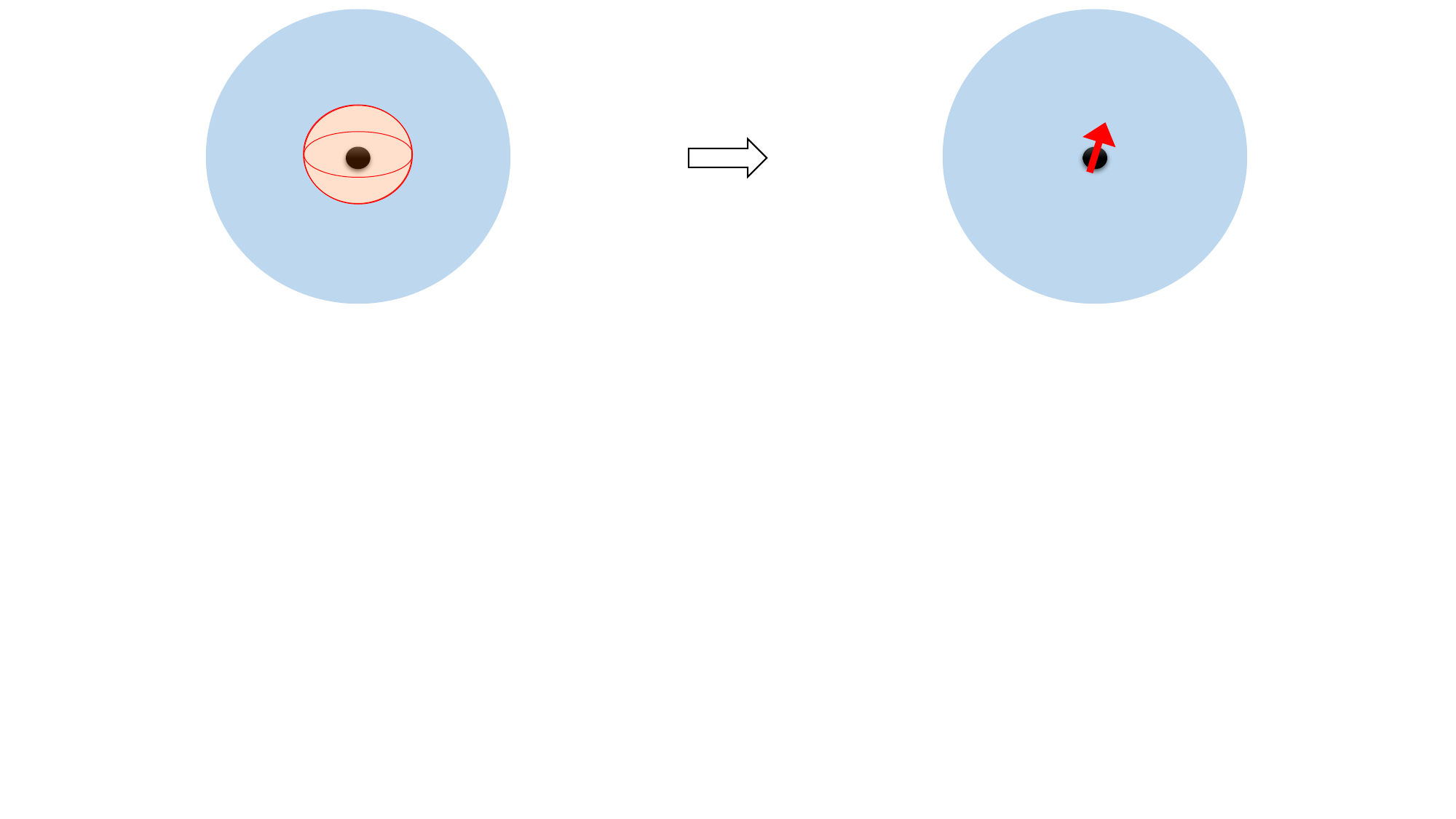}
\end{array}
$$

\end{itemize}

\subsection{Limiting page and filtration} 
The limiting page $E^{\infty}_{p,-q}$ approximates the generalized homology $h^G_{p-q}(X,Y)$, the classification of the degree $n=(p-q)$ SPT phenomena over the real space $X$ with the symmetry $G$ up to the degree $(n-1) = (p-q-1)$ SPT phenomena over the subspace $Y$. 
Let $h^G_n(X_p, X_p \cap Y) (\cong h^G_n(X_p \cup Y,Y))$ be the generalized homology over the pair of $p$-skeleta $(X_p,X_p \cap Y)$. 
$h^G_n(X_p, X_p \cap Y)$ represents the classification of degree $n$ SPT phenomena over the $p$-skeleton $X_p$ up to anomaly over $Y$, and it is understood as the collection of $k$-dimensional layer constructions where $k$ is equal to or lower than $p$. 
With this, we introduce the filtration of $h_n^G(X,Y)$ as in 
\begin{align}
F_p h_n := \im [h^G_n(X_p,X_p \cap Y) \to h^G_n(X,Y)], 
\end{align}
\begin{align}
0 \subset F_0 h_n \subset \cdots \subset F_d h_n = h_d^G(X,Y). 
\label{eq:ahss_filtration}
\end{align}
Here, the quotient $F_p h_n/F_{p-1} h_n$ is isomorphic to the limiting page $E^{\infty}_{p,n-p}$, 
\begin{align}
F_p h_n / F_{p-1} h_n \cong E^{\infty}_{p,n-p}.
\label{eq:e_infty_filtration}
\end{align}
The reason is as follows. 
$E^{\infty}_{p,n-p}$ is the set of SPT$^{p-n}$ phases on $p$-cells which are not trivialized by high-dimensional adjacent cells and consistent with low-dimensional adjacent cells. 
On the one hand, $F_p h_n$ ($F_{p-1} h_n$) is the layer construction of the degree $n$ SPT phenomena on the $p$-skeleton ($(p-1)$-skeleton). 
Therefore, an element of the quotient $F_p h_n/F_{p-1} h_n$ represents a degree $n$ SPT phenomena which is realized as a pure $p$-dimensional layer construction, which is equivalent to $E^{\infty}_{p,n-p}$. 
We should note that the structure of the filtration (\ref{eq:ahss_filtration}) of SPT phases in the presence of a crystalline symmetry and its relation to higher-order SPT phases (see Sec.~\ref{sec:Higher-order SPT phenomena}) were pointed out in Refs. \cite{HSHH17,TB18}. 
It is useful to rewrite the relations (\ref{eq:ahss_filtration}) and (\ref{eq:e_infty_filtration}) in the manner of short exact sequences:
\begin{align}
\begin{CD}
0 @>>> F_{d-1} h_n @>>> h^G_n(X,Y) @>>> E^{\infty}_{d.n-d} @>>> 0, \\
0 @>>> F_{d-2} h_n @>>> F_{d-1} h_n @>>> E^{\infty}_{d-1.n-d+1} @>>> 0, \\
\cdots \\
0 @>>> E^{\infty}_{0,n} @>>> F_{1} h_n @>>> E^{\infty}_{1,n-1} @>>> 0.
\end{CD}
\label{eq:ahss_extension}
\end{align}

\subsection{Group extension}
\label{sec:Group extension}

\begin{figure}[!]
\includegraphics[width=\linewidth, trim=0cm 12cm 0cm 0cm]{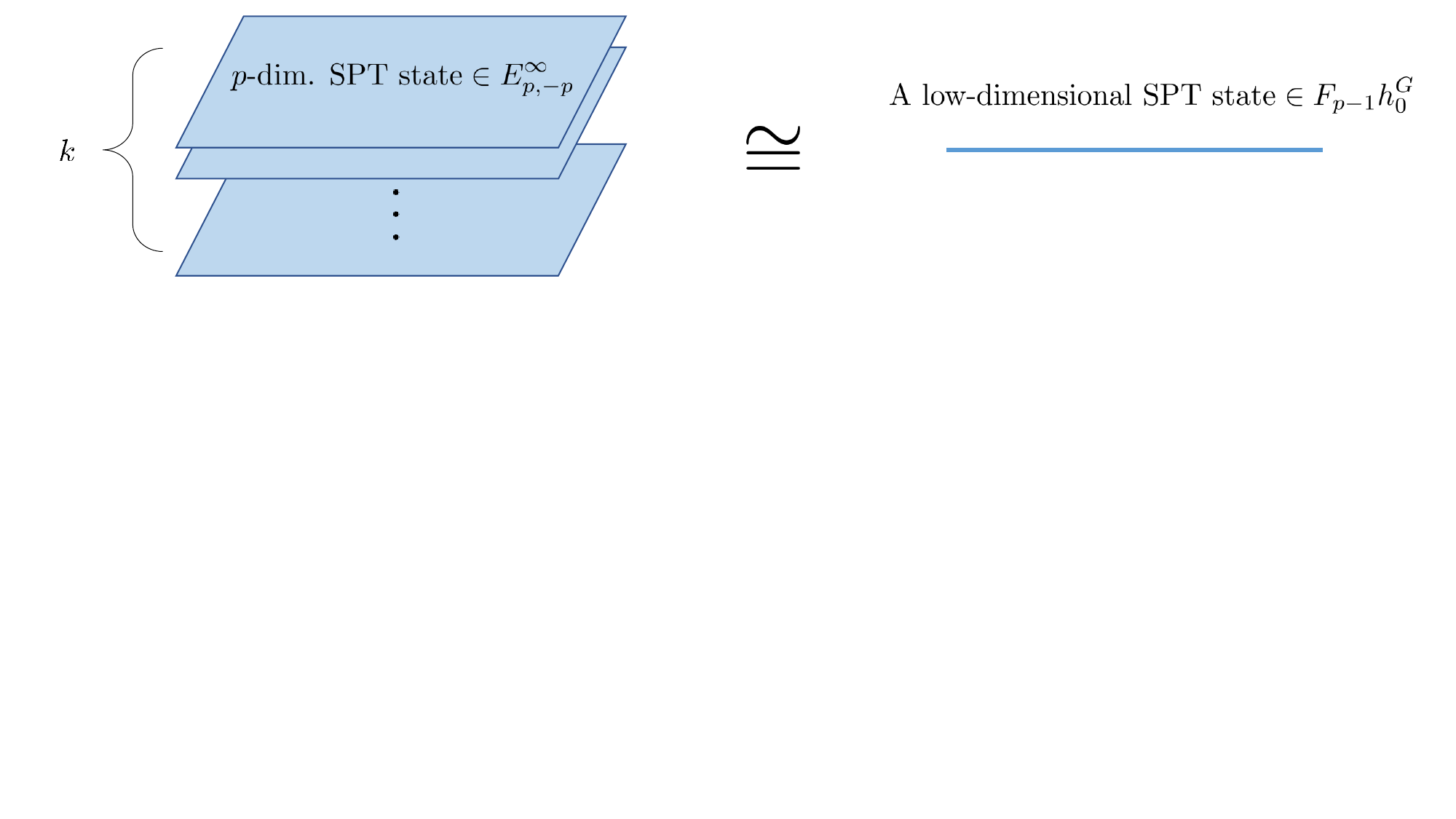}
\caption{
The physical meaning of the group extension (\ref{eq:extension_f}). 
}
\label{fig:extension}
\end{figure}

The group extension of $E^{\infty}_{p,n-p}$ by $F_{p-1} h_n$ in the short exact sequence 
\begin{align}
0 \to F_{p-1} h_n \to F_p h_n \to E^{\infty}_{p,n-p} \to 0
\label{eq:extension_f}
\end{align}
states that a degree $n$ SPT phenomenon inside $p$-cells classified by $E^{\infty}_{p,n-p}$ may be be nontrivially extended by an degree $n$ SPT phenomenon on the $(p-1)$-skeleton $X_{p-1}$. 

To clarify the point, we explain the group extension for SPT phases, i.e., $n=0$. 
(The following discussion does apply to general degrees $n \in \Z$.)
The abelian group $E^{\infty}_{p,-p}$ represents nontrivial SPT phases inside $p$-cells which can not be trivialized by higher-dimensional cells and can extend to lower-dimensional cells. 
Let us consider an SPT phase ${\bf x} \in E^{\infty}_{p,-p}$ with $k$-th order, i.e.\ $k {\bf x} = 0$ as an element of $E^{\infty}_{p,-p}$. 
The triviality of $k {\bf x} \in E^{\infty}_{p,-p}$ implies that the SPT phase $k {\bf x}$ is trivial within in $p$-cells. 
However, the SPT phase $k {\bf x}$ may remain nontrivial in lower-dimensional cells, i.e.\ the $(p-1)$-skeleton $X_{p-1}$ in which SPT phases are classified by $F_{p-1} h_0$. 
See Fig.~\ref{fig:extension} for a schematic picture. 
If this is the case, the SPT phase $k {\bf x}$ should be equivalent to a nonzero element ${\bf y} \in F_{p-1} h_{0}$, and the group extension of $E^{\infty}_{p,-p}$ by $F_{p-1} h_{0}$ becomes nontrivial. 

See Sec.~\ref{sec:1d fermions} for a simple example where the group extension is nontrivial.


\subsection{Physical properties of AHSS}
\label{sec:phys_ahss}
In this subsection, we collect relationships between the mathematical structure of the AHSS and the physics of SPT phases.

\subsubsection{Any crystalline SPT phase is a patchwork of local SPT phases}

The generalized homology formulation of crystalline SPT phases is based on the picture by Thorngren and Else that in the topological limit, the spatial scale of crystalline symmetry $a$ can be regarded to be large enough compared to the correlation length of bulk $\xi$~\cite{TE18}. 
Once we accept the Thorngren and Else prescription, we should conclude that 
\begin{itemize}
\item
Every crystalline SPT phase is made of SPT phases on high-symmetry points (0-cells), open line segments (1-cells), open polygons (2-cells), and open polyhedrons (3-cells). 
Shortly, Every crystalline phase is a ``patchwork" of local SPT phases. 
\end{itemize}
This can be explicitly seen in the filtration (\ref{eq:ahss_filtration}) of the homology $h^G_{n=0}(X,Y)$, where an SPT phase belongs to either of subgroups $F_p h_{n=0}$ that composed of $p$-dimensional cells or cells with dimension lower than $p$.

\subsubsection{Higher-order SPT phenomena}
\label{sec:Higher-order SPT phenomena}

Let $X = \R^d$ be the $d$-dimensional infinite space and $G$ be a point group symmetry acting on $X$. 
The $p$-th order SPT phases protected by a point group symmetry are SPT phases having an anomaly localized on a $(d-p)$-dimensional subspace in the boundary. 
We especially define the $d$-th order SPT phases as SPT phases without an anomalous boundary state. 
Since the existence of $(d-p)$-dimensional anomalous boundary state originated from a $(d-p+1)$-dimensional SPT phase protected by onsite symmetry defined on a $(d-p)$-dimensional layer, we can conclude that the limiting page $E^{\infty}_{p,-p}$ is nothing but the classification of $(d-p+1)$-th order SPT phases. 
More generally, we conclude that 
\begin{itemize}
\item
The classification of $p$-th order degree $n$ SPT phenomena over a real space $X$ is given by $E^{\infty}_{d-p+1,n-d+p-1}$. 
\end{itemize}

We should note that the classification of $p$-th order degree $n$ SPT phenomena itself does not imply the classification of the degree $n$ SPT phenomena, since as discussed in Sec.~\ref{sec:Group extension}, the group extension (\ref{eq:extension_f}) can be nontrivial.

\subsubsection{LSM-type theorems as the boundary of an SPT phase}
\label{sec:lsm_boundary}

Let $X$ be a $d$-dimensional space and $G$ be a point or space group acting on $X$. 
Let us focus on the $d$-th order anomalies, i.e., anomalous edge degrees of freedom classified by the abelian group $F_0 h_{-1} \cong E^{\infty}_{0,-1}$. 
In a Hilbert space ${\cal H}$ belonging to $E^{\infty}_{0,-1}$ there remains an anomaly to not have a unique gapped ground state since, from the quotient by $\im d^r_{r,-r}$ to make the limiting page, the anomaly of ${\cal H}$ can not be trivialized by higher-dimensional cells. 
This phenomenon is known as the LSM theorem as the boundary of an SPT phase~\cite{CGW11,FO17,HHG16,YHR17,PWJZ17,MT17,Che18,KSKR18}. 
We conclude that 
\begin{itemize}
\item
The classification of the LSM theorem as the boundary of an SPT phase is given by $E^{\infty}_{0,-1}$.
\end{itemize}

We would like to emphasize that the $E^2$-page is insufficient to conclude that a given system is anomalous. 
To see this, let us begin with the $E^1$-page for $3$-space dimensions 
\begin{align}
\begin{array}{c|ccccccc}
q=1 & {\rm SPT}^{1} & {\rm SPT}^{1} & &  \\
q=2 &  & {\rm SPT}^{2} & {\rm SPT}^{2} & {\rm SPT}^{2} \\
q=3 & & & {\rm SPT}^{3} & {\rm SPT}^{3} \\
\hline 
E^1_{p,-q} & p=0 & p=1 & p=2 & p=3 \\
\end{array}
\end{align}
The first differential $d^1_{1,-1}: E^{1}_{1,-1} \to E^1_{0,-1}$ represents how anomalous edge states at 0-cells are trivialized by SPT$^1$ phases on 1-cells (see (\ref{fig:d111})). 
Taking the homology of the first differentials, we get the $E^2$-page 
\begin{align}
\begin{array}{c|ccccccc}
q=1 & E^2_{0,-1} &  & &  \\
q=2 &  & & E^2_{2,-2} &  \\
q=3 & & & & E^2_{3,-3} \\
\hline 
E^2_{p,-q} & p=0 & p=1 & p=2 & p=3 \\
\end{array}
\end{align}
In the $E^2$-page, $E^2_{0,-1}$ is the set of anomalous edge states at $0$-cells which cannot be trivialized from SPT$^1$ phases on $1$-cells. 
The second differential $d^2_{2,-2}: E^2_{2,-2} \to E^2_{0,-1}$ represents how anomalous edge states at 0-cells are trivialized by SPT$^2$ phases in $2$-cells. 
This is possible in general. 
If for an SPT$^2$ phase, a symmetry flux enforced by crystalline symmetry traps an anomalous edge state, one may remove the anomalous edge state at $0$-cells via the pair-annihilation. 
Taking the homology of $d^2$, we get the $E^3$-page 
\begin{align}
\begin{array}{c|ccccccc}
q=1 & E^3_{0,-1} &  & &  \\
q=2 &  & & &  \\
q=3 & & & & E^3_{3,-3} \\
\hline 
E^3_{p,-q} & p=0 & p=1 & p=2 & p=3 \\
\end{array}
\end{align}
The third differential $d^3_{3,-3}: E^3_{3,-3} \to E^3_{0,-1}$ can be also nontrivial. 
It represents how anomalous edge states at 0-cells are trivialized by SPT$^3$ phases in $3$-cells. 
Eventually, we get the $E^{\infty} = E^4$-page 
\begin{align}
\begin{array}{c|ccccccc}
q=1 & E^4_{0,-1} &  & &  \\
q=2 &  & & &  \\
q=3 & & & & \\
\hline 
E^4_{p,-q} & p=0 & p=1 & p=2 & p=3 \\
\end{array}
\end{align}
For 3-space dimensions, $E^{\infty}_{0,-1} = E^4_{0,-1}$ gives the classification of the LSM theorem as the boundary of an SPT phase.

\subsubsection{LSM-type theorem to enforce a nontrivial SPT phase}
\label{sec:lsm_spt}

Let us focus on the $r$-th differential 
\begin{align}
d^{r}_{r,-r}: E^r_{r,-r} \to E^r_{0,-1} 
\end{align}
discussed in Sec.~\ref{sec:lsm_boundary}. 
The physical meaning of a Hilbert space ${\cal H}$ belonging to $\im d^r_{r,-r} \subset E^r_{0,-1}$ is that the edge anomalies localized at 0-cells of ${\cal H}$ can be removed by an SPT$^r$ phase in $r$-cells. 
See (\ref{fig:d222}) for $r=2$. 
Put differently, if we have a unique gapped ground state $\ket{\chi}$ in the Hilbert space ${\cal H}$, the state $\ket{\chi}$ should be a nontrivial SPT$^r$ phase composed of a $r$-dimensional layer construction. 
This is the mechanism to have the LSM theorem enforcing an SPT phase discussed in \cite{YJVR17,Lu17}. 
In sum, 
\begin{itemize}
\item
The classification of the LSM theorem enforcing a nontrivial $r$-dimensional SPT phase for a Hilbert space is given by $\im d^r_{r,-r}$. 
\end{itemize}

\subsubsection{LSM-type theorem with $U(1)$ symmetry}
\label{sec:lsm_u1}

In the presence of the $U(1)$ particle number conservation and a space group symmetry $G$, the $U(1)$ charge per unit cell, called the filling number $\nu$, is well-defined. 
We here also assume the absence of particle-hole symmetry flipping the $U(1)$ charge. 
In such systems, the LSM theorem~\cite{LSM} and its generalizations~\cite{Oshi00,Has05,WPVZ15} give a constraint on filling number $\nu$ to have a gapped unique ground state. 
We also have the LSM theorem to enforce a nontrivial SPT phase which is not an atomic insulator by a filling number~\cite{PoFillingConstraint,Lu17}. 

Let us formulate the LSM-type theorem in the presence of the $U(1)$ particle number conservation in the viewpoint of generalized homology. 
Atomic insulators are classified by the term $E^{\infty}_{0,0}$ which is generated by Wyckoff orbitals. 
The possible filling numbers $\nu \in n_{\rm AI} \Z$ of atomic insulators are determined by $E^{\infty}_{0,0}$. 
For generic insulators classified by $h^G_0(X)$, a filling number $\nu$ can be a fractional $\nu \in (n_{\rm AI}/p) \Z$ when the filling number $n_{\rm AI}\Z \in E^{\infty}_{0,0}$ leads to a nontrivial group extension (\ref{eq:ahss_extension}) of SPT phases in higher-dimensional cells. 
We have two consequences 
\begin{itemize}
\item
If the filling number $\nu$ is not in $(n_{\rm AI}/p) \Z$, then the system has no unique gapped symmetric ground state.
\item 
For a unique gapped symmetric ground state, If the filling number $\nu$ is a fractional number $\frac{q}{p} n_{\rm AI} \Z$ with $q\in \{1,\dots p-1\}$, the ground state is an SPT phase defined on a $d$-dimensional layer with $d \geq 1$. 
\end{itemize}

\section{Interacting crystalline SPT phases}
\label{sec:interacting}
In this section, we present case studies of the AHSS for interacting SPT phases. 
We leave the AHSS for free fermions to Sec.~\ref{sec:free}.

\subsection{Interacting fermions with inversion symmetry: the case of $I^2 = (-1)^F$}
\label{sec:5.1}
As a benchmark test of the generalized homology framework and the AHSS, we calculate the fermionic SPT phases with inversion symmetry $(x_1, \dots x_d) \mapsto (-x_1, \dots, -x_d)$ obeying the algebra $I^2 = (-1)^F$, where $(-1)^F$ is the fermion parity. 
The corresponding generalized homology is written by $h^{\Z_4}_n(\R^d,\p \R^d)$ where $\R^d$ is the real space and $\Z_4$ acts on $\R^d$ as the inversion. 
(The mathematically precise meaning of the homology $h^{\Z_4}(\R^d, \partial \R^d)$ is the homology of the pair of the one point compactification $\R^d \cup \{ \infty \}$ and the infinity $\{ \infty \}$.)

The classification of such SPT phases is given by the Anderson dual of the corresponding bordism groups. 
Let's denote a reflection transformation that changes only one spacetime coordinate as $R$, and a $\Z_2$ symmetry that does not change any spacetime points as $U$. 
$R^2=1$ and $R^2=(-1)^F$ correspond to $pin^+$ and $pin^-$ bordism groups, respectively, while $U^2=1$ and $U^2=(-1)^F$ correspond to ${\rm Spin} \times \Z_2$ and $({\rm Spin} \times \Z_4)/\Z_2$ bordism groups, respectively.~\footnote{In $({\rm Spin} \times \Z_4)/\Z_2$, a $2\pi$ rotation of spacetime and $U^2$, symmetry action of the element $2 \in \Z_4$ are identified.} 
Note that the corresponding bordism groups depend on the spatial dimension as the number of spatial directions to be flipped varies with the spatial dimension $d$: 
The Anderson dual of the bordism group classification assumes the symmetry of the TQFT. 
In particular, fermionic fields obey relativistic rotational symmetry, where rotational operations that do not change the orientation of spacetime accompany internal transformations of the internal spinor space, and a phase given by the fermion parity operator $(-1)^F$ arises for a $2\pi$ rotation of spacetime. 
As a result, the correspondence between the spatial inversion $I$ satisfying $I^2=(-1)^F$ and the four bordism groups described above is given as follows for spatial $d$ dimensions:
In the case of $d \equiv 1 \mod 4$, $R^2=(-1)^F$, i.e., ${\rm pin}^-$ bordism.
In the case of $d \equiv 2 \mod 4$, $U^2=1$, i.e., ${\rm Spin} \times \Z_2$ bordism.
In the case of $d \equiv 3 \mod 4$, $R^2=1$, i.e., ${\rm pin}^+$ bordism.
In the case of $d \equiv 0 \mod 4$, $U^2=(-1)^F$, i.e., $({\rm Spin} \times \Z_4)/\Z_2$ bordism.

\subsubsection{1d fermions}
\label{sec:1d fermions}
Let us consider $1d$ fermions over the infinite real space $\R$. 
The inversion symmetric cell decomposition is given as Fig.~\ref{fig:cell_inversion}[a] which is composed of the $0$-cell $A$ and the 1-cell $a$. 
The $E^1$-page is given by 
\begin{align}
\begin{array}{c|ccccccc}
q=0 & \Z_4 & \Z_2 \\
q=1 & 0 & \Z_2 \\
\hline 
E^1_{p,-q} & p=0 & p=1 \\
\end{array}
\end{align}
Here, $E^1_{0,0} = \Z_4$ is generated by the occupied state $f^{\dag} \ket{0}$ of a complex fermion $f^{\dag}$ with the inversion eigenvalue $I f^{\dag} I^{-1} = i f^{\dag}$, where $\ket{0}$ is the Fock vacuum. 
Similarly, $E^1_{1,0} = \Z_2$ is generated by the occupied state $f^{\dag} \ket{0}$. 
$E^1_{1,-1} = \Z_2$ is generated by the topologically nontrivial Kitaev chain on the 1-cell $a$. 
The reason of that $E^1_{0,-1} = 0$ is as follows: 
Due to the $\Z_4$ permutation symmetry generated by $I$, edge Majorana fermions should appear as a pair $\{ \gamma_1, \gamma_2\}$ with $I \gamma_1 I^{-1} = \gamma_2$ and $I \gamma_2 I^{-1}= -\gamma_2$. The Majorana fermions form a complex fermion $f=(\gamma_1+i \gamma_2)/2$, making the Hilbert space nonanomalous. 

\begin{figure}[!]
\includegraphics[width=\linewidth, trim=0cm 9cm 0cm 0cm]{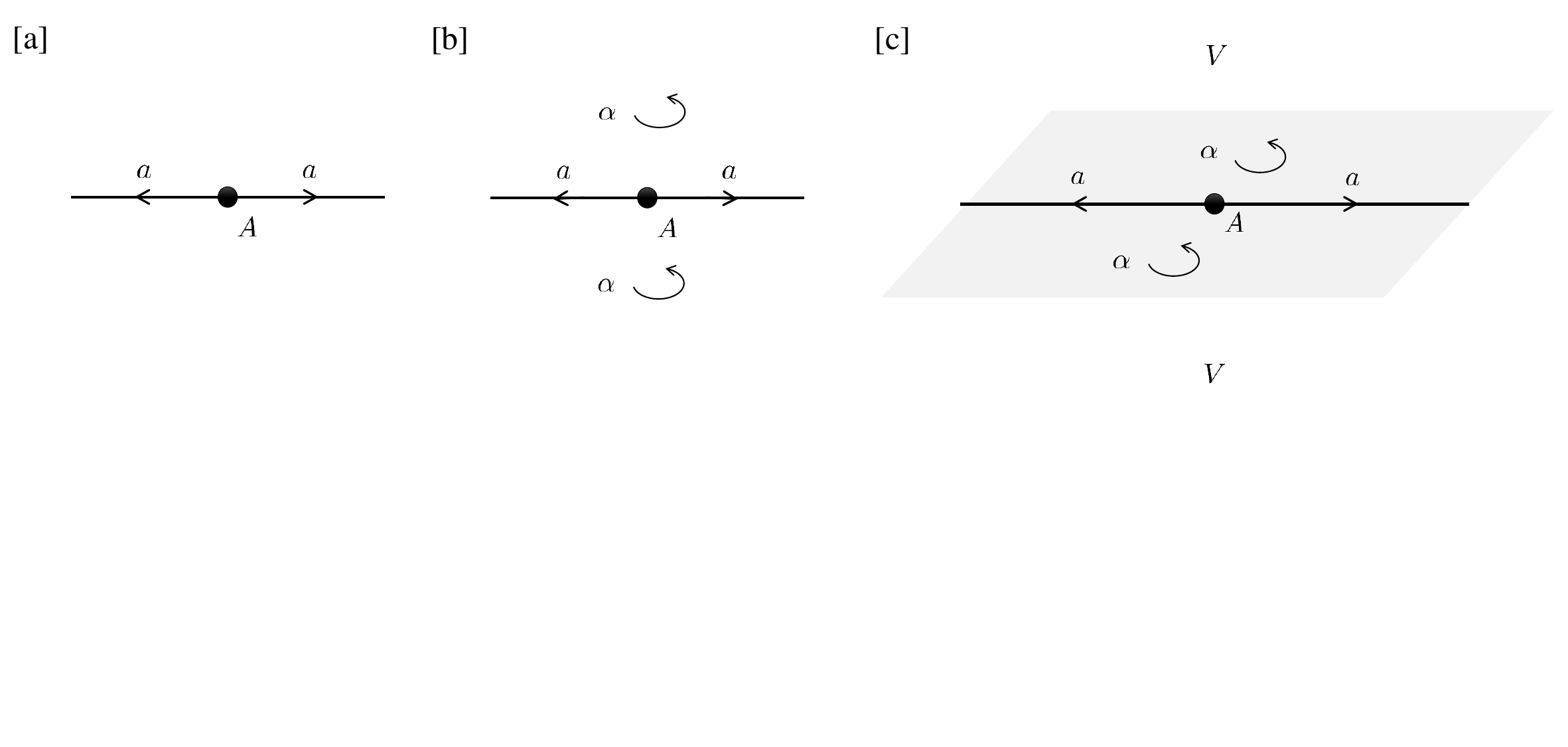}
\caption{
[a],[b] and [c]: inversion symmetric cell decomposition of $\R^1$, $\R^2$ and $\R^3$, respectively.
}
\label{fig:cell_inversion}
\end{figure}

The first differential $d^1_{1,0}$ is found to be trivial $d^1_{1,0}=0$. 
The first differential $d^1_{1,0}$ represents a pair creation of complex fermions $f^{\dag}_1f^{\dag}_2$ at the $1$-cell $a$ and its inversion image $f^{\dag}_3 f^{\dag}_4$, and moving fermions to the center and the infinite with preserving the inversion symmetry: 
\begin{align}
\begin{array}{c}
\includegraphics[width=\linewidth, trim=0cm 17cm 0cm 0cm]{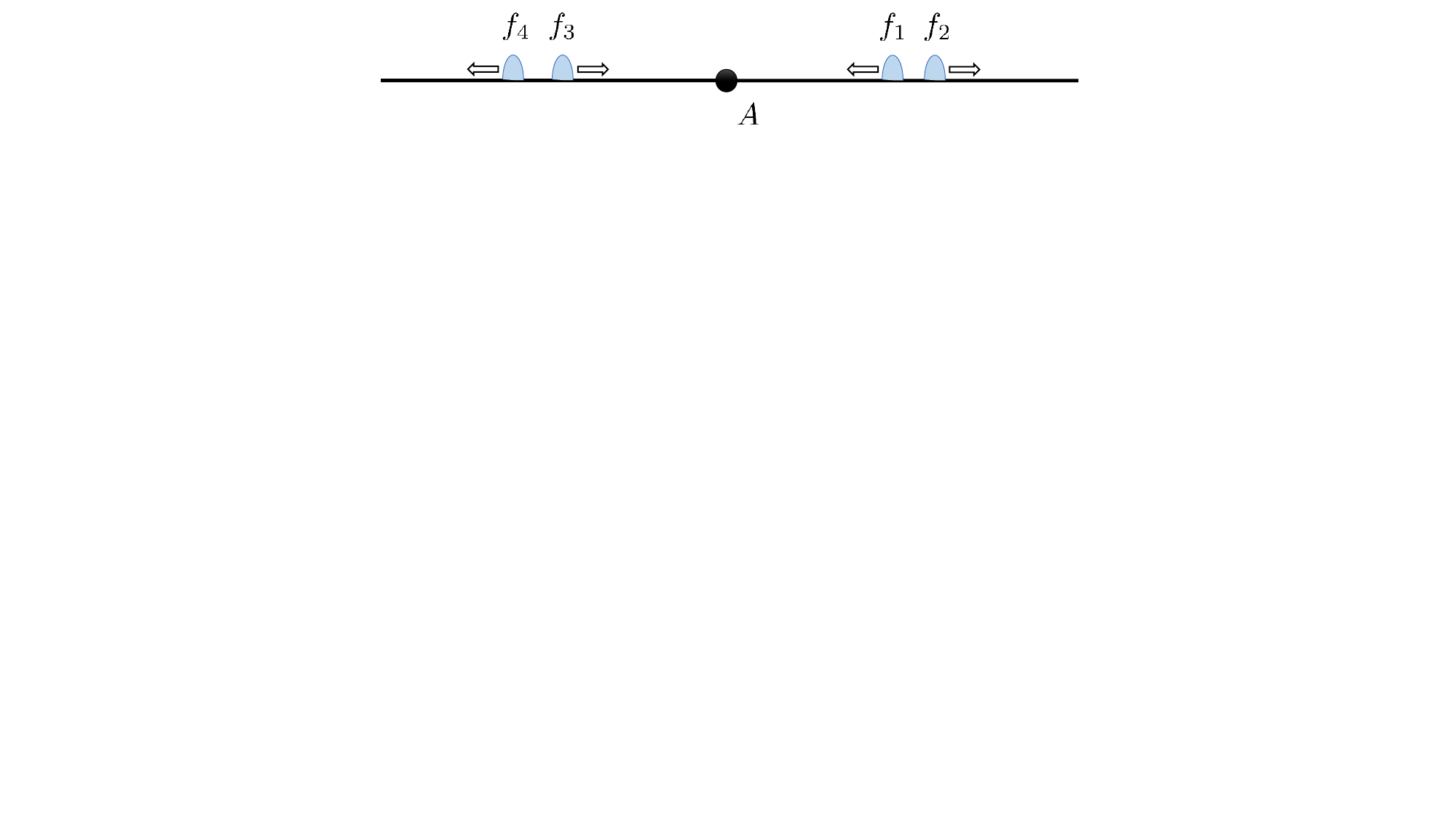}
\end{array}
\label{fig:1d_inv_d110}
\end{align}
At the inversion center, the inversion symmetric pair $f^{\dag}_1f^{\dag}_3$ of complex fermions may trivialize a part of the SPT$^0$ phase classified by $E^1_{0,0}$.  
However, the state $f^{\dag}_1f^{\dag}_3$ is trivial since the $\Z_4$ structure of inversion $I^2 = (-1)^F$ implies that $I$ is the permutation $I f^{\dag}_1 I^{-1} = f^{\dag}_3$ and $I f^{\dag}_3 I^{-1} = - f^{\dag}_1$ and it means the trivial $U(1)$ phase $I f^{\dag}_1 f^{\dag}_3 I^{-1} = f^{\dag}_1 f^{\dag}_3$ under $I$. 

The first differential $d^1_{1,-1}$ is trivial because $E^1_{0,-1}=0$.
This means a pair of topologically nontrivial Kitaev chains defined locally on the 1-cell $a$ and the inversion image can glue together at the inversion center without breaking the inversion symmetry.  

The $E^2$-page is the same as $E^1$, and it is the limiting page $E^{\infty}$. 
The classification of SPT phases fits into the short exact sequence 
\begin{align}
0 \to \underbrace{\Z_4}_{E^{\infty}_{0,0}} \to h^{\Z_4}_0(\R,\p \R) \to \underbrace{\Z_2}_{E^{\infty}_{1,-1}} \to 0.
\label{eq:1d_fermion_inversion_extension}
\end{align}
To determine the group extension, we ask if the double of $\Z_2$ nontrivial Kitaev chains inside the 1-cell is equivalent to the generator of $E^{\infty}_{0,0} = \Z_4$, a complex fermion at the inversion center. 
The only facts necessary for calculating the group extension are 
\begin{itemize}
\item
For four Majorana fermions $\{ a,b,c,d \}$ there is a 1-parameter family of gapped Hamiltonian $H(\theta)$ which switches the inter-Majorana hopping terms 
\begin{align}
H(\theta) = \cos \theta (iab+icd) + \sin \theta(iac-ibd). 
\end{align}
Especially, $H(0) = iab+icd$ is adiabatically deformed to $H(\pm \pi/2) = \pm (iac-ibd)$. 
\item 
The nontrivial Kitaev chain with the periodic boundary condition (corresponding to the $\pi$-flux piercing the chain) has the odd fermion parity~\cite{KitaevUnpaired}. 
\end{itemize}
A Hamiltonian for the topologically nontrivial Kitaev chain classified by $E^{\infty}_{1,-1}$ is given by 
\begin{align}
H_{\nu} = 2\sum_{x \in \Z+1/2} f^{\dag}_{x+1,\nu} (f_{x,\nu}+f^{\dag}_{x,\nu}) + h.c. = \sum_{x \in \Z+1/2} i b_{x,\nu} a_{x+1,\nu}, 
\label{eq:majorana_equivalence}
\end{align}
with the inversion symmetry $I f^{\dag}_{x,\nu} I^{-1} = i f^{\dag}_{-x,\nu}$ and $\nu$ the flavor index. 
We have introduced the Majorana fermions by $a_{x,\nu} = f_{x,\nu}+f^{\dag}_{x,\nu}$ and $b_{x,\nu} = -i (f_{x,\nu}-f^{\dag}_{x,\nu})$, and the inversion acts on them by $I a_{x,\nu} I^{-1} = b_{-x,\nu}$ and $I b_{x,\nu} I^{-1} = - a_{-x,\nu}$. 
It should be noticed that the inversion of the Kitaev chain representing $E^{\infty}_{1,-1}$ is {\it the bond center inversion}, since $E^{2}_{1,-1}$ is defined by gluing the left and right edge Majorana fermions at the inversion center (See Fig.~\ref{fig:1d_inversion_3}[a]). 
Now we consider the two Kitaev chains $H_{\ua} + H_{\da}$. 
Applying the equivalent relation (\ref{eq:majorana_equivalence}) to two quartets of Majorana fermions $\{ b_{-\frac{3}{2},\ua}, a_{-\frac{1}{2},\ua}, b_{-\frac{3}{2},\da}, a_{-\frac{1}{2},\da}\}$ and 
$\{ b_{\frac{1}{2},\ua}, a_{\frac{3}{2},\ua}, b_{\frac{1}{2},\da}, a_{\frac{3}{2},\da}\}$ with preserving the inversion symmetry, we find that the Hamiltonian $H_{\ua} + H_{\da}$ decouples at $x=-1$ and $x=1$. 
By moving the decoupled Kitaev chains at $|x| >1$ to the infinite, the Hamiltonian $H_{\ua}+H_{\da}$ is recast as the Kitaev chain composed of $8$ Majorana fermions (Fig.~\ref{fig:1d_inversion_3}[b])
\begin{align}
H_{\ua}+H_{\da} \sim 
i b_{-\frac{1}{2} \ua} a_{\frac{1}{2} \ua} 
+i b_{-\frac{1}{2} \da} a_{\frac{1}{2} \da} 
+i a_{-\frac{1}{2} \da} a_{-\frac{1}{2} \ua} 
+i b_{\frac{1}{2} \da} b_{\frac{1}{2} \ua}.
\label{eq:1d_8_majoranas}
\end{align}
Notice that the inversion symmetry imposed on the relative sign in between $i a_{-\frac{1}{2} \da} a_{-\frac{1}{2} \ua}$ and $i b_{\frac{1}{2} \da} b_{\frac{1}{2} \ua}$ to be $1$. 
The r.h.s.\ of (\ref{eq:1d_8_majoranas}) obeys the periodic boundary condition, i.e., the $\pi$-flux inserted in the Kitaev chain. 
Therefore, we conclude that the Hamiltonian $H_{\ua}+H_{\da}$ traps an odd fermion parity at the inversion center, implying that $H_{\ua}+H_{\da}$ is equivalent to an odd state $1$ or $3 \in E^{\infty}_{0,0} = \Z_4$. 
This means the group extension (\ref{eq:1d_fermion_inversion_extension}) is nontrivial. 
The homology group becomes $h^{\Z_2}_0(\R) = \Z_8$, which is consistent with the pin$_-$ cobordism group $\Omega^{{\rm pin}_-}_2 = \Z_8$~\cite{KTTW15}. 

\begin{figure}[!]
\includegraphics[width=\linewidth, trim=0cm 13cm 0cm 0cm]{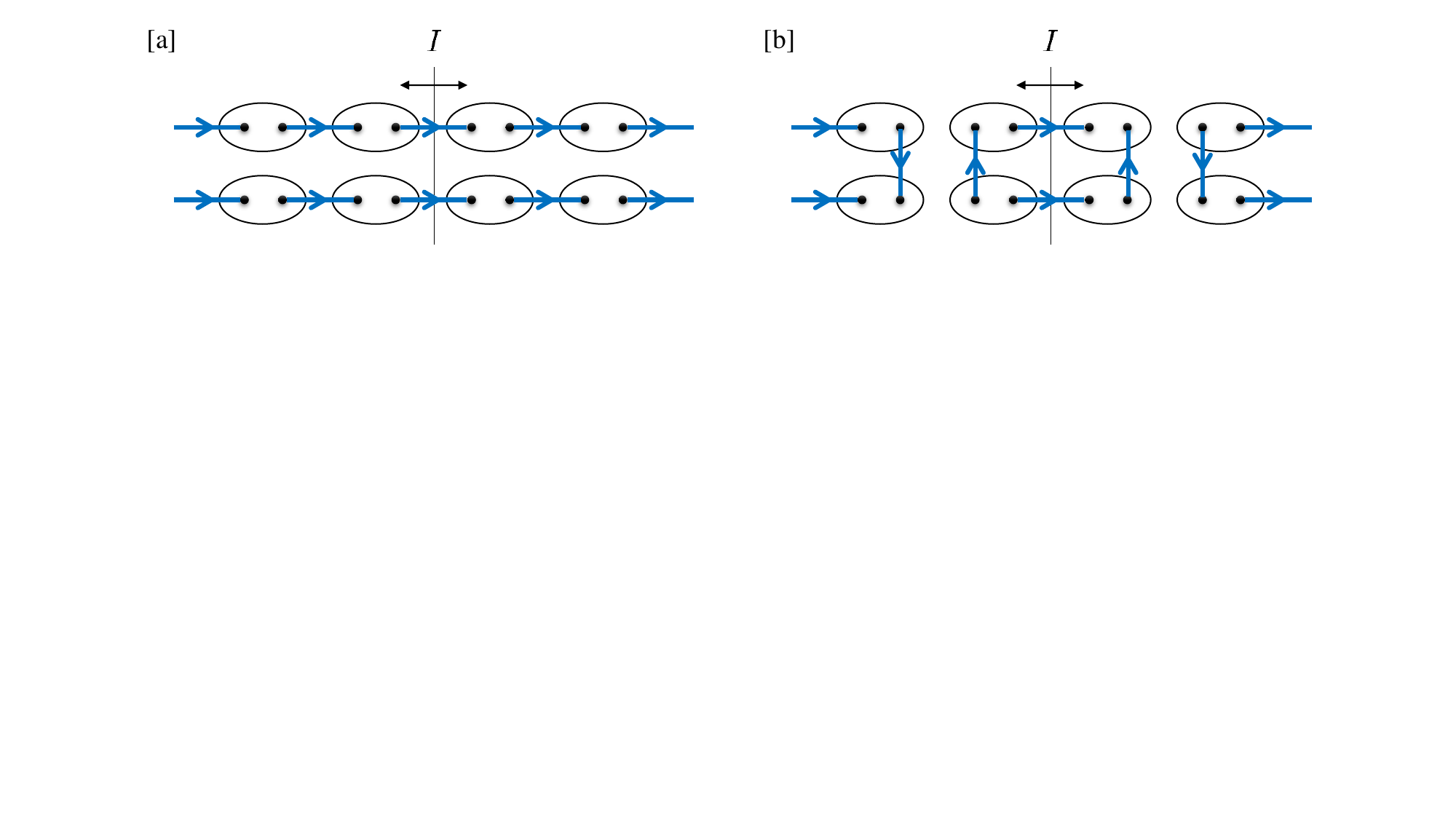}
\caption{
[a] The double nontrivial Kitaev chains preserving the inversion symmetry. 
[b] An adiabatically equivalent Hamiltonian effectively consisting only of four sites around the inversion center. 
}
\label{fig:1d_inversion_3}
\end{figure}

It is instructive to derive the same result using the Dirac Hamiltonian in the continuous system. 
Near the critical point, the Kitaev chain is described by the Dirac Hamiltonian ${\cal H} = -i \p_x \tau_y + m \tau_z$ on the basis of the Nambu fermion $\Psi(x) = (f(x),f^{\dag}(x))^T$. 
We assume that $m < 0$ for nontrivial and $m>0$ for trivial phases.
The inversion is defined by $I f^{\dag}(x) I^{-1} = i f^{\dag}(-x)$. 
It is found that the double stack of Hamiltonians ${\cal H}_{\ua} \oplus {\cal H}_{\da}$ admits a inversion symmetric mass term $M(x)$ as in 
\begin{align}
{\cal H} = -i \p_x \tau_y + m \tau_z + M(x) \tau_x \sigma_y, \qquad M(-x) = - M(x), 
\end{align}
where $\sigma_{\alpha} (\alpha \in 0,x,y,z)$ is the Pauli matrix for the layer indices $\{\ua,\da\}$. 
At the inversion center, there exists a single kink. 
For a kink with $M(x) > 0$ for $x>0$, the two Jackiw-Rebi soliton modes
\begin{align}
\begin{pmatrix}
1 \\
0 \\
\end{pmatrix}_{\tau}
\otimes 
\begin{pmatrix}
1\\
i\\
\end{pmatrix}_{\sigma}
e^{-\int^xM(x') dx'}, \qquad 
\begin{pmatrix}
0 \\
1 \\
\end{pmatrix}_{\tau}
\otimes 
\begin{pmatrix}
1\\
-i\\
\end{pmatrix}_{\sigma}
e^{-\int^xM(x') dx'} 
\end{align}
appear, and the effective low-energy Hamiltonian within the localized modes reads ${\cal H}_{\rm eff} = m \tau_z$. 
The localized modes contribute to the ground state via the creation operator 
\begin{align}
f^{\dag}_{\rm loc} \sim \int dx e^{-\int^x M(x') dx'} \left\{ f^{\dag}_{\ua}(x)-i f^{\dag}_{\da}(x) \right\}. 
\end{align}
This has the inversion eigenvalue $I f^{\dag}_{\rm loc} I^{-1} = i f^{\dag}_{\rm loc}$, meaning that the localized mode $f^{\dag}_{\rm loc}$ generate $E^1_{0,0} = \Z_4$. 

\subsubsection{2d fermions}
\label{sec:2d_fermion_inv_even}
Let us consider $2d$ fermions with inversion symmetry $I: (x,y) \mapsto (-x,-y)$ with $I^2 = (-1)^F$. 
An inversion symmetric cell decomposition of the infinite real space $\R^2$ is given as Fig.~\ref{fig:cell_inversion}[b].
The $E^1$-page is 
\begin{align}
\begin{array}{c|ccccccc}
q=0 & \Z_4 & \Z_2 & \Z_2 \\
q=1 & 0 & \Z_2 & \Z_2 \\
q=2 &  & \Z & \Z \\
\hline 
E^1_{p,-q} & p=0 & p=1 & p=2 \\
\end{array}
\end{align}
Here, $E^1_{p \in \{1,2\},-2} = \Z$ is generated by the $(p_x+i p_y)$ superconductor.
$E^1_{0,-2}$ remains blank since it does not matter to SPT phases. 
It is easy to see that the first differentials are trivial in this table. 
For example, the first differential $d^1_{2,-2}: \Z \to \Z$ represents how the boundary anomalies of $(p_x+i p_y)$ states in 2-cells contribute to anomalous edges in 1-cells. 
An inversion symmetric pair of chiral edges cancels out, which means $d^1_{2,-2}=0$:
\begin{align}
\begin{array}{c}
\includegraphics[width=0.8\linewidth, trim=0cm 12cm 0cm 0cm]{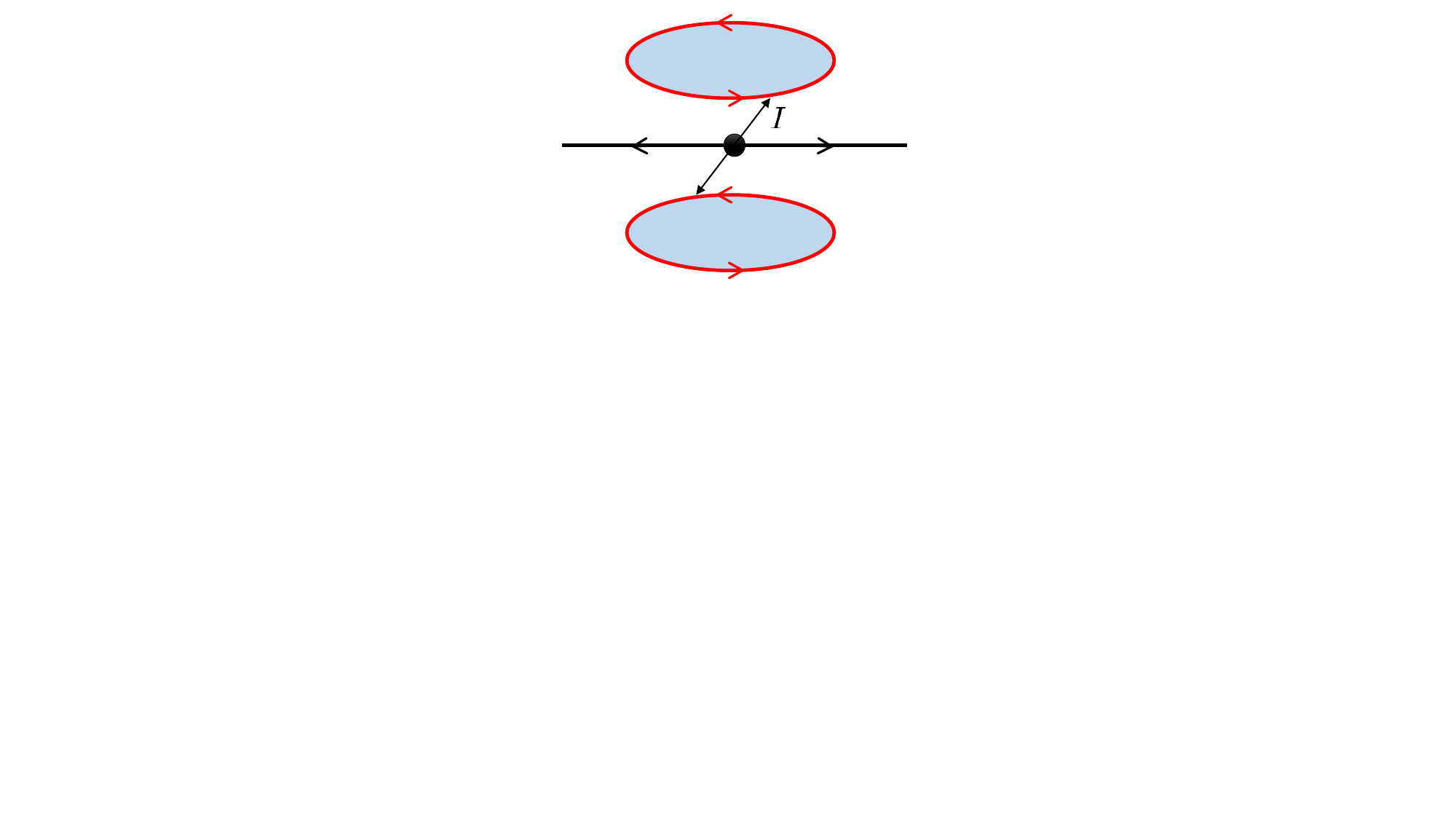}
\end{array}
\label{fig:2d_inv_chern}
\end{align}
The $E^2$-page is 
\begin{align}
\begin{array}{c|ccccccc}
q=0 & \Z_4 & \Z_2 & \Z_2 \\
q=1 & 0 & \Z_2 & \Z_2 \\
q=2 &  & & \Z \\
\hline 
E^2_{p,-q} & p=0 & p=1 & p=2 \\
\end{array}
\label{eq:2d_fermion_inversion_e2}
\end{align}
($E^2_{1,-2}$ is not determined since we did not specify $d^1_{1,-2}$.)
In this table, $d^2_{2,-1}$ can be nontrivial. It represents how an adiabatically created nontrivial Kitaev chain enclosing the inversion center $A$ trivializes SPT$^0$ state at the inversion center $A$. 
We find that $d^2_{2,-1}$ is trivial. 
This is because $I^2 = (-1)^F$ implies the anti-periodic boundary condition (NS sector) of the Kitaev chain, and its ground state is the disc state of the invertible spin TQFT, meaning that the ground state of the Kitaev chain with the inversion symmetry with $I^2 = (-1)^F$ can collapse.~\footnote{
Explicitly, the ground state of the Kitaev chain Hamiltonian composed of an even number of sites enclosing the inversion center $A$ is written as 
\begin{align}
\ket{GS_{\rm ns}} = \sum_{j \in {\rm even}} \sum_{1 \leq p_1 \leq \cdots \leq p_j \leq 2L} f^{\dag}_{p_1} \cdots f^{\dag}_{p_j} \ket{0}.
\end{align}
Under the inversion $I f^{\dag}_x I^{-1} = i f^{\dag}_{x+L}$, this state has no $U(1)$ phase. 
}
As a result, the $E^2$-page displayed in (\ref{eq:2d_fermion_inversion_e2}) is the limit. 

The homology group $h^{\Z_4}_0(\R^2,\p \R^2)$, the classification of SPT phases, fits into the short exact sequences 
\begin{equation}\begin{split}
&0 \to F^1 h_0 \to h^{\Z_4}_0(\R^2,\p \R^2) \to \underbrace{\Z}_{E^{\infty}_{2,-2}} \to 0, \\
&0 \to \underbrace{\Z_4}_{E^{\infty}_{0,0}} \to F^1 h_0 \to \underbrace{\Z_2}_{E^{\infty}_{1,-1}} \to 0.
\end{split}\end{equation}
We have already determined the latter extension $F^1 h_0 = \Z_8$ in Sec.~\ref{sec:1d fermions}, yielding to $h^{\Z_4}_0(\R^2,\p \R^2) = \Z_8 \times \Z$, which is consistent with ${\rm Tor}\Omega_3(B\Z_2) \times {\rm Free} \Omega_4(B\Z_2)$.

\subsubsection{3d fermions}
\label{sec:3d_fermion_inv_odd}
Let us consider $3d$ fermions with inversion symmetry $I: (x,y,z) \mapsto (-x,-y,-z)$ with $I^2 = (-1)^F$. 
An inversion symmetric cell decomposition of the infinite real space $\R^3$ is given as Fig.~\ref{fig:cell_inversion}[c].
The $E^1$-page is 
\begin{align}
\begin{array}{c|ccccccc}
q=0 & \Z_4 & \Z_2 & \Z_2 & \Z_2 \\
q=1 & 0 & \Z_2 & \Z_2 & \Z_2 \\
q=2 &  & \Z & \Z & \Z \\
q=3 & & 0 & 0 & 0 \\
\hline 
E^1_{p,-q} & p=0 & p=1 & p=2 & p=3 \\
\end{array}
\end{align}
The first differential $d^1_{3,-2}: \Z \to \Z$, which represents how adiabatically created $(p_x+ip_y)$ states in 3-cells trivialize ones in 2-cells, is nontrivial. 
Because the inversion does not change the Chern number, we find that $d^1_{3,-2} = 2$: 
$$
\includegraphics[width=\linewidth, trim=0cm 13cm 0cm 0cm]{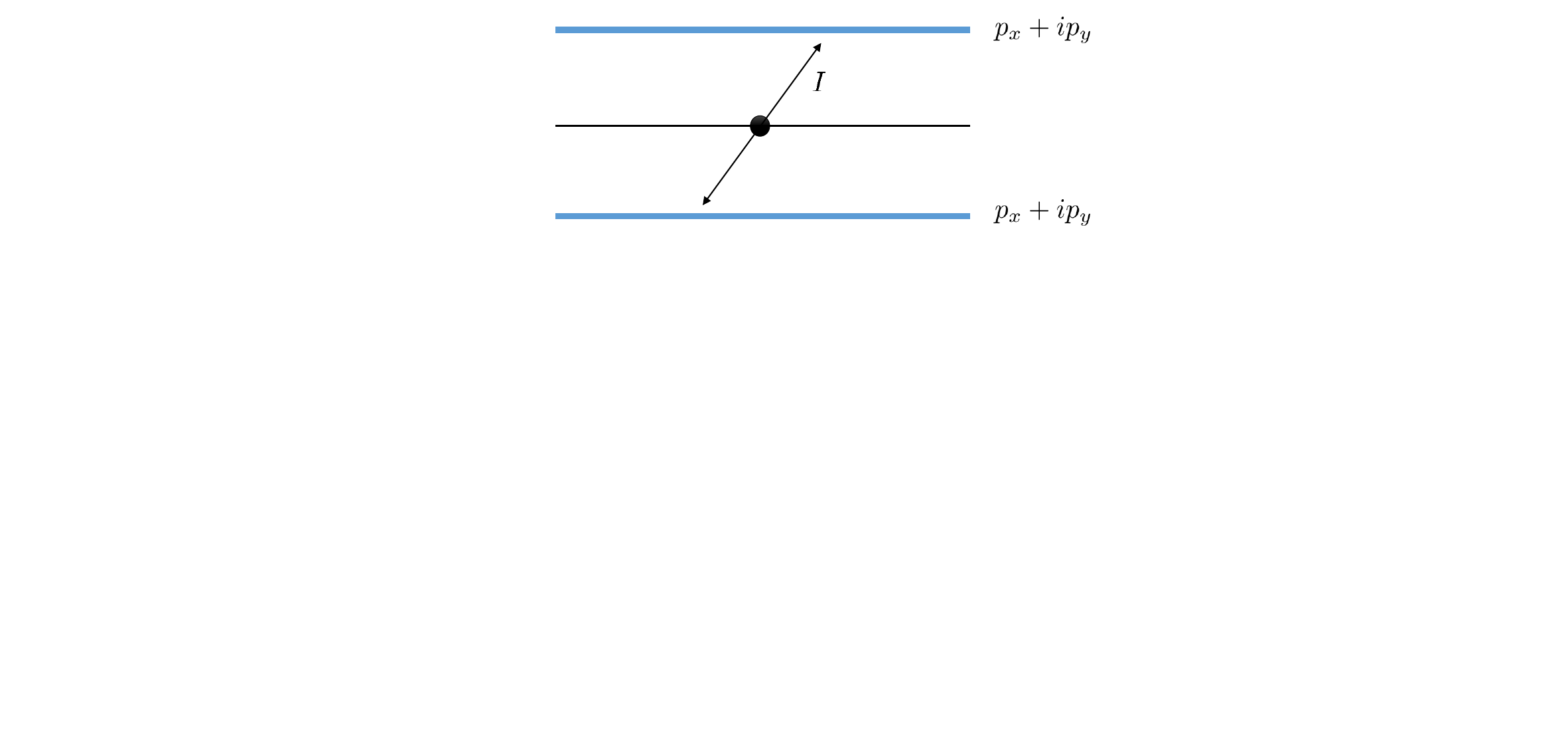}
$$
The homology of $d^1$ gives the $E^2$-page 
\begin{align}
\begin{array}{c|ccccccc}
q=0 & \Z_4 & \Z_2 & \Z_2 & \Z_2 \\
q=1 & 0 & \Z_2 & \Z_2 & \Z_2 \\
q=2 &  & & \Z_2 & 0 \\
q=3 & & 0 & 0 & 0 \\
\hline 
E^2_{p,-q} & p=0 & p=1 & p=2 & p=3 \\
\end{array}
\label{eq:3d_fermion_inv_e2}
\end{align}
For SPT phases, the $E^2$-page is sufficient to give the $E^{\infty}$ terms. 
The homology group $h^{\Z_4}_0(\R^3,\p R^3)$ fits into the exact sequences 
\begin{equation}\begin{split}
&0 \to F^2 h_0 \to h^{\Z_4}_0(\R^3,\p \R^3) \to \underbrace{0}_{E^{\infty}_{3,-3}} \to 0, \\
&0 \to F^1 h_0 \to F^2 h_0 \to \underbrace{\Z_2}_{E^{\infty}_{2,-2}} \to 0, \\
&0 \to \underbrace{\Z_4}_{E^{\infty}_{0,0}} \to F^1 h_0 \to \underbrace{\Z_2}_{E^{\infty}_{1,-1}} \to 0.
\end{split}\end{equation}
We already solved the third extension in Sec.~\ref{sec:1d fermions}. 
The above exact sequences are recast into 
\begin{align}
0 \to \underbrace{\Z_8}_{F^1 h_0} \to h^{\Z_4}_0(\R^3,\p \R^3) \to \underbrace{\Z_2}_{E^{\infty}_{2,-2}} \to 0. 
\label{eq:3d_fermion_inv_extension}
\end{align}
One can fix the extension of $E^{\infty}_{2,-2}$ by $F^1 h_0$ in a way similar to Sec~\ref{sec:1d fermions}. 
It is useful to describe the $(p_x+i p_y)$ state by the coupled wire 
\begin{equation}\begin{split}
H_{\nu} 
&= \sum_{x \in \Z+1/2} \int d y \left\{ L_{x,\nu}(y) i \p_y L_{x,\nu}(y) - R_{x,\nu}(y) i \p_y R_{x,\nu}(y) \right\}  + \sum_{x \in \Z+1/2} it \int dy R_{x,\nu}(y) L_{x+1,\nu}(y).
\end{split}\end{equation}
Here, $L_{x,\nu}(y)$ and $R_{x,\nu}(y)$ are left and right mover chiral Majorana fermions along the $y$-direction at the wire $x \in \Z + 1/2$ with $\nu$ the flavor index. 
The second term is the inter-wire hopping term to make the system gapped. 
Since the $(p_x+ip_y)$ state of $E^{\infty}_{2,-2}$ is made from gluing local $(p_x+ip_y)$ states at the 2-cell and its inversion image, the site index $x$ should be in odd integers. 
The inversion is defined by $I (L_{x,\nu}(y),R_{x,\nu}(y)) I^{-1} = (R_{-x,\nu}(-y),-L_{-x,\nu}(-y))$. 
We use the following facts to calculate the group extension: 
\begin{itemize}
\item
For two pairs of left- and right-mover chiral Majorana fermions $\{R_1(y),L_1(y),R_2(y),L_2(y)\}$, there is a 1-parameter family of gapped Hamiltonian $H(\theta)$ which switches the hopping terms 
\begin{equation}\begin{split}
H(\theta) 
&= \sum_{a=1,2} \int d y \left\{ L_{a}(y) i \p_y L_{a}(y) - R_{a}(y) i \p_y R_{a}(y) \right\} \\
&\qquad + \cos \theta \int dy it \left\{ R_1(y) L_1(y) + R_2(y) L_2(y) \right\} \\
&\qquad + \sin \theta \int dy it \left\{ R_1(y) L_2(y) - R_2(y) L_1(y) \right\}. 
\end{split}
\label{eq:1d_chiral_majorana_switch}
\end{equation}
\item 
For the $(p_x \pm i p_y)$ state defined on a cylinder $S^1 \times \R$ with the $\pi$-flux piercing $S^1$, the dimensional reduction along $S^1$ gives the nontrivial Kitaev chain. 
\end{itemize}
The latter is due to the existence of a Majorana zero mode localized at the $\pi$-flux defect in the $(p_x \pm i p_y)$ state. 
\begin{figure}[!]
\includegraphics[width=\linewidth, trim=0cm 0cm 0cm 0cm]{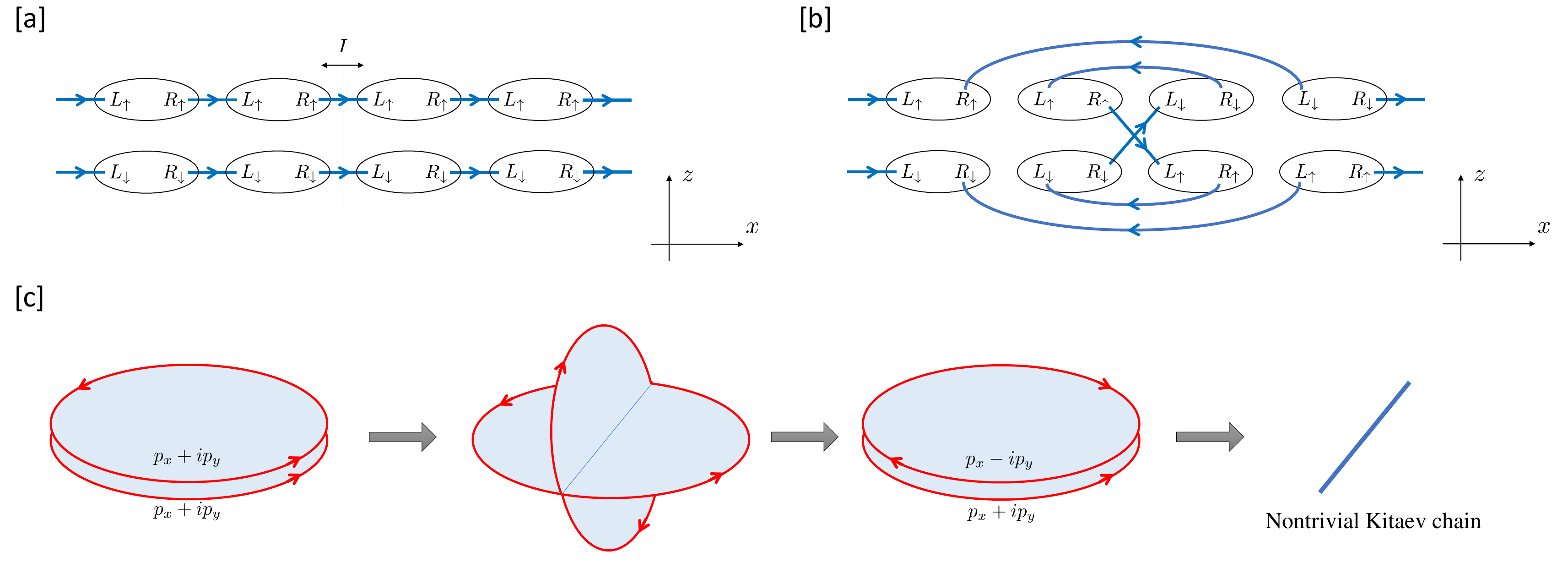}
\caption{
[a] The coupled wire Hamiltonian $H_\ua+H_\da$. 
[b] A coupled wire Hamiltonian equivalent to [a].
[c] An alternative process to the nontrivial Kitaev chain from the double layered $(p_x+i p_y)$ states. 
First we rotate a $(p_x+i p_y)$ state by $\pi$-angle to get the nonchiral $(p_x+i p_y) \oplus (p_x-i p_y)$ state. 
Next, we induce an inversion symmetric mass term, resulting in the nontrivial Kitaev chain. 
}
\label{fig:3d_fermion_inv_extension}
\end{figure}
Now let us consider the two-layered $(p_x+i p_y)$ states $H_{\ua}+H_{\da}$ (see Fig.~\ref{fig:3d_fermion_inv_extension}[a]). 
Applying the adiabatic deformation (\ref{eq:1d_chiral_majorana_switch}) to two quartets of chiral Majoranas 
$\{ R_{-\frac{3}{2},\ua}, L_{-\frac{1}{2},\ua}, R_{\frac{1}{2},\da}, L_{\frac{3}{2},\da}\}$ and 
$\{ R_{-\frac{3}{2},\da}, L_{-\frac{1}{2},\da}, R_{\frac{1}{2},\ua}, L_{\frac{3}{2},\ua}\}$ with preserving the inversion symmetry, we find that the Hamiltonian $H_{\ua} + H_{\da}$ is decomposed into the three layers of $(p_x+i p_y)$ states (see Fig.~\ref{fig:3d_fermion_inv_extension}[b]). 
Moving the inversion symmetric pair of $(p_x+i p_y)$ states to the infinite $z = \pm \infty$, the Hamiltonian $H_{\ua}+H_{\da}$ is found to be equivalent to the $(p_x+ip_y)$ state localized at the inversion center 
\begin{equation}\begin{split}
\wt H 
= ({\rm kinetic\ terms})
 + it \int d y \left\{ 
R_{-\frac{1}{2},\ua} L_{\frac{1}{2},\ua}
+R_{\frac{1}{2},\ua} L_{-\frac{1}{2},\da}
+R_{-\frac{1}{2},\da} L_{\frac{1}{2},\da}
+R_{\frac{1}{2},\da} L_{-\frac{1}{2},\ua}
\right\}
\end{split}\end{equation}
Here, the relative sign of the coefficients in between $it R_{\frac{1}{2},\ua} L_{-\frac{1}{2},\da}$ and $it R_{\frac{1}{2},\da} L_{-\frac{1}{2},\ua}$ is fixed to be 1 from the inversion symmetry. 
The boundary condition of the Hamiltonian $\wt H$ is periodic, meaning $\wt H$ is equivalent to the nontrivial Kitaev chain along the $y$-direction, the generator of $F^1 h_0 = \Z_8$. 
Therefore, the extension (\ref{eq:3d_fermion_inv_extension}) is nontrivial and fixed as $h^{\Z_4}_0(\R^3,\p \R^3) = \Z_{16}$, which is consistent with the pin$_+$ cobordism group $\Omega^{{\rm pin}_+}_4 = \Z_{16}$~\cite{KTTW15}. 

The same conclusion can be derived by using the continuous Dirac Hamiltonian. 
Here we show this in a slightly different (but equivalent) manner. 
The single layer of the $(p_x+i p_y)$ state is described by the BdG Hamiltonian 
\begin{align}
{\cal H}
= - i \p_x \tau_x -i \p_y \tau_y + m \tau_z, \qquad 
C=\tau_x K, \qquad 
I = i \tau_z, 
\end{align}
with $m<0$, where $\tau_{\mu \in \{x,y,z\}}$ is the Pauli matrix for the Nambu space. 
Because the generalized homology $h^{\Z_4}_0(\R^3,\p \R^3)$ classifies SPT phases over the disc $D^3 \sim \R^3$ which may be anomalous on the boundary $\p D^3 \sim \p \R^3$ of the disc, one can rotate the $(p_x+i p_y)$ layer around the inversion center without breaking the inversion symmetry~\cite{FangMapping}. 
The $\pi$ rotation around the $y$-axis makes the Hamiltonian ${\cal H}$ a $(p_x-i p_y)$ state ${\cal H}'$, 
\begin{align}
{\cal H} \sim {\cal H}'
= i \p_x \tau_x -i \p_y \tau_y + m \tau_z, \qquad 
C=\tau_x K, \qquad 
I = i \tau_z. 
\end{align}
Then, the double layer of $(p_x+i p_y)$ states ${\cal H} \oplus {\cal H}$ is equivalent to the non-chiral state 
\begin{align}
{\cal H} \oplus {\cal H}'
= - i \p_x \tau_x \sigma_z -i \p_y \tau_y + m \tau_z, \qquad 
C=\tau_x K, \qquad 
I = i \tau_z, 
\end{align}
where $\sigma_{\mu}$ is the Pauli matrix for the layer indices. 
See Fig.~\ref{fig:3d_fermion_inv_extension}[c]. 
This Hamiltonian admits an inversion symmetric mass term 
\begin{align}
M(x) \tau_x \sigma_y, \qquad M(-x) = - M(x), 
\end{align}
which induces the localized doublet modes at $x \sim 0$  
\begin{align}
\{\phi_{{\rm loc},1}(x,y),\phi_{{\rm loc},2}(x,y)\}
\sim
\left\{ 
\begin{pmatrix}
1 \\
0 \\
\end{pmatrix}_{\tau} , 
\begin{pmatrix}
0 \\
1 \\
\end{pmatrix}_{\tau} \right\} \otimes \begin{pmatrix}
1\\
1\\
\end{pmatrix}_{\sigma}
e^{- \int^x dx' M(x')}. 
\label{eq:3d_fermion_inversion_odd_extension_Hamiltonian_doublet}
\end{align}
in the case where $M(x)>0$ for $x>0$. 
The effective low-energy Hamiltonian within the doublet reads 
\begin{align}
{\cal H}_{\rm loc}
= -i \p_y \tau_y + m \tau_z, \qquad 
C=\tau_x K, \qquad 
I=i \tau_z. 
\label{eq:3d_fermion_inversion_odd_extension_Hamiltonian}
\end{align}
This is the nontrivial Kitaev Hamiltonian, the generator of $F^1 h_0 = \Z_8$.

\subsection{Interacting fermions with inversion symmetry: the case of $I^2 = 1$}
Next, we consider the inversion symmetric fermionic systems again, but the inversion square is $I^2 = 1$, the identity operator. 
Because the fermion permutation operator $U_{12}$, $U_{12} f_1 U_{12}^{-1} = f_2, U_{12} f_2 U_{12}^{-1} = - f_1$, satisfies $U_{12}^2 = (-1)^F$, the algebraic relation $I^2 = 1$ implies a $\pi$-flux line ending at the inversion center. 
Put differently, $I^2 = 1$ enforces the $\pi$-flux. 
In the following, we see that the higher differentials $d^r$ in the AHSS reflect the constraint from $I^2=1$. 

At the inversion center, the inversion symmetry behaves as $\Z_2$ onsite symmetry. 
The classification of SPT$^d$ phases is given by the Anderson dual to the spin cobordism, which is noncanonically isomorphic to ${\rm Tor} \Omega_{d+1}^{\rm Spin}(B\Z_2) \times {\rm Free} \Omega_{d+2}^{\rm Spin}(B\Z_2)$, where the free part represents the Chern-Simons term from the theta term in $(d+2)$-spacetime dimensions~\cite{KTTW15,FH16}. 

In this section, we also present the classification of anomalies $h^{\Z_2}_{-1}(\R^d,\p \R^d)$ for $d=1,2,3$. 

For the same manner at the beginning of Sec.~\ref{sec:5.1}, the correspondence between the spatial inversion $I$ satisfying $I^2=1$ and the four bordism groups described above is given as follows for spatial $d$ dimensions: In the case of $d \equiv 1 \mod 4$, $R^2=1$, i.e., ${\rm pin}^+$ bordism. In the case of $d \equiv 2 \mod 4$, $U^2=(-1)^F$, i.e., $({\rm Spin} \times \Z_4)/\Z_2$ bordism. In the case of $d \equiv 3 \mod 4$, $R^2=(-1)^F$, i.e., ${\rm pin}^-$ bordism. In the case of $d \equiv 0 \mod 4$, $U^2=1$, i.e., ${\rm Spin} \times \Z_2$ bordism.

\subsubsection{1d fermions}
\label{sec:1d fermions_inv_2}
Let us consider $1d$ fermions over the infinite line $\R$. 
We shall compute the homology $h^{\Z_2}_n(\R,\p \R)$ where $\Z_2$ acts on $\R$ by the inversion $x \mapsto -x$.
The $E^1$-page is given by 
\begin{align}
\begin{array}{c|ccccccc}
q=0 & \Z_2 \times \Z_2 & \Z_2 \\
q=1 & \Z_2 \times \Z_2 & \Z_2 \\
q=2 & \Z \times \Z_8 & \Z \\
\hline 
E^1_{p,-q} & p=0 & p=1 \\
\end{array}
\end{align}
Here, $E^1_{0,0} = \Z_2 \times \Z_2$ is generated by the occupied state $f^{\dag}_{\pm} \ket{0}$ of a complex fermion with inversion parity $I f^{\dag}_{\pm} I^{-1} = \pm f^{\dag}_{\pm }$. 
Similarly, $E^1_{0,-1}$ is generated by a Majorana femrion $\gamma_{\pm}$ with the inversion parity $I \gamma_{\pm} I^{-1} = \pm \gamma_{\pm}$. 
$E^1_{0,-2} = \Z \times \Z_8$ is generated, in the viewpoint of anomaly, a right-mover chiral Majorana fermion $R_+(y)$ with the trivial $\Z_2$ onsite symmetry action for the subgroup $\Z$, and a pair $(R_+(y),L_-(y))$ of right- and left-mover chiral Majorana fermions with the even and odd parities under the $\Z_2$ onsite symmetry, respectively, for $\Z_8$. 

The first differential $d^1_{1,0}$ is defined as the trivialization of SPT$^0$ phases at the 0-cell from 1-cells. See Fig.~(\ref{fig:1d_inv_d110}). 
Unlike the case of $I^2 = (-1)^F$, the inversion symmetric pair has the odd inversion parity $I f^{\dag}_1 f^{\dag}_3 \ket{0} = - f^{\dag}_1 f^{\dag}_3$ because $I f^{\dag}_1 I^{-1} = f^{\dag}_3$ and $I f^{\dag}_3 I^{-1} = f^{\dag}_1$, which means $d^1_{1,0} = (1,1)$. 
In the same way, we find that $d^1_{1,-1}=(1,1)$. 
The first differential $d^1_{1,-2}$ is computed as follows~\cite{Hermele_torsor}: 
An inversion symmetric pair $(R^1(y),R^3(y))$, $I R^1(y) I^{-1} = R^3(y)$ and $I R^3(y) I^{-1} = R^1(y)$, of right-mover chiral Majorana fermions has the inversion parity $I R_{\pm }(y) I^{-1} = \pm R_{\pm}(y)$ where $R_{\pm}(y) = (R^1(y)\pm R^3(y))/\sqrt{2}$ is the linear combination. 
Therefore, by adding the trivial state $R_-(y) \oplus L_-(y)$, the inversion symmetric pair is equivalent to $(R_+(y),R_-(y)) \sim (R_-(y)) \oplus (R_-(y)) \oplus (R_+(y),L_-(y))$, the anomalous state $(-2,1)$ in $\Z \times \Z_8$, which means $d^1_{1,-2} = (-2,1)$. 
The homology of $d^1$ gives us the $E^2$-page
\begin{align}
\begin{array}{c|ccccccc}
q=0 & \Z_2 & 0 \\
q=1 & \Z_2 & 0 \\
q=2 & \Z_{16} & 0 \\
\hline 
E^2_{p,-q} & p=0 & p=1 \\
\end{array}
\end{align}
With this, we find that SPT phases are classified by $\Z_2$, and it is generated by a complex fermion $f^{\dag}\ket{0}$ at the inversion center. 
We also conclude that the $1d$ anomalies with inversion symmetry are classified by $\Z_2$, and it is generated by a single Majorana fermion at the inversion center.

It is noteworthy that $E^2_{0,-2} = \Z_{16}$ classifies the three-dimensional SPT phase with reflection symmetry, where the square of the reflection is $R^2=1$. This is known to be a $\Z_{16}$ classification~\cite{HSHH17}.

\subsubsection{2d fermions}
\label{sec:2d_fermion_inversion_i^2=1}
Let us consider $2d$ fermions with inversion symmetry $I: (x,y) \mapsto (-x,-y)$ with $I^2 = 1$. 
This corresponds to even (odd) parity $2d$ superconductors in spinless (spinful) fermions. 
The $E^1$-page is 
\begin{align}
\begin{array}{c|ccccccc}
q=0 & \Z_2 \times \Z_2 & \Z_2 & \Z_2 \\
q=1 & \Z_2 \times \Z_2 & \Z_2 & \Z_2 \\
q=2 & \Z \times \Z_8 & \Z & \Z \\
q=3 & 0 & 0 & 0 \\
\hline 
E^1_{p,-q} & p=0 & p=1 & p=2 \\
\end{array}
\end{align}
By the same discussion as in the previous section~\ref{sec:1d fermions_inv_2}, the $E^2$-page is computed as 
\begin{align}
\begin{array}{c|ccccccc}
q=0 & \Z_2 & 0 & \Z_2 \\
q=1 & \Z_2 & 0 & \Z_2 \\
q=2 & \Z_{16} & 0 & \Z \\
q=3 & 0 & 0 & 0 \\
\hline 
E^2_{p,-q} & p=0 & p=1 & p=2 \\
\end{array}
\end{align}
In this table, the second differentials $d^2_{2,-1}$ and $d^2_{2,-2}$ can be nontrivial. 

The second differential $d^2_{2,-1}: E^2_{2,-1} \to E^2_{0,0}$ represents how an adiabatically generated nontrivial Kitaev chain in the 2-cell $\alpha$ trivializes the $\Z_2$ fermion parity at the inversion center with preserving the inversion symmetry. 
Put differently, $d^2_{2,1}$ measures the obstruction whether the Kitaev chain enclosing the inversion center collapses or not. 
See the following figure:
$$
\includegraphics[width=0.7\linewidth, trim=0cm 9cm 0cm 0cm]{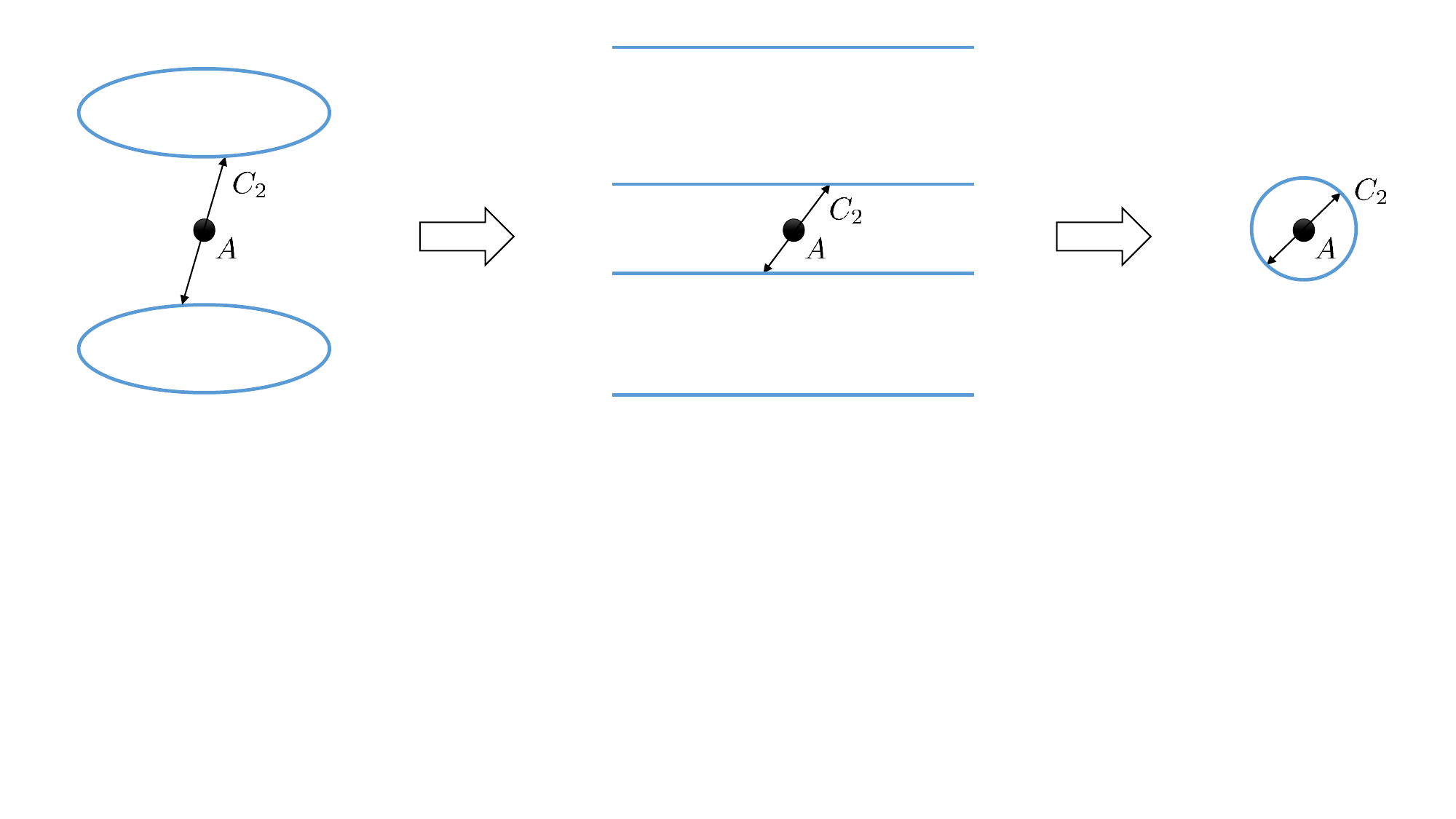}
$$
In the above figure, the blue lines represent the nontrivial Kitaev chains. 
In the middle of the figure, the top and bottom Kitaev chains go to infinite.
Thanks to $d^1_{2,-1}=0$, it is guaranteed that the Kitaev chains in the upper and lower half-planes related by the inversion symmetry can glue together at $1$-cells.
However, it is nontrivial if the Kitaev chain on the circle enclosing the $0$-cell (the right in the above figure) collapses in unique and gapped ground states.
We find that $d^2_{2,-1}$ is actually nontrivial. 
Because the square of the inversion is identified with the $2\pi$-rotation, the Kitaev chain enclosing the $0$-cell A obeys the periodic boundary condition, i.e.\ the $\pi$-flux inside the chain. 
It is well-known that the topologically nontrivial Kitaev chain with the periodic boundary condition has the odd fermion parity $\braket{GS_{\rm PBC}|(-1)^F|GS_{\rm PBC}} = -1$ for the ground state $\ket{GS_{\rm PBC}}$~\cite{KitaevUnpaired}, which implies the nontrivial second differential $d^2_{2,-1}: 1 \mapsto 1$.
Another viewpoint is that the nontrivial Kitaev chain with the inversion symmetry with $I^2=1$ must accompany a $\pi$-flux defect inside the closed chain and the $\pi$-flux defect behaves as the obstruction to collapsing. 

The nontriviality of the second differential $d^2_{2,-2}: E^2_{2,-2} \to E^2_{0,-1}$ is found in a similar way to $d^2_{2,-1}$.
It represents how an $(p_x+ip_y)$-wave superconductor generated in the $2$-cell $\alpha$ without chaining the anomaly of the system trivializes the $\Z_2$ Majorana fermion at the 0-cell $A$ in the presence of the inversion symmetry. 
See the following figure:
$$
\includegraphics[width=0.7\linewidth, trim=0cm 9cm 0cm 0cm]{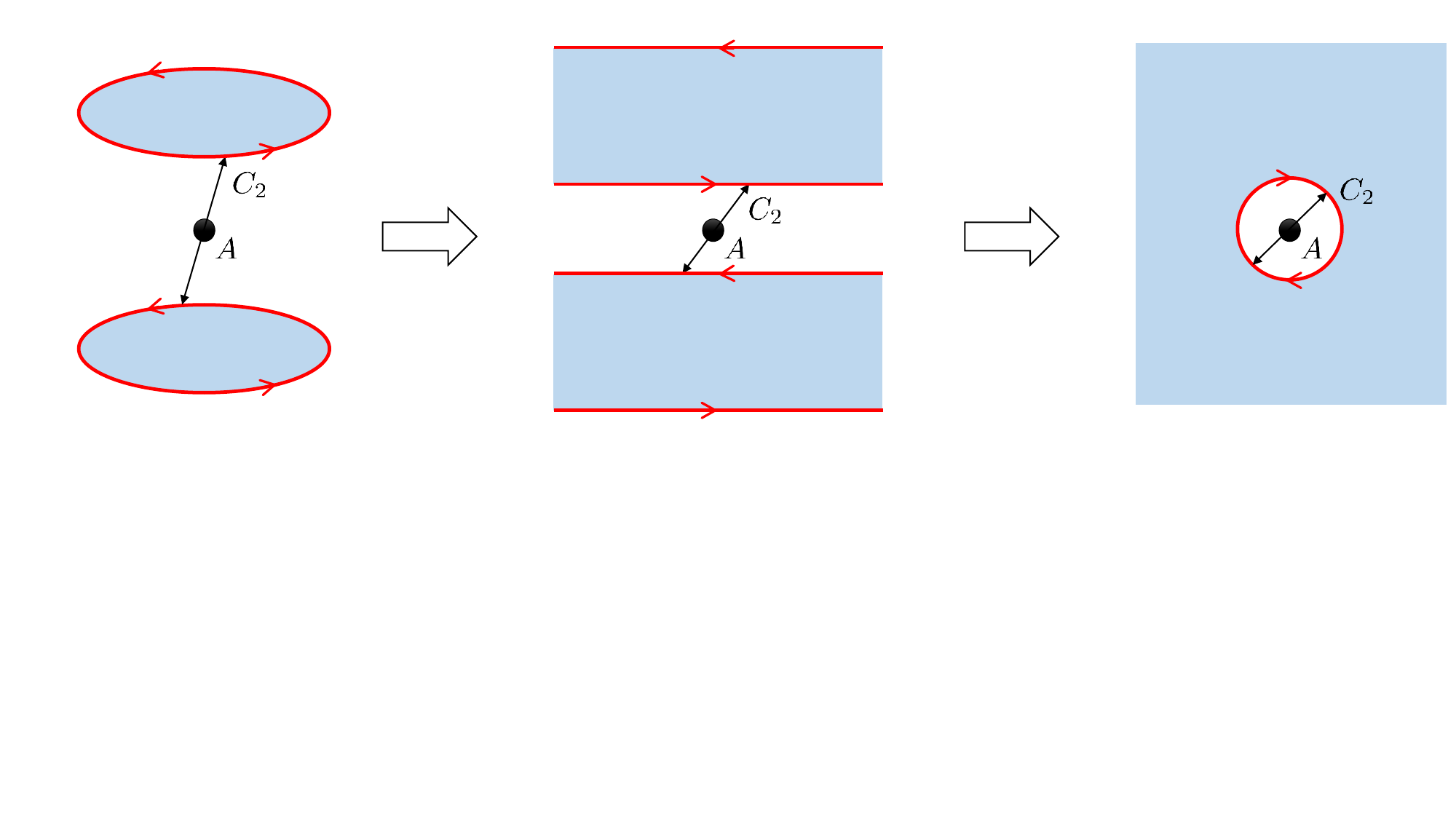}
$$
In the above figure, the blue regions represent the $(p_x+i p_y)$ superconductors preserving the inversion symmetry. 
Thanks to $d^1_{2,-2}=0$, the upper and lower $(p_x+i p_y)$ superconductors can glue together at 1-cells. 
However, it is nontrivial if they glue together at the $0$-cell, the inversion center. 
As for the Kitaev chain enclosing the $0$-cell, the $(p_x+i p_y)$ superconductor with $I^2=1$ must have a $\pi$-flux defect piercing the inversion center to make the boundary condition of the chiral Majorana fermion periodic (the Ramond sector). 
It is well-known that there is an exact zero Majorana mode localized at the $\pi$-flux reflecting the bulk topology, which implies that the second differential is nontrivial $d^2_{2,-2}: 1 \mapsto 1$. 

We get the $E^3 = E^{\infty}$-page 
\begin{align}
\begin{array}{c|ccccccc}
q=0 & 0 & 0 &  \\
q=1 & 0 & 0 & 0 \\
q=2 & \Z_{16} & 0 & 2\Z \\
q=3 & 0 & 0 & 0 \\
\hline 
E^3_{p,-q} & p=0 & p=1 & p=2 \\
\end{array}
\end{align}
We conclude that the classification of SPT phases is $h^{\Z_2}_0(\R^2,\p \R^2) = 2\Z$ which is generated by the $(p_x+i p_y)\oplus(p_x+i p_y)$ superconductor. 
Also, the $E^{\infty}$-page indicates the absence of nontrivial anomaly.

\subsubsection{3d fermions}
Let us consider $3d$ fermions with inversion symmetry $I: (x,y,z) \mapsto (-x,-y,-z)$ with $I^2 = 1$. 
This corresponds to even (odd) parity $3d$ superconductors in spinless (spinful) fermions. 
The $E^1$-page is 
\begin{align}
\begin{array}{c|ccccccc}
q=0 & \Z_2 \times \Z_2 & \Z_2 & \Z_2 & \Z_2\\
q=1 & \Z_2 \times \Z_2 & \Z_2 & \Z_2 & \Z_2\\
q=2 & \Z \times \Z_8 & \Z & \Z & \Z\\
q=3 & 0 & 0 & 0 & 0\\
q=4 & 0 & 0 & 0 & 0\\
\hline 
E^1_{p,-q} & p=0 & p=1 & p=2 & p=3 \\
\end{array}
\end{align}
The $E^2$-page is computed as 
\begin{align}
\begin{array}{c|ccccccc}
q=0 & \Z_2 & 0 & \Z_2 & \Z_2\\
q=1 & \Z_2 & 0 & \Z_2 & \Z_2\\
q=2 & \Z_{16} & 0 & \Z_2 & 0 \\
q=3 & 0 & 0 & 0 & 0\\
q=4 & 0 & 0 & 0 & 0\\
\hline 
E^2_{p,-q} & p=0 & p=1 & p=2 & p=3 \\
\end{array}
\end{align}
Here, $d^1_{3,-2} = 2$ is found by the same discussion in Sec.~\ref{sec:3d_fermion_inv_odd}. 
We have the $E^3$-page 
\begin{align}
\begin{array}{c|ccccccc}
q=0 & 0 & 0 & & \\
q=1 & 0 & 0 & 0 & \Z_2\\
q=2 & \Z_{16} & 0 & 0 & 0 \\
q=3 & 0 & 0 & 0 & 0\\
q=4 & 0 & 0 & 0 & 0\\
\hline 
E^3_{p,-q} & p=0 & p=1 & p=2 & p=3 \\
\end{array}
\end{align}
We conclude that both the classification of SPT phases and anomalies are trivial.

\subsection{$2d$ fermion with $U(1)$ and $n$-fold rotation symmetry}
In this section, we shall discuss the classification of $2d$ SPT phases of fermions with $U(1)$ charge conservation and $n$-fold rotation symmetries. 
There are four cases of $n$-fold rotation: (i) $C_n$ rotation preserving the $U(1)$ charge, (ii) magnetic $C_n T$ rotation where $T$ is the time-reversal transformation, (iii) $C_n C$ rotation where $C$ is the particle-hole transformation, and (iv) $C_n CT$ rotation where $CT$ is an anti-unitary PHS. 
The $n$-fold rotation symmetric cell decomposition of infinite $2d$ space is shown in Fig.~\ref{fig:cn}. 
For $C_n$ rotation, we give the complete classification in Sec.~\ref{sec:2d_fermion_u1_cn}. 
For other rotations, we pick up examples.

\begin{figure}[!]
\includegraphics[width=\linewidth, trim=0cm 7cm 0cm 1cm]{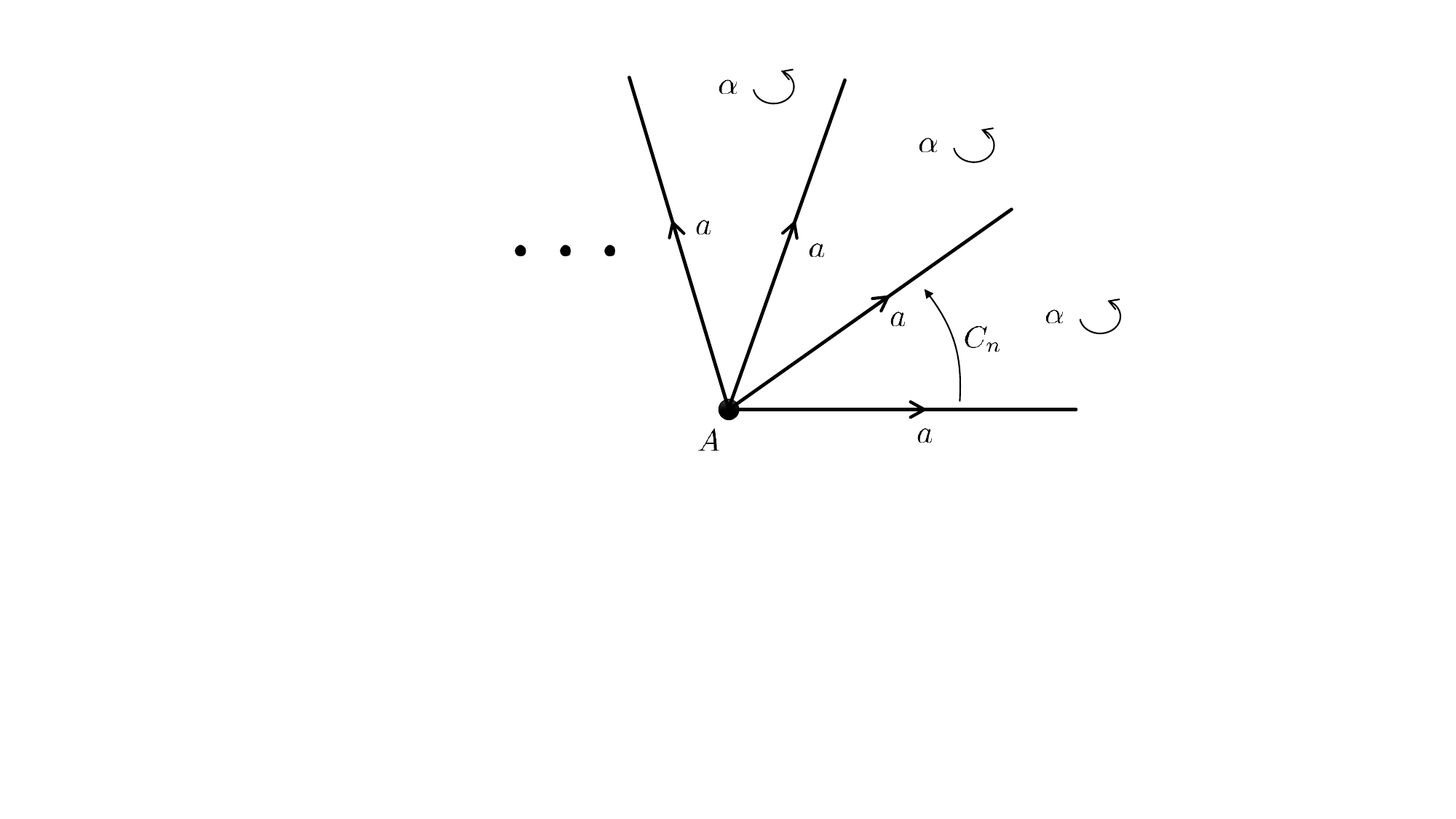}
\caption{
The cell decomposition of $2d$ space with the $n$-fold rotation symmetry.
}
\label{fig:cn}
\end{figure}

\subsubsection{$C_n$ rotation symmetry}
\label{sec:2d_fermion_u1_cn}
It was discussed that the SPT phases with $n$-fold rotation symmetry is closely related to those with onsite $\Z_n$-symmetry, where the classification is given by the spin$^c$ cobordism ${\rm Tor} \Omega^{\rm spin^c}_3(B \Z_n) \times {\rm Free} \Omega^{\rm spin^c}_4(B \Z_n)$~\cite{Hermele_torsor,KSPoint,TE18}. 
Thanks to the $U(1)$ symmetry, without loss of generality, one can assume the $n$-fold rotation symmetry is normalized as $(C_n)^n = 1$. 
The $E^1$-page relevant to SPT phases is given by 
\begin{align}
\begin{array}{c|ccccccc}
q=0 & \Z \times \Z_n & \Z & \\
q=1 & 0 & 0 & 0 \\
q=2 &  & \Z^{\times 2} & \Z^{\times 2}  \\
\hline 
E^1_{p,-q} & p=0 & p=1 & p=2 \\
\end{array}
\end{align}
Here, $E^1_{0,0} = \Z \times \Z_n$ represents an integer-valued $U(1)$ charge and a $\Z_n$ eigenvalue of the $C_n$-rotation, and $E^1_{1,-2} = E^1_{2,-2} = \Z^{\times 2}$ represents Chern insulators and Bosonic integer quantum Hall states. 
There is an even-odd effect in the first differential $d^1_{1,0}: \Z \to \Z \times \Z_n$. 
$d^1_{1,0}$ represents how adiabatically created complex fermions $f_1^{\dag}, \dots, f^{\dag}_n$ on $1$-cells trivialize SPT phases at the rotation center, where the complex fermions are related by the $C_n$ rotation as $C_n f^{\dag}_j C_n^{-1} = f^{\dag}_{j+1}$ with $f^{\dag}_{n+1} = f^{\dag}_{1}$. 
The $n$-plet $\Psi:= f^{\dag}_1 \dots f^{\dag}_n$ has the $U(1)$ charge $n \in \Z$, and the $\Z_n$ charge $C_n \Psi C_n^{-1} = - \Psi$ for even $n$ and $C_n \Psi C_n^{-1} = \Psi$ for odd $n$, which means 
\begin{align}
d^1_{1,0}: 1 \mapsto 
\left\{\begin{array}{ll}
(n,n/2) & \mbox{for even $n$}\\
(n,0) & \mbox{for odd $n$}\\
\end{array}\right. .
\end{align}
Also, we have $d^1_{2,-2} = 0$ because the $C_n$-rotation does not change the charge and thermal Hall conductivities. 
The homology of $d^1$ gives the $E^2$-page 
\begin{align}
&\begin{array}{c|ccccccc}
q=0 & \Z_{2n} \times \Z_{\frac{n}{2}} & 0 & \\
q=1 & 0 & 0 & 0 \\
q=2 &  & & \Z^{\times 2}  \\
\hline 
E^2_{p,-q} & p=0 & p=1 & p=2 \\
\end{array} & \mbox{for even $n$}, 
\end{align}
\begin{align}
&\begin{array}{c|ccccccc}
q=0 & \Z_{n} \times \Z_{n} & 0 & \\
q=1 & 0 & 0 & 0 \\
q=2 &  & & \Z^{\times 2}  \\
\hline 
E^2_{p,-q} & p=0 & p=1 & p=2 \\
\end{array} & \mbox{for odd $n$}, 
\end{align}
where $\Z_{2n} \times \Z_{\frac{n}{2}}$ ($\Z_n \times \Z_n$) is generated by SPT phases $(1,0), (2,1) \in E^1_{0,0}$ ($(1,0), (0,1) \in E^1_{0,0}$) at the rotation center when $n$ is even (odd). 
(The $(0,1)$ state is understood as the particle and hole excitation $f^{\dag} f'$ with $C_n f^{\dag} C_n^{-1} = e^{\frac{2 \pi i}{n}} f^{\dag}$ and $C_n f' C_n^{-1} = f'$.)
The $E^2$-page is the limiting page for SPT phases. 
Since the exact sequences to get the homology $h^{\Z_n}_0(\R^2,\p \R^2)$ from the $E^{\infty}$-page split, we have the classification 
\begin{align}
h^{\Z_n}_0(\R^2,\p \R^2) 
= 
\left\{\begin{array}{ll}
\Z^{\times 2} \times \Z_{2n} \times \Z_{\frac{n}{2}} & \mbox{for even $n$}, \\
\Z^{\times 2} \times \Z_{n} \times \Z_{n} & \mbox{for odd $n$}. \\
\end{array}\right.
\end{align}
This agrees with the spin$^c$ cobordism group $\Omega^{{\rm spin}^c}_*(B \Z_n)$ for onsite $\Z_n$ symmetry~\cite{Gilkey, KSPoint}.

\subsubsection{$C_2 T$ rotation symmetry}
In this section, we give the AHSS for magnetic rotation symmetry $C_2 T$. 
There are two cases: (i) $(C_2 T)^2 = 1$ and (ii) $(C_2 T)^2 = (-1)^F$. 
Due to the additional $U(1)$ phase factor $(C_2)^2 = (-1)^F$ for fermionic systems, in the TQFT limit, the former case would be equivalent to class AII TRS $T^2 = (-1)^F$, while the latter case would be the class AI TRS $T^2 = 1$. 
At the rotation center, the $C_2 T$ rotation behaves as the onsite TRS. 
In this section, we also address the classification of anomalies on $2d$ systems, the surface of $3d$ SPT phases with $C_2 T$ rotation symmetry.

\medskip
\noindent
{\it The case of $(C_2T)^2=1$---} 
In this case, the $E^1$-page relevant to SPT phases and anomalies is given by 
\begin{align}
&\begin{array}{c|ccccccc}
q=0 & \Z & \Z & \Z \\
q=1 & \Z_2 & 0 & 0 \\
q=2 & 0 & \Z^{\times 2} & \Z^{\times 2}  \\
q=3 & 0 & 0 & 0 \\
\hline 
E^1_{p,-q} & p=0 & p=1 & p=2 \\
\end{array}
\end{align}
We have used the classification results in, for example, \cite{FH16,GPW18} for the $E^1$-page at the rotation center. 
$E^1_{0,-1} = \Z_2$ is generated by the $1d$ bosonic SPT phase with TRS, the Haldane chain.
The first differential $d^1_{1,0}: \Z \to \Z$ is given by $d^1_{1,0}: 1 \to 2$, since the doublet $f_1^{\dag} f^{\dag}_2$ of $C_2 T$ invariant complex fermions on 1-cells has the fermion number $2$. 
Also, $d^1_{2,-2}: \Z^{\times 2} \to \Z^{\times 2}$ is found as $d^1_{2,-2}:(n,m) \to (2n,2m)$, since the charge and thermal Hall conductivities change to those inverse under the time-reversal rotation $C_2 T$. 
As a result, we have the $E^2$-page 
\begin{align}
&\begin{array}{c|ccccccc}
q=0 & \Z_2 & 0 & \Z \\
q=1 & \Z_2 & 0 & 0 \\
q=2 & 0 & (\Z_2)^{\times 2} & 0 \\
q=3 & 0 & 0 & 0 \\
\hline 
E^2_{p,-q} & p=0 & p=1 & p=2 \\
\end{array}
\end{align}
We find that $2d$ SPT phases are classified by $h^{\Z_2}_0(\R^2,\p \R^2) = \Z_2$, which is consistent with that for onsite class AII TRS.
Also, the classification of anomalies $h^{\Z_2}_{-1}(\R^2,\p \R^2)$ fits into the short exact sequence 
\begin{align}
0\to \Z_2 \to h^{\Z_2}_{-1}(\R^2,\p \R^2) \to (\Z_2)^{\times 2} \to 0. 
\label{eq:2d_fermion_u1_c2t_case1_extension}
\end{align}
This is also consistent with the $(\Z_2)^{\times 3}$ classification of $3d$ SPT phases of class AII insulators~\cite{WPS14,FH16}. 

\medskip
\noindent
{\it The case of $(C_2T)^2=(-1)^F$---} 
In this case, the $E^1$-page relevant to SPT phases anomalies is given by 
\begin{align}
&\begin{array}{c|ccccccc}
q=0 & \Z & \Z & \Z \\
q=1 & 0 & 0 & 0 \\
q=2 & \Z_2 & \Z^{\times 2} & \Z^{\times 2}  \\
q=3 & (\Z_2)^{\times 3} & 0 & 0 \\
\hline 
E^1_{p,-q} & p=0 & p=1 & p=2 \\
\end{array}
\end{align}
At the rotation center, the $C_2 T$ symmetry becomes the onsite class AII TRS. 
$E^1_{0,-2} = \Z_2$ is generated by the quantum spin Hall insulator. 
The first differential $d^1_{1,0}: \Z \to \Z$ is given by the identity map $d^1_{1,0}: 1 \to 1$, since $E^1_{0,0} = \Z$ is generated by the Kramers doublet of two complex fermions and it can go away to the infinity without breaking the $C_2 T$ symmetry. 
Also, $d^1_{2,-2}: \Z^{\times 2} \to \Z^{\times 2}$ is the same as for $(C_2 T)^2 = 1$.
The first differential $d^1_{1,-2}: \Z^{\times 2} \to \Z_2$ is nontrivial: a $C_2 T$ symmetric pair of chiral edge states of Chern insulators form the quantum spin Hall state, implying $d^1_{1,-2}: (n,m) \to n$ (mod $2$). 
We have the $E^2$-page 
\begin{align}
&\begin{array}{c|ccccccc}
q=0 & 0 & 0 & \Z \\
q=1 & 0 & 0 & 0 \\
q=2 & 0 & \Z_2 & 0 \\
q=3 & 0 & 0 & 0 \\
\hline 
E^2_{p,-q} & p=0 & p=1 & p=2 \\
\end{array}
\end{align}
The absence of nontrivial SPT phases $h^{\Z_4}_0(\R^2,\p \R^2) = 0$ is consistent with that for $2d$ class AI insulators. 
On the one hand, unexpectedly, we have a nontrivial anomaly $h^{\Z_4}_{-1}(\R^2,\p \R^2) = \Z_2$, which can be compared with $2d$ class AI anomalies where the classification is trivial.

\subsection{$3d$ fermions with $U(1)$ and inversion symmetry}
Let us consider $3d$ fermions with $U(1)$ charge conservation and inversion symmetry which preserves the $U(1)$ charge. 
The inversion symmetric cell decomposition of the infinite $3d$ space $\R^3$ is shown in Fig.~\ref{fig:cell_inversion} [c]. 
We focus on the SPT phases. 
The $E^1$-page is given by 
\begin{align}
\begin{array}{l|ccccccc}
q=0 & \Z \times \Z_2 & \Z & \Z & \Z \\
q=1 & 0 & 0 & 0 & 0 \\
q=2 &  & \Z \times \Z & \Z \times \Z & \Z \times \Z \\
q=3 & & 0 & 0 & 0 \\
\hline 
E^1_{p,-q} & p=0 & p=1 & p=2 & p=3 \\
\end{array}
\end{align}
Here we left some terms blank since those terms do not contribute to SPT phases. 
$E^1_{p \in \{1,2,3\},1} = \Z \times \Z$ is generated by the Chern insulator ($\sigma_{xy}=1,\kappa_{xy}=1$) and bosonic integer quantum Hall state ($\sigma_{xy}=8$, $\kappa_{xy}=0$). 

The first differential $d^1_{1,0}$ represents a pair creation of complex fermions with a unit charge $e$ and $-e$ at $1$-cell $a$ and moving the fermion with charge $e$ to the inversion center and that with $-e$ to the infinite (See Fig.~(\ref{fig:1d_inv_d110})).
One can fix the inversion to be $I^2=1$. 
Then, the inversion symmetric pair of complex fermions has the odd inversion parity $I f^{\dag}_1 f^{\dag}_3 I^{-1}= - f^{\dag}_1 f^{\dag}_3$. 
We find that $d^1_{1,3}: 1 \to (2,1) \in \Z \times \Z_2$. 
The first differential $d^1_{2,1}$ is trivial, since the boundary anomalies of SPT$^2$ phases in the 2-cells canceled out (See Fig.~(\ref{fig:2d_inv_chern})).
The first differential $d^1_{3,1}$ is nontrivial: 
$E^1_{3,1} = \Z \times \Z$ means that a $\Z \times \Z$ SPT$^2$ phase on a 2-sphere is created adiabatically in the 3-cell $V$. 
The north and the south SPT phases contribute to SPT$^2$ phases on the 2-cell $\alpha$ with an equal weight, since the inversion transformation does not change the Hall conductivity. 
We have $d^1_{3,1}: (n,m) \mapsto (2n,2m)$. 

Taking the homology of the first differentials, we get the $E^2$-page 
\begin{align}
\begin{array}{c|ccccccc}
q=0 & \Z_4 & 0 & &  \\
q=1 & 0 & 0 & 0 & 0 \\
q=2 &  & & \Z_2 \times \Z_2 &0 \\
q=3 & & 0 & 0 & 0 \\
\hline 
E^2_{p,-q} & p=0 & p=1 & p=2 & p=3 \\
\end{array}
\end{align}
Here, the generator of $E^2_{0,3} = \Z_4$ is $[(1,0)]$ with $(1,0) \in \Z \times \Z_2$, that is, a complex fermion with a unit charge. 
This is the limiting page for the homology $h^{\Z_2}_0(\R^3,\p \R^3)$. 
We find that the classification of the 2nd-order SPT phases is $\Z_2 \times \Z_2$, and it is generated by the Chern insulator and Bosonic integer quantum Hall state on the inversion symmetric plane. 

The classification of SPT phases fits into the exact sequence 
\begin{align}
0 \to \Z_4 \to h_0^{\Z_2}(\R^3,\p \R^3) \to \Z_2 \times \Z_2 \to 0. 
\end{align}
The group extension is determined in a way similar to Sec.~\ref{sec:3d_fermion_inv_odd}. 
Using the coupled wire construction of the Chern insulator, one can show that the two layers of Chern insulators are equivalent to the Chern insulator compactified on $S^1 \times \R$ with $\pi$-flux (the periodic boundary condition) piercing $S^1$. 
Because of the absence of edge anomaly in SPT$^1$ phases of complex fermions, one can cut the Chern insulator at $y>0$ and $y<0$. 
The resulting Chern insulator is defined on the $2$-sphere with the periodic boundary condition along $S^1$, meaning an odd monopole charge inside $S^2$. 
Due to the quantum Hall effect, the Chern insulator defined on a closed manifold with a magnetic monopole $m_g$ has the $U(1)$ charge $m_g \times ch$, where $ch$ is the Chern number. 
This implies that the two layers of Chern insulators are eventually equivalent to the generator of $E^{\infty}_{0,0} = \Z_4$. 

The above observation is also verified by the continuous Dirac Hamiltonian with a texture of the mass term. 
Taking the $\pi$-rotation to a single layer of the Chern insulator as in Fig.~\ref{fig:3d_fermion_inv_extension}[c], 
we find that the double layer of inversion symmetric Chern insulators ${\cal H} \oplus {\cal H}$ is equivalent to a nonchiral state 
\begin{align}
{\cal H} \oplus {\cal H}'
= - i \p_x \tau_x \sigma_z -i \p_y \tau_y + m \tau_z, \qquad 
I = \tau_z, 
\end{align}
where $\sigma_{\mu}$ is the Pauli matrix for the layer indices. 
In the same way as Sec.~\ref{sec:3d_fermion_inv_odd}, by adding a mass term $M(x) \tau_x \sigma_y$ with $M(-x) = - M(x)$, the effective low-energy Hamiltonian reads 
\begin{align}
{\cal H}_{\rm loc}
= - i \p_y \tau_y + m \tau_z, \qquad 
I = \tau_z.
\end{align}
In the absence of particle-hole symmetry, we further have a mass term varying in the $y$-direction
\begin{align}
M'(y) \tau_x, \qquad M'(-y) = - M'(y), 
\end{align}
which yields a single mode $\wt \phi_{\rm loc}(x,y)$ localized at the inversion center. 
The low-energy effective Hamiltonian for this single mode is $\wt{{\cal H}} = m$ with $I = 1$. 
Therefore, for $m<0$ the ground state is the occupied state $\wt f^{\dag}_{\rm loc} \ket{0}$ of the localized mode $\wt \phi_{\rm loc}(x,y)$ with the even parity $I = 1$, which is the generator of $E^{\infty}_{0,0} = \Z_4$. 

Also, since there is no localized mode at the inversion center for the double-layer bosonic integer quantum Hall states, the extension of the bosonic integer quantum Hall state by $E^{\infty}_{0,0}$ is trivial. 
We conclude that $h^{\Z_2}_0(\R^3) \cong \Z_8 \times \Z_2$, which is consistent with the pin$^c$ cobordism group $\Omega^{\rm Pin^c}_4=\Z_8 \times \Z_2$.

\subsection{1d boson with $\Z_2$ onsite and $\Z_2^R$ reflection symmetry}
In this section, we present an example of the AHSS for a bosonic system. 
Let us consider the $1d$ bosonic system with $\Z_2$ onsite and $\Z_2^R$ reflection symmetry $x \mapsto -x$. 
We denote the generator of $\Z_2$ by $\sigma$ and that of $\Z_2^R$ by $r$. 
We assume $\sigma r = r \sigma$, i.e.\ the total symmetry group is $\Z_2 \times \Z_2^R$. 
Some basic calculation techniques in the AHSS for bosonic systems can be seen in this example. 
We will compare the symmetry class in the present section with a slightly different symmetry class in the next section, which highlights some features of the AHSS and the LSM-type theorems in bosonic systems. 

The reflection symmetric decomposition of the $1d$ space $\R$ was shown in Fig.~\ref{fig:cell_inversion} [a]. 
The little groups $G^p$ of $p$-cells ($p=0,1$) are $G^0 = \Z_2 \times \Z_2^R$ and $G^1 = \Z_2$. 
The $E^1$-page is 
\begin{align}
\begin{array}{c|ccccccc}
q=0 & \Z_2 \times \Z_2 & \Z_2 \\
q=1 & \Z_2 & 0 \\
\hline 
E^1_{p,-q} & p=0 & p=1 \\
\end{array}
\end{align}
Here, $E^1_{p \in \{0,1\},0} \cong H^1_{\rm group}(G^p,U(1))$ is generated by 1-dimensional irreps.\ at the $p$-cell with the little group $G^p$, and $E^1_{p \in \{0,1\},-1} \cong H^2_{\rm group}(G^p,U(1))$ is generated by nontrivial projective representations at $p$-cells. 

The first differential $d^1_{1,0}$ is determined from the induced representation explained as follows. 
Let $\ket{\epsilon}, \epsilon \in \{1,-1\},$ be the basis of the 1-dimensional irrep.\ at the right $1$-cell $a$ so that $\hat \sigma \ket{\epsilon} = \epsilon \ket{\epsilon}$. 
We formally introduce $\hat r \ket{\epsilon}$ as the basis of 1-dimensional irrep.\ for the left $1$-cell. 
We ask what the 1-dimensional irrep.\ of the tensor product $\ket{\epsilon} \otimes \hat r \ket{\epsilon}$ as a 1-dimensional irrep.\ of $\Z_2 \times \Z_2^R$ is.
Noting that 
\begin{align}
&(\hat \sigma \otimes \hat \sigma) (\ket{\epsilon} \otimes \hat r \ket{\epsilon}) = \ket{\epsilon} \otimes \hat r \ket{\epsilon}, \\
&(\hat r \otimes \hat r) (\ket{\epsilon} \otimes \hat r \ket{\epsilon}) = \hat r \ket{\epsilon} \otimes \ket{\epsilon} = \ket{\epsilon} \otimes \hat r \ket{\epsilon}, 
\end{align}
we find that the tensor product irrep.\ is the trivial irrep.\ of $\Z_2 \times \Z_2^R$. 
This means $d^1_{1,0} = 0$. 
Therefore, $E^1$-page is the limiting page. 

We find SPT phases are classified by $h^{\Z_2 \times \Z_2^R}_0(\R,\p \R) \cong E^2_{0,0} = \Z_2 \times \Z_2$ and they are generated by 1-dimensional irreps.\ at the reflection center. 
Also, $E^2_{0,-1} = \Z_2$ means the presence of the LSM theorem as a boundary of an SPT phase in the sense of Sec.~\ref{sec:lsm_boundary}. 
This means the Hilbert space composed of a nontrivial projective representation at the reflection center must not have a unique gapped symmetric ground state. 
To make the point clear, we consider the LSM theorem with the translation symmetry in the following. 

\subsubsection{With translation symmetry}

\begin{figure}[!]
\begin{center}
\includegraphics[width=0.8\linewidth, trim=0cm 15cm 0cm 0cm]{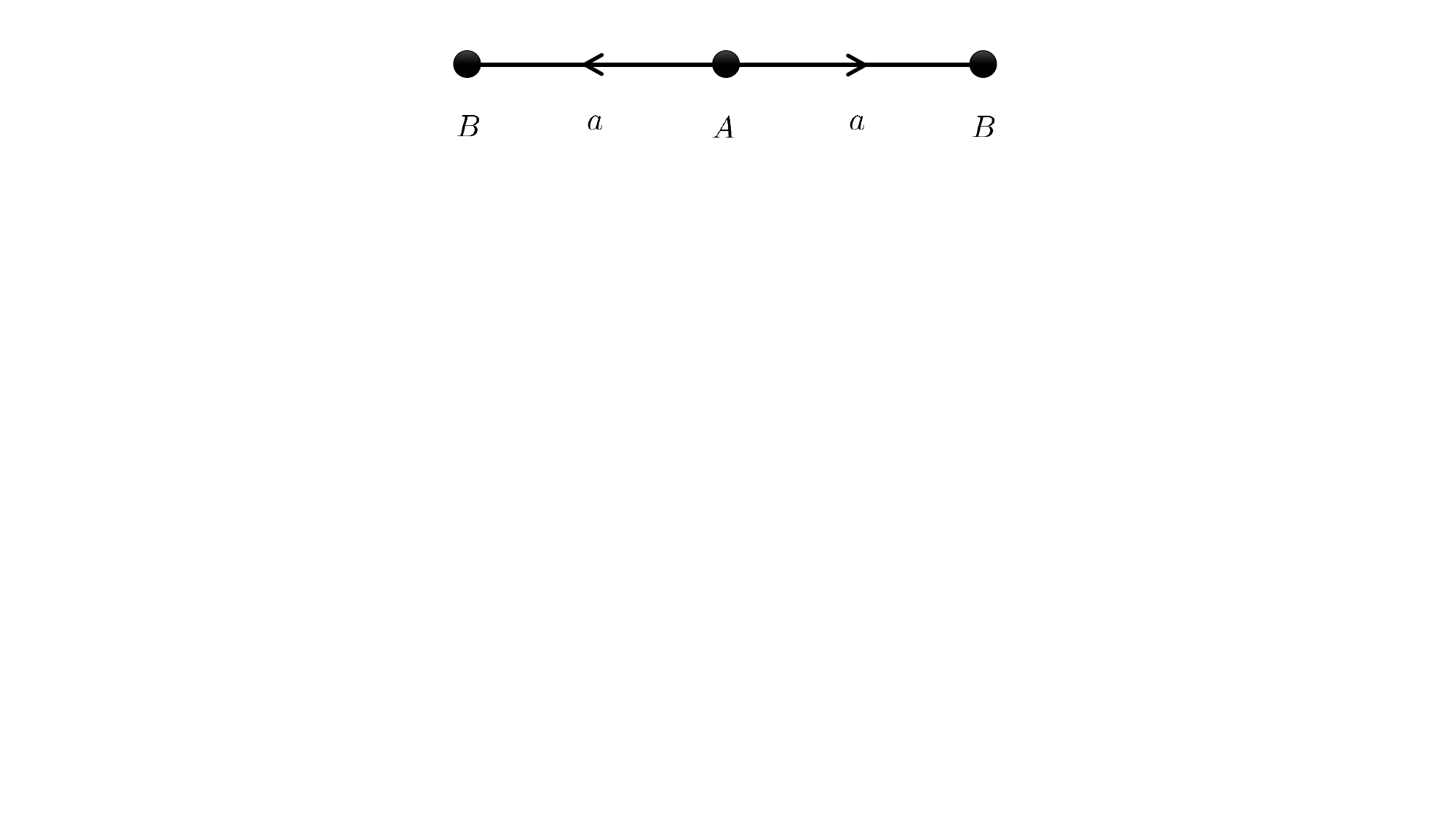}
\end{center}
\caption{
The cell-decomposition of $1d$ real space with translation and reflection symmetry. 
The figure shows a unit cell. 
}
\label{fig:1d_tra_ref}
\end{figure}

In addition to the $\Z_2 \times \Z_2^R$ symmetry, we add the translation symmetry $\Z$, $\Z \ni n : x \mapsto x+n$. 
The total symmetry becomes $(\Z \rtimes \Z_2^R) \times \Z_2$ where $r \in \Z_2^R$ acts on $\Z$ as reflection. 
The reflection and translation symmetric cell decomposition is shown in Fig.~\ref{fig:1d_tra_ref}, it is composed of $0$-cells (reflection centers) $\{A,B\}$ and a $1$-cell $\{a\}$. 
The $E^1$-page is given by 
\begin{align}
\begin{array}{c|ccccccc}
q=0 & \Z_2^{\times 2} \times \Z_2^{\times 2} & \Z_2 \\
q=1 & \Z_2 \times \Z_2 & 0 \\
\hline 
E^1_{p,-q} & p=0 & p=1 \\
\end{array}
\end{align}
and this becomes the $E^2$-page because $d^1_{1,0}=0$. 

Let us focus on $E^2_{0,-1} = \Z_2 \times \Z_2$, which indicates the LSM theorem classified by $\Z_2 \times \Z_2$. 
The former (latter) $\Z_2 \times \Z_2$ is generated by a nontrivial projective irrep.\ at the refection center $A$ ($B$).
It should be noticed that $(1,1) \in \Z_2 \times \Z_2$ Hilbert space, which is composed of a projective irrep.\ at $A$ and also at $B$, remains anomalous even if the Hilbert space per unit cell has two projective irreps.\ (i.e.\ a linear representation per a unit cell), which is consistent with literature~\cite{PWJZ17, HSHH17}.

\subsection{1d boson with $\Z_2$ onsite and $\Z_4^R$ reflection symmetry}
\label{sec:1d_boson_z2_z4}
In this section, we consider a slightly different symmetry from that in the previous section. 
Let us consider $1d$ bosonic systems with $\Z_2$ onsite and $\Z_4^R$ reflection symmetry, in which the square $r^2$ of reflection $r \in \Z_4^R$ is another onsite $\Z_2$ symmetry. 
We assume $\sigma r = r \sigma$, i.e.\ the total symmetry group is $\Z_2 \times \Z_4^R$. 
The little groups are $G^0 = \Z_2 \times \Z_4^R$ for the 0-cell and $G^1 = \Z_2 \times \Z_2^{R^2}$ for the 1-cell. 
The $E^1$-page becomes 
\begin{align}
\begin{array}{c|ccccccc}
q=0 & \Z_2 \times \Z_4 & \Z_2 \times \Z_2  \\
q=1 & \Z_2 & \Z_2 \\
\hline 
E^1_{p,-q} & p=0 & p=1 \\
\end{array}
\end{align}

The first differential $d^1_{1,0}: E^1_{1,0} \to E^1_{0,0}$ is computed as follows. 
Let $\{ \ket{\epsilon_1,\epsilon_2} \}_{\epsilon_1,\epsilon_2 \in \{1,-1\}}$ be the basis of the 1-dimensional irrep.\ with $\hat \sigma \ket{\epsilon_1,\epsilon_2} = \epsilon_1 \ket{\epsilon_1,\epsilon_2}$ and $\widehat{r^2} \ket{\epsilon_1,\epsilon_2} = \epsilon_2 \ket{\epsilon_1,\epsilon_2}$ for the right 1-cell $a$. 
We formally introduce $\hat r \ket{\epsilon_1,\epsilon_2}$ as the basis of 1-dimensional irrep.\ for the left $1$-cell. 
We ask what the 1-dimensional irrep.\ of the tensor product $\ket{\epsilon_1,\epsilon_2} \otimes \hat r \ket{\epsilon_1,\epsilon_2}$ is.
Noting that 
\begin{align}
&(\hat \sigma \otimes \hat \sigma) (\ket{\epsilon_1,\epsilon_2} \otimes \hat r \ket{\epsilon_1,\epsilon_2}) = \ket{\epsilon_1,\epsilon_2} \otimes \hat r \ket{\epsilon_1,\epsilon_2}, \\
&(\hat r \otimes \hat r) (\ket{\epsilon_1,\epsilon_2} \otimes \hat r \ket{\epsilon_1,\epsilon_2}) = \epsilon_2 (\ket{\epsilon_1,\epsilon_2} \otimes \hat r \ket{\epsilon_1,\epsilon_2}), 
\end{align}
we find that the tensor product irrep.\ has a nontrivial $\Z_4^R$ phase $-1$ when $\epsilon_2=-1$. 
This means $d^1_{1,0}:\Z_2 \times \Z_2^{R^2} \to \Z_2 \times \Z_4, (n,m) \mapsto (0,2m)$. 

The calculation of $d^1_{1,-1}$ is similar to $d^1_{1,0}$. 
Let $\ket{j}_{j \in \{\ua,\da\}}$ be a nontrivial projective irrep.\ of $\Z_2 \times \Z_2^{R^2}$ at the right $1$-cell so that 
\begin{align}
\hat \sigma \ket{j} = \ket{i} [D_{\sigma}]^i_j, \qquad 
\widehat{r^2} \ket{j} = \ket{i} [D_{r^2}]^i_j, \qquad 
D_{\sigma} D_{r^2} = - D_{r^2} D_{\sigma}.
\end{align}
We formally introduce the basis $\hat r \ket{j}$ for the left $1$-cell which is the induced representation so as to satisfy 
\begin{align}
\hat \sigma (\hat r\ket{j}) = (\hat r \ket{j}) \alpha [D_{\sigma}]^i_j, \qquad 
\widehat{r^2} (\hat r\ket{j}) = (\hat r \ket{j}) \beta [D_{r^2}]^i_j, 
\end{align}
where $\alpha$ and $\beta$ are unfixed $U(1)$ phases. 
We ask the factor system of the tensor product representation $\ket{j_1} \otimes \hat r \ket{j_2}$ is nontrivial or not. 
The representation matrices for the tensor product representation read as 
\begin{align}
&(\hat \sigma \otimes \hat \sigma) 
(\ket{j_1} \otimes \hat r \ket{j_2}) 
= (\ket{i_1} \otimes \hat r \ket{i_2}) [D_{\sigma \otimes \sigma}]^{i_1i_2}_{j_1j_2}, \qquad 
[D_{\sigma \otimes \sigma}]^{i_1i_2}_{j_1j_2} = \alpha [D_{\sigma}]^{i_1}_{j_1} [D_{\sigma}]^{i_2}_{j_2}, \\
&(\hat r \otimes \hat r) 
(\ket{j_1} \otimes \hat r \ket{j_2}) 
= (\ket{i_1} \otimes \hat r \ket{i_2}) [D_{r \otimes r}]^{i_1i_2}_{j_1j_2}, \qquad 
[D_{r \otimes r}]^{i_1i_2}_{j_1j_2} = \beta [D_{r^2}]^{i_1}_{j_2} \delta^{i_2}_{j_1}, 
\end{align}
from which we find that $\sigma$ anticommutes with $r$, $D_{\sigma \otimes \sigma} D_{r \otimes r} = - D_{r \otimes r} D_{\sigma \otimes \sigma}$. 
This means the tensor product representation belongs to a nontrivial projective representation of $\Z_2 \times \Z_4^R$, and so $d^1_{1,-1}: \Z_2 \to \Z_2, n \mapsto n$. 
Also, from $\tr[D_{r^2 \otimes r^2}]=\beta^2 \tr[D_{r\otimes r}]^2 = 0$, we find that the tensor product representation is the direct sum of the two projective irreps.\ of $\Z_2 \times \Z_4$.
We also notice that, as a projective representation of $G^1 = \Z_2 \times \Z_2^{R^2}$, the tensor product state is linear.

The homology of $d^1$ gives the $E^2$-page 
\begin{align}
\begin{array}{c|ccccccc}
q=0 & \Z_2 \times \Z_2 & \Z_2 \\
q=1 & 0 & 0 \\
\hline 
E^2_{p,-q} & p=0 & p=1 \\
\end{array}
\end{align}
From this table, we find that SPT phases are classified by $h^{\Z_2 \times \Z_4^R}_0 \cong E^2_{0,0} = \Z_2 \times \Z_2$ and they are generated by 1-dimensional irreps.\ at the reflection center. 
$E^{2}_{0,-1} = 0$ implies the absence of the LSM theorem as a boundary of SPT phases in the sense of Sec.~\ref{sec:lsm_boundary}. 

We observed that the first differential $d^1_{1,-1}$ is nontrivial. 
In the basis of terminology in Sec.~\ref{sec:lsm_spt}, this means the LSM theorem that enforces that a state is an SPT phase if the Hilbert space is composed of nontrivial projective representations of the onsite symmetry $\Z_2 \times \Z_4^R$ at the reflection center. 
To make the point clear, we consider this LSM theorem in the presence of translation symmetry. 

\subsubsection{With translation symmetry}

In addition to the $\Z_2 \times \Z_4^R$ symmetry, we add the translation symmetry $\Z$. 
The total symmetry becomes $(\Z \rtimes \Z_4^R) \times \Z_2$ where $r \in \Z_4^R$ acts on $\Z$ as reflection. 
The reflection and translation symmetric cell decomposition is shown in Fig.~\ref{fig:1d_tra_ref}, and it is composed of $0$-cells $\{A,B\}$ and a $1$-cell $\{a\}$. 
The $E^1$- and $E^2$-pages are given as 
\begin{align}
\begin{array}{c|ccccccc}
q=0 & (\Z_2 \times \Z_4) \times (\Z_2 \times \Z_4) & \Z_2 \times \Z_2  \\
q=1 & \Z_2 \times \Z_2 & \Z_2 \\
\hline 
E^1_{p,-q} & p=0 & p=1 \\
\end{array}, \qquad 
\end{align}
\begin{align}
\begin{array}{c|ccccccc}
q=0 & \Z_2^{\times 3} \times \Z_4 & \Z_2  \\
q=1 & \Z_2 & 0 \\
\hline 
E^2_{p,-q} & p=0 & p=1 \\
\end{array}
\end{align}
Here, the first differentials are $d^1_{1,0}: (n,m) \mapsto (0,2m,0,2m)$ and $d^1_{1,-1}: n \mapsto (n,n)$. 
Here, $E^2_{1,0} = \Z_2$ is understood as the usual LSM theorem to forbid a system having a unique gapped ground state if the Hilbert space contains a nontrivial projective representation per unit cell. 

Let us consider the physical consequence of the nontrivial first differential $d^1_{1,-1}$. 
As we introduced in Sec.~\ref{sec:lsm_spt}, $\im d^1_{1,-1} \neq 0$ yields the LSM theorem enforcing a nontrivial SPT phase. 
The Hilbert space ${\cal H}$ belonging to $(1,1) \in \im d^1_{1,-1} \subset E^1_{0,-1}$ is composed of a nontrivial projective irrep.\ per a reflection center $A$ and $B$. 
$\im d^1_{1,-1} \neq 0$ means that if we have a unique symmetric state $\ket{\psi}$ in the Hilbert space ${\cal H}$, $\ket{\psi}$ should be the Haldane chain state, i.e.\ $\ket{\psi}$ shows a projective representation of the onsite $\Z_2 \times \Z_2^{R^2}$ symmetry at the edge. 
Let us demonstrate it. 
As a projective representation of $\Z_2 \times \Z_4^R$ at $A$ and $B$, we consider two Ising spins $\hat{\bm{\sigma}}_j(j=1,2)$ obeying $\hat \sigma \hat{\bm{\sigma}}_j \hat\sigma^{-1} = \sigma_j^z \hat{\bm{\sigma}}_j \sigma_j^z$, $\hat r \hat{\bm{\sigma}}_1 \hat r^{-1} = \hat{\bm{\sigma}}_2$, and $\hat r \hat{\bm{\sigma}}_2 \hat r^{-1} = \sigma^x_1 \hat{\bm{\sigma}}_1 \sigma^x_1$ under $\Z_2 \times \Z_4^{R^2}$ for each $A$ and $B$. 
We can not make a unique state within the Hilbert space of two Ising spins due to a nontrivial factor system, however, we may be able to create singlet bonds in between sites $A$ and $B$ without breaking the symmetry to form a unique state. 
Let us denote the Ising spin at the inversion center by $\hat{\bm{\sigma}}_j(n)$ with $n \in \Z,\Z+1/2$. 
The full symmetry acts on Ising spins as 
$\hat T \hat{\bm{\sigma}}_j(n) \hat T^{-1} = \hat{\bm{\sigma}}_j(n+1)$, 
$\hat \sigma \hat{\bm{\sigma}}_j(n) \hat \sigma^{-1} = \sigma_j^z(n) \hat{\bm{\sigma}}_j(n) \sigma_j^z(n)$, 
$\hat r \hat{\bm{\sigma}}_1(n) \hat r^{-1} = \hat{\bm{\sigma}}_2(-n)$, and 
$\hat r \hat{\bm{\sigma}}_2(n) \hat r^{-1} = \sigma_1^x(-n) \hat{\bm{\sigma}}_1(-n) \sigma_1^x(-n)$, where $\hat T$ is the lattice translation. 
We have a tensor product of singlet bonds 
\begin{align}
\ket{\psi}
= \cdots 
\big(\ket{\ua}_{2,n-\frac{1}{2}}\ket{\da}_{2,n}-\ket{\da}_{2,n-\frac{1}{2}}\ket{\ua}_{2,n}\big)
\big(\ket{\ua}_{1,n}\ket{\da}_{1,n+\frac{1}{2}}-\ket{\da}_{1,n}\ket{\ua}_{1,n+\frac{1}{2}}\big)
\cdots 
\end{align}
as a symmetric unique state. 
This is equivalent to the Haldane state protected by onsite $\Z_2 \times \Z_2^{R^2}$ symmetry.

\subsection{2d boson with $\Z_4$ two-fold rotation symmetry}
\label{sec:2d boson with Z_4 two-fold rotation symmetry}
In this section, we see an example of the nontrivial group extension from $E^{\infty}$-page in a bosonic system. 
Let us consider the $2d$ bosonic system with $C_2$-rotation symmetry of which the square is an onsite symmetry $C_2^2 = U$. 
The total symmetry is $\Z_4^{C_2}$ and it is orientation-preserving, hence the classification of SPT phases is expected to be $H^3(B\Z_4;U(1)) \cong \Z_4$. 

The $C_2$-rotation symmetric cell-decomposition was shown in Fig.~\ref{fig:cell_inversion} [b].
The $E^1$-page is 
\begin{align}
\begin{array}{c|ccccccc}
q=0 & \Z_4 & \Z_2 & \Z_2 \\
q=1 & 0 & 0 & 0 \\
q=2 & \Z_4 & \Z_2 & \Z_2 \\
\hline 
E^1_{p,-q} & p=0 & p=1 & p=2 \\
\end{array} 
\end{align}
Let us focus on the $E^2$-page relevant to the classification of  SPT phases. 
In the same way as Sec.~\ref{sec:1d_boson_z2_z4}, we find $d^1_{1,0}:1 \to 2$. 
Also, $d^1_{2,-2}=0$ because the little groups of 1- and 2-cells are the same. 
The $E^2$-page is 
\begin{align}
\begin{array}{c|ccccccc}
q=0 & \Z_2 & 0 & \Z_2 \\
q=1 & 0 & 0 & 0 \\
q=2 & & & \Z_2 \\
\hline 
E^2_{p,-q} & p=0 & p=1 & p=2 \\
\end{array} 
\end{align}
To determine the classification of SPT phases $h_0^{\Z_4^{C_2}}(\R^2,\p \R^2)$, we should solve the extension problem 
\begin{align}
0 \to \underbrace{\Z_2}_{E^2_{0,0}} \to h_0^{\Z_4^{C_2}}(\R^2,\p \R^2) \to \underbrace{\Z_2}_{E^2_{2,-2}} \to 0, 
\label{eq:2d_z4_rot_extension}
\end{align}
that is, we ask whether or not the double layer of the Levin-Gu $\Z_2$ SPT phase with the $\pi$-flux piercing the rotation center (this is enforced by the symmetry algebra $C_2^2=U$) is equivalent to a $0d$ SPT phase (i.e.\ a linear representation) with a $U(1)$ phase $i$ or $-i$ under the $\Z_4^{C_2}$ rotation. 
Because the double stack of $\Z_2$ Levin-Gu state becomes a trivial state as a $2d$ SPT phase in a generic region, the $C_2$-eigenvalue can be calculated by the partial $C_2$-rotation acting on a disc~\cite{KSPoint}.
Since the contribution from each layer is in common, the $C_2$-eigenvalue is equivalent to the $2\pi$-rotation on a single layer. 
Therefore, the $C_2$-eigenvalue is the same as the topological spin $e^{i \theta_a}$ of the anyon $a$ yielding the twisted boundary condition on the edge CFT, which is known to be a fourth root of unity $e^{i \theta_a} = \pm i$. 
Therefore, we conclude that the extension (\ref{eq:2d_z4_rot_extension}) is nontrivial and $h_0^{\Z_4^{C_2}}(\R^2,\p \R^2) \cong \Z_4$ as expected.

\subsection{Magnetic translation symmetry}
It is shown that the magnetic translation symmetry gives rise to various LSM-type theorems to enforce a nontrivial SPT phase solely from degrees of freedom per a unit cell~\cite{YJVR17,Lu17,LuRanOshikawa,Matsugatani}. 
We describe how such LSM-type theorems in the presence of magnetic translation symmetry are formulated in the AHSS. 

Magnetic translation symmetry is defined so that the lattice translations $T_x$ and $T_y$ are accompanied by an Aharonov-Bohm flux per a unit cell 
\begin{align}
T_y^{-1} T_x^{-1} T_y T_x = g, 
\end{align}
where $g$ is an onsite unitary symmetry with a finite order $g^n=1$. 
A useful way to visualize the $g$-flux is to introduce background $g$-symmetry open lines which start and end at the magnetic $g$-fluxes. 
In this section, we only discuss the cases of order-two magnetic flux, i.e.\ $g^2=1$. 
The AHSS discussed in this section is straightforwardly generalized to general magnetic translation symmetry.
See Fig.~\ref{fig:magnetic_tr} for an example of the configuration of $g$-symmetry lines for $g^2 = 1$. 
Fig.~\ref{fig:magnetic_tr} also shows the cell decomposition of the unit cell. 
It is composed of $0$-cell $\{A\}$, 1-cells $\{a,b\}$ and $2$-cell $\{\alpha\}$. 

\begin{figure}[!]
\begin{center}
\includegraphics[width=0.7\linewidth, trim=0cm 3cm 0cm 0cm]{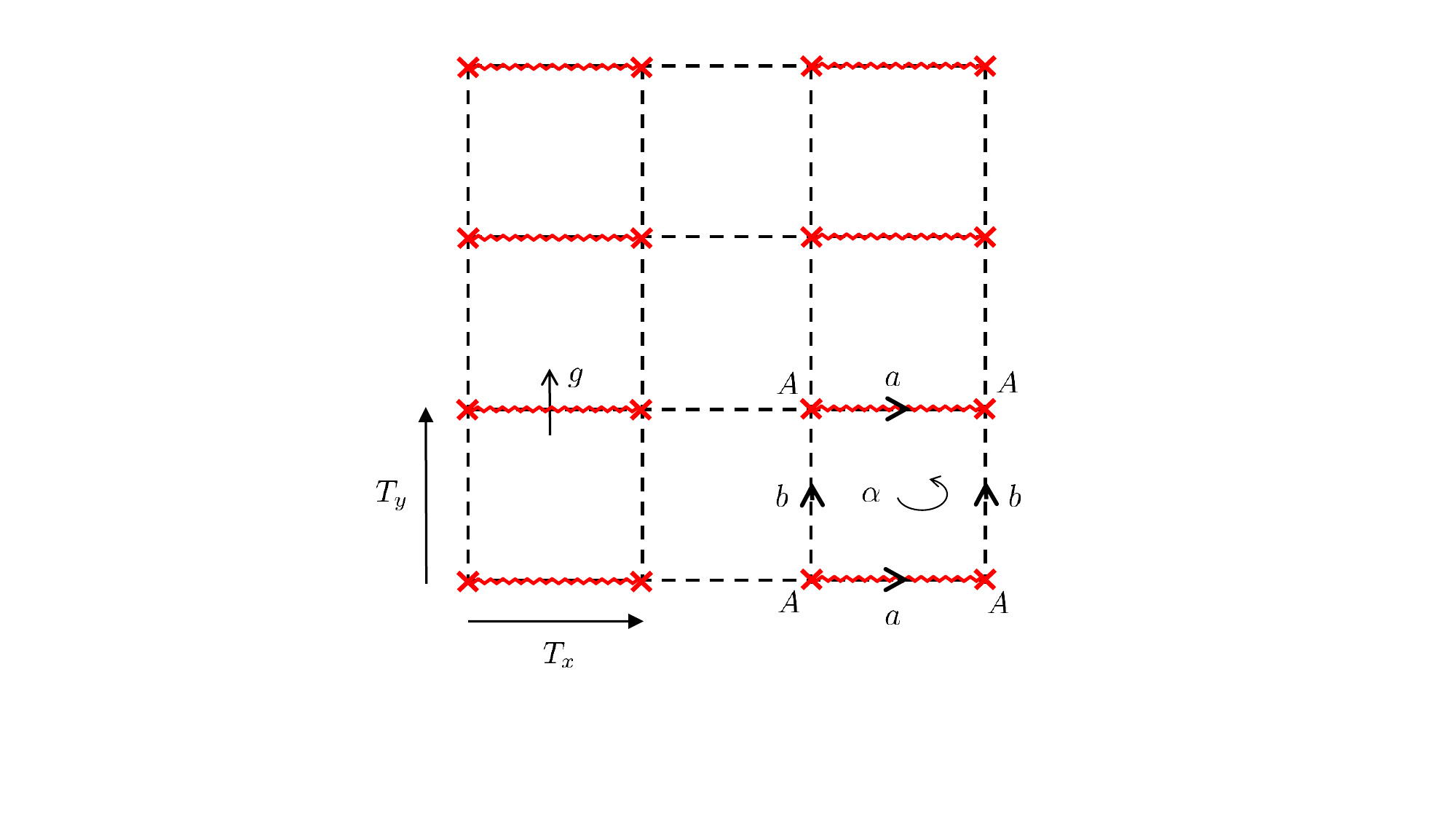}
\end{center}
\caption{Symmetry defect lines for magnetic translation symmetry with $g^2=1$.
Marks with the red cross represent magnetic $g$-fluxes.
}
\label{fig:magnetic_tr}
\end{figure}

\subsubsection{Fermion with magnetic translation symmetry}
\label{sec:Fermion with magnetic translation symmetry}
In this section, we illustrate the AHSS for $2d$ fermionic SPT phases with magnetic translation symmetry with $\pi$-flux per unit cell 
\begin{align}
T_x T_y = (-1)^F T_y T_x, 
\label{eq:2d_magnetric_tr_pi_flux}
\end{align}
where $(-1)^F$ is the fermion parity. 
For fermions, the $\pi$-flux enforces the periodic boundary condition of fermions enclosing the $\pi$-flux. 
The LSM-type theorem was discussed in Ref \cite{Lu17}. 
The $E^1$-page is given by 
\begin{align}
\begin{array}{c|ccccccc}
q=0 & \Z_2 & \Z_2 \times \Z_2 & \Z_2 \\
q=1 & \Z_2 & \Z_2 \times \Z_2 & \Z_2 \\
q=2 & \Z & \Z \times \Z & \Z \\
\hline 
E^1_{p,-q} & p=0 & p=1 & p=2 \\
\end{array} 
\end{align}
Because the first differential does not see the $\pi$-flux, all the first differentials $d^1_{p,-q}$ are zero, and it holds that $E^2 = E^1$ as for usual translation symmetry without magnetic flux. 
The second differentials can be nontrivial. 
As discussed in Sec.~\ref{sec:2d_fermion_inversion_i^2=1}, in the presence of the $\pi$-flux, we have that (i) nontrivial Kitaev chain enclosing the $\pi$-flux has the fermion parity $(-1)^F = -1$, and (ii) a $(p_x +i p_y)$-superconductor traps a Majorana zero mode at the $\pi$-flux. 
These mean 
\begin{align}
&d^2_{2,-1}: \Z_2 \to \Z_2, \qquad 1 \mapsto 1, \\
&d^2_{2,-2}: \Z \to \Z_2, \qquad 1 \mapsto 1. 
\end{align}
We have the $E^3$-page 
\begin{align}
\begin{array}{c|ccccccc}
q=0 & 0 & \Z_2 \times \Z_2 & \\
q=1 & 0 & \Z_2 \times \Z_2 & 0 \\
q=2 & & \Z \times \Z & 2\Z \\
\hline 
E^3_{p,-q} & p=0 & p=1 & p=2 \\
\end{array} 
\end{align}
The classification of SPT phases is given by $h^{\Z^{\times 2}}_0(\R^2) \cong 2 \Z \times \Z_2 \times \Z_2$, where $2 \Z$ is generated by the superconductor with chiral central charge $c-\bar c = 1$. 
It should be noticed that SPT phases represented by the homology $h^{\Z^{\times 2}}_0(\R^2)$ are made in a non-anomalous Hilbert space, where odd Majorana fermions per unit cell are  forbidden. 

According to the terminology in Sec.~\ref{sec:lsm_spt}, the nontrivial second differential $d^2_{2,-2}: E^2_{2,-2} \to E^2_{0,-1}$ implies the LSM theorem to enforce the $2d$ SPT phase. 
$E^2_{0,-1} = \Z_2$ is generated by the anomalous Hilbert space composed of a Majorana fermion per unit cell. 
$d^2_{2,-2}: 1 \mapsto 1$ means that the $(p_x+i p_y)$ superconductor with $\pi$-flux per unit cell belongs to the same anomaly as $E^2_{0,-1}$, an odd number of Majorana fermions per unit cell. 
Because the $(p_x+i p_y)$ superconductor is a unique gapped state, it holds that if, under the magnetic translation symmetry, in the Hilbert space composed of odd Majorana fermions per unit cell, a unique gapped ground state should be a nontrivial SPT state with a half-integer chiral central charge $c-\bar c \in \Z+1/2$~\cite{Lu17}.

\subsubsection{Class AII insulators with magnetic translation symmetry}
Let us consider $2d$ fermionic systems with $U(1)$ symmetry and TRS $T$ with Kramers $T^2=(-1)^F$. 
Also, we assume the magnetic translation symmetry (\ref{eq:2d_magnetric_tr_pi_flux}) with $\pi$-flux per unit cell. 
The $E^1$-page relevant to SPT phases is given by 
\begin{align}
\begin{array}{c|ccccccc}
q=0 & 2\Z & 2\Z &  \\
q=1 & 0 & 0 & 0 \\
q=2 &  & \Z_2 & \Z_2 \\
\hline 
E^1_{p,-q} & p=0 & p=1 & p=2 \\
\end{array} 
\label{tab:magnetic_tr_aii}
\end{align}
Because the first differential does not see the $\pi$-flux, $d^1=0$. 
Also, the second differential $d^2$ is trivial in the table (\ref{tab:magnetic_tr_aii}), the $E^1$-page is already the $E^3$-page. 
The classification of SPT phases fits into the short exact sequence 
\begin{align}
0 \to \underbrace{2\Z}_{E^3_{0,0}} \to h^{\Z^{\times 2}}_0(\R^2) \to \underbrace{\Z_2}_{E^3_{2,-2}} \to 0, 
\label{eq:extension_2d_aii_magnetic_tr}
\end{align}
where $E^3_{0,0}$ is generated by a pair of complex fermions forming the Kramers degeneracy per a unit cell, and $E^3_{2,-2}$ is generated by the quantum spin Hall state. 
We find that the group extension (\ref{eq:extension_2d_aii_magnetic_tr}) is nontrivial. 
The key is that the $\pi$-flux in the quantum spin Hall state traps a mid-gap localized state with the fermion parity $(-1)^F = -1$~\cite{QiSpinCharge,RanSpinCharge}. 
Thanks to the magnetic translation symmetry, the bound state energies of mid-gap localized states at $\pi$-fluxes are in common, and so are localized fermion numbers. 
Therefore, the double stack of the quantum spin Hall states with magnetic $\pi$-flux is adiabatically equivalent to an atomic insulator with the Kramers degeneracy per a unit cell, that is, the generator of $E^3_{0,0}$. 
This means the group extension (\ref{eq:extension_2d_aii_magnetic_tr}) is nontrivial and the homology is given by $h^{\Z^{\times 2}}_0(\R^2) \cong \Z$. 
According to the terminology in Sec.~\ref{sec:lsm_u1}, the mismatch of the fermion number per a unit cell between the atomic insulator $E^3_{0,0} = 2 \Z$ and the generator of SPT phases $h^{\Z^{\times 2}}_0(\R^2) = \Z$ leads the LSM theorem for filling-enforced SPT phases: if a pure state has an odd fermion number per unit cell, this state should be the quantum spin Hall state~\cite{Lu17}.

\subsection{1-form $\Z_N$ and mirror symmetry in $3d$}
In this section, we present an example of the AHSS for SPT phases with $\Z_N$ 1-form symmetry~\cite{GeneGlo} in 3-space dimensions. 
In addition to the $1$-form $\Z_N$ symmetry, we consider the mirror symmetry that commutes with the $\Z_N$ 1-form charge of the line object, which is CPT dual to the TRS. 
Such symmetry is realized in the $SU(N)$ pure Yang-Mills theory. 

\begin{figure}[!]
\includegraphics[width=\linewidth, trim=0cm 13cm 0cm 1cm]{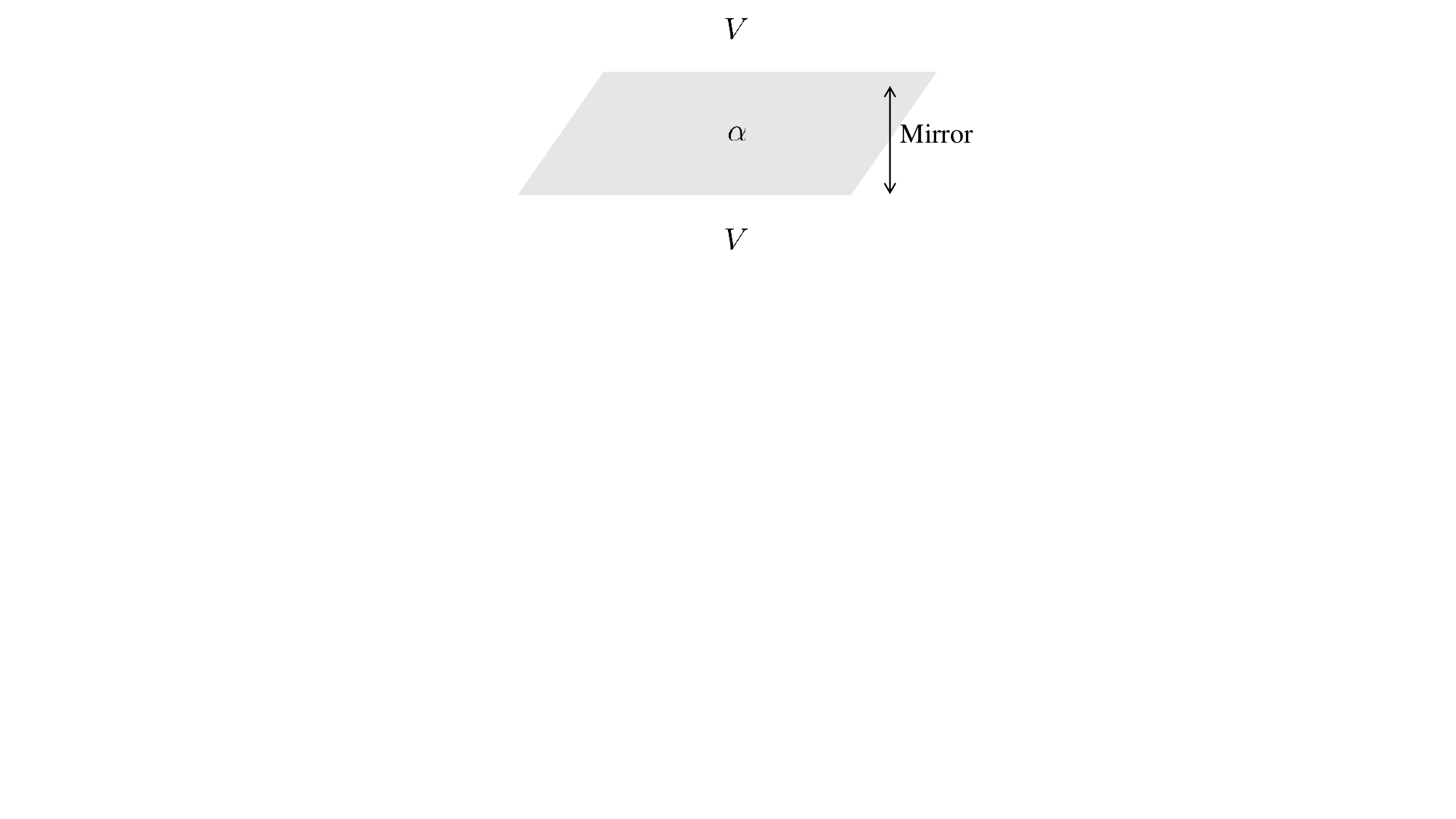}
\caption{
The mirror symmetric decomposition of $3d$ real space $\R^3$.}
\label{fig:3d_mirror}
\end{figure}

To calculate the AHSS, we introduce the mirror-symmetric decomposition of the infinite real space $\R^3$ as shown in Fig.~\ref{fig:3d_mirror} where there is no $0$- and $1$-cell.
We regard the $1$-form $\Z_N$ symmetry as onsite symmetry. 
As the classification of $d$-dimensional SPT phases on each $p$-cell, we consider the ordinary cohomology $H^{d+2}({\cal B},\Z)$ over the corresponding classifying space ${\cal B}$. 
(Adapting the ordinary cohomology as the classifications of SPT phases implies omitting the gravitational contribution to SPT phases.) 
Some low-dimensional cohomologies for $\Z_N$ 1-form and $\Z_2$ 0-form symmetries are given as follows. (See Appendix~\ref{app:k(zn,2)} for a derivation.)
\begin{align}
\begin{array}{c|ccccccccc}
&d=-2&d=-1&d=0&d=1&d=2&d=3&d=4\\
\hline
H^{d+2}(B\Z_2;\Z) & \Z & 0 & \Z_2 & 0 & \Z_2 & 0 & \Z_2 \\
\hline
H^{d+2}(K(\Z_{N\in {\rm even}},2);\Z) & \Z & 0 & 0 & \Z_N & 0 & \Z_{2N} & \Z_2 \\
\hline
H^{d+2}(K(\Z_{N\in {\rm odd}},2);\Z) & \Z & 0 & 0 & \Z_N & 0 & \Z_{N} & 0 \\
\end{array}. 
\end{align}
Here, $K(\Z_N,2)$ is the Eilenberg-MacLane space. 
$H^3(K(\Z_N,2);\Z) = \Z_N$ is generated by the charged line object with a unit $\Z_N$ 1-form symmetry charge.
$H^5(K(\Z_N,2);\Z)$ represents the 3-dimensional SPT phase with the $\Z_N$ 1-form symmetry and, at the same time, 2-dimensional anomalies generated by the $SU(N)_1$ Chern-Simons theory~\cite{GeneGlo}. 
$H^6(K(\Z_{N\in {\rm even}},2);\Z) = \Z_2$ means the existence of an unknown 4-dimensional SPT phase and $3$-dimensional anomaly from 1-form $\Z_{N \in {\rm even}}$ symmetry.
On the mirror plane, the classification of $d$-dimensional SPT phases is given by $H^{d+2}(K(\Z_N,2) \times B\Z_2;\Z)$. 

\subsubsection{Odd $N$}
Using the K\"unneth formula we have the $E^1$-page 
\begin{align}
\begin{array}{c|ccccccc}
q=0 & 0 & 0 & \Z_2 & 0 \\
q=1 & 0 & 0 & \Z_N & \Z_N \\
q=2 & 0 & 0 & \Z_2 & 0 \\
q=3 & 0 & 0 & \Z_N & \Z_N \\
q=4 & 0 & 0 & \Z_2 & 0 \\
\hline 
E^1_{p,-q} & p=0 & p=1 & p=2 & p=3 \\
\end{array}
\end{align}
Because the $SU(N)_1$ Chern-Simons theory is compatible with the mirror reflection, the first differential $d^1_{3,-3}:\Z_N \to \Z_N$ should be $d^1_{3,-3}=2$. 
Then, because $N$ is odd, $d^1_{3,-3}$ is an isomorphism. 
Similarly, we have $d^1_{3,-1}=2$. 
We have the $E^2$-page 
\begin{align}
\begin{array}{c|ccccccc}
q=0 & 0 & 0 & \Z_2 & 0 \\
q=1 & 0 & 0 & 0 & 0 \\
q=2 & 0 & 0 & \Z_2 & 0 \\
q=3 & 0 & 0 & 0 & 0 \\
q=4 & 0 & 0 & \Z_2 & 0 \\
\hline 
E^2_{p,-q} & p=0 & p=1 & p=2 & p=3 \\
\end{array}
\end{align}
We find that there is no SPT or anomalous phase protected by the $\Z_N$ 1-form symmetry in 3-space dimensions. 
The absence of anomaly for odd $N$ is consistent with the classification of the topological action~\cite{GeneGlo}.

\subsubsection{Even $N$}
The K\"unneth formula gives the $E^1$-page 
\begin{align}
\begin{array}{c|ccccccc}
q=0 & 0 & 0 & \Z_2 & 0 \\
q=1 & 0 & 0 & \Z_N & \Z_N \\
q=2 & 0 & 0 & \Z_2^{\times 2} & 0 \\
q=3 & 0 & 0 & \Z_{2N} \times \Z_2 & \Z_{2N} \\
q=4 & 0 & 0 & \Z_2^{\times 4} & \Z_2 \\
\hline 
E^1_{p,-q} & p=0 & p=1 & p=2 & p=3 \\
\end{array}
\end{align}
Here, $E^1_{2,-3}$ is generated by $SU(N)_1$ Chern-Simons theory labeled by the mirror parity $\Z_2 = \{+,-\}$.
The first differential $d^1_{3,-3}$ is found to be $d^1_{3,-3}: 1 \mapsto (2,0)$, because the quasi particle (anyon) $s_{\ua/\da}$ of the $SU(N)_1$ Chern-Simons theory living in upper/lower planes are mutually commuted, i.e.\ the mirror parity of composite $s_{\ua} s_{\da}$ is even. 
We have a part of $E^2$-page 
\begin{align}
\begin{array}{c|ccccccc}
q=0 & 0 & 0 & \Z_2 & 0 \\
q=1 & 0 & 0 & \Z_2 & 0 \\
q=2 & 0 & 0 & \Z_2^{\times 2} & 0 \\
q=3 & 0 & 0 & \Z_2 \times \Z_2 & 0 \\
q=4 & 0 & 0 &  &  \\
\hline 
E^2_{p,-q} & p=0 & p=1 & p=2 & p=3 \\
\end{array}
\end{align}
We find the existence of the second-order anomaly $E^{\infty}_{2,-3}$ for even $N$, which is consistent with the mixed {}'t Hooft anomaly among the $\Z_N$ 1-form and time-reversal symmetries in $SU(N)$ pure Yang-Mills with $\theta = \pi$~\cite{GaiottoTheta}.

\section{Free fermions}
\label{sec:free}

In this section, we study the real space homological description of SPT phases of free fermion. 
For free fermions the $\Omega$-spectrum for invertible states with onsite symmetry is the $K$-theory spectrum~\cite{Kit09}, which results in, in the presence of crystalline symmetry $G$, the $K$-homology $K^G_{n}(X,Y)$ over a pair of real space $(X,Y)$ with crystalline symmetry $G$ acting on the real space $X$ and the $K$-theory spectrum. 
There are two characteristics in free fermion SPT phases that are not in generic SPT phases in many-body Hilbert spaces.
In free fermions an integer grading $n\in\Z$ of the $K$-homology group $K^G_n(X,Y)$, which we have called the degree of SPT phenomena in Sec.~\ref{sec:genehomo}, has a different meaning:
An integer $n \in \Z$ also means the shift of the symmetry class (called the Altland-Zirnbauer (AZ) symmetry class~\cite{AZ,SRFL08}) by adding $n$ chiral symmetries (see Sec.~\ref{sec:k_homology}). 
Another feature is that the $K$-homology $K^G_n(\R^d)$ over the infinite real space $\R^d$ with a (magnetic) space group symmetry $G$ is isomorphic to a twisted equivariant $K$-cohomology group ${}^{\phi}K^{\tau-n}_P(T^d)$ over the Brillouin zone torus $T^d$, where $P$ includes point and AZ symmetry group and $\tau$ represents nonprimitive lattice translation of the space group symmetry $G$~\cite{FM13,Thiang16,Gomi17,KSGeneric,KSAHSS}. 
We see that the AHSS in real space $K$-homologies provides a different filtration of the $K$-group $K^G_n(\R^d) \cong {}^{\phi}K_P^{\tau-n}(T^d)$, which helps us to determine the $K$-group algebraically.~\footnote{
The isomorphism $K^G_n(\R^d) \cong {}^{\phi}K_P^{\tau-n}(T^d)$ between the $K$-homology group over the real space $\R^d$ and the $K$-cohomology group over the Brillouin zone torus $T^d$ in the case of complex AZ classes can be justified mathematically~\cite{GOMI201950}.
} 





\subsection{Integer grading and Altland-Zirnbauer symmetry class}
\label{sec:k_homology}

%
%

The purpose of this section is to provide the connection between the integer grading $n \in \Z$ in the $K$-homology group $K_n^G(X, Y)$ and the AZ symmetry classes of free fermions. 
For initial grading $n=0$, the $K$-homology group $K^G_0(X,Y)$ is represented by a family of the differential operator $H(x)$, which we call a Hamiltonian, depending continuously on the real space $x \in X$ and acting on the internal degrees of freedom at $x \in X$. 
For a symmetry group $G$ acting on the real space $X$, as for the case of the twisted and equivariant $K$-theory by Freed and Moore~\cite{FM13}, we introduce the two homomorphisms $\phi, c: G \to \Z_2 = \{1,-1\}$ and the factor system $\tau$ as follows. 
Let $U_g$ be the $g$-action on the internal degrees of freedom. 
For our purpose to formulate the free fermion crystalline SPT phases, one can assume that $U_g$ does not depend on real space coordinates $x\in X$. 
Per the homomorphism $\phi_g$, $U_g$ is written as 
\begin{align}
U_g = \left\{\begin{array}{ll}
{\cal U}_g & (\phi_g = 1) \\
{\cal U}_g {\cal K} & (\phi_g=-1) 
\end{array}\right.
\end{align}
with ${\cal U}_g$ a unitary matrix and ${\cal K}$ the complex conjugate.
The factor system $\tau$ determines the algebra among $g$-actions 
\begin{align}
U_g U_h = e^{i \tau_{g,h}} U_{gh}, \qquad e^{i \tau_{g,h}} \in U(1). 
\end{align} 
The factor system belongs to the two-group cocycle $\tau \in Z^2(G,U(1)_\phi)$ where $U(1)_\phi$ means the left $G$-module $U(1)$ with the $g$-action so that $g.z = z^{\phi_g}$ for $z \in U(1)$. 
The homomorphism $c_g$ specifies whether $U_g$ commutes or anticommutes with the Hamiltonian $H(x)$, i.e.\ 
\begin{align}
U_g H(x) U_g^{-1} = c_g H(g x). 
\label{eq:k_homo_0}
\end{align}
The $K$-homology group $K^G_0(X,Y)$ represents ``the topological classification of gapped Hamiltonian $H(x)$ over $X$ which can be gapless on $Y$".
\footnote{
A precise description of the $K$-homology is as follows. 
We illustrate this for the zeroth relative homology group $K_0(X,Y)$ without crystalline symmetry. 
The group $K_0(X,Y)$ is represented by a triple $({\cal H},\pi,F)$ consisting of a $\Z_2$-graded Hilbert space ${\cal H}$, $\pi: C(X) \to B({\cal H})$, and an odd self adjoint operator $F \in B({\cal H})$ satisfying that $[F,\pi(f)]$ and $\pi(f) (F^2-1)$ are compact for all $f \in C(X)$ s.t.\ $f|_Y = 0$. 
With the ``Dirac Hamiltonian" $D$, $F$ is written as $F = D (1+D^2)^{-\frac{1}{2}}$. 
For example, $K_0(\R^2,\p \R^2)$ is generated by a triple with ${\cal H} = L^2(\R^2,\C^2)$ with the $\Z_2$-grading labeled by $\sigma_z = \pm 1$, $\pi$ is the left scalar multiplication, and $F$ is given by the Dirac operator $D = -i \sigma_x \p_x -i \sigma_y \p_y$. 
Also, there is an isomorphism $KK_{-1}(\R^2/\p \R^2,(0,1)) \cong K_0(\R^2,\p \R^2)$ where the $KK_{-1}(\R^2,(0,1))$ is generated by the 1-parameter family of massive Hamiltonian $D = -i \sigma_x \p_x -i \sigma_y \p_y + m \sigma_z$ with $m \in (0,1)$, which is what we call the classification of gapped Hamiltonians. 
KS thank Yusuke Kubota for teaching me these points. 
}

As for the case of $K$-cohomology~\cite{KSGeneric,KSAHSS}, the symmetry for $n>0$ is defined by adding chiral symmetries $\Gamma_i (i = 1,\dots n)$ as in 
\begin{align}
&\Gamma_i H(x) \Gamma_i^{-1} = - H(x), \label{eq:k_homo_n} \\
&\{ \Gamma_i, \Gamma_j \} = 2 \delta_{ij}, \\
&\Gamma_i U_g = c_g U_g \Gamma_i, 
\end{align}
where $\Gamma_i$ are unitary matrices. 
We define the $n$th $K$-homology group $K^G_n(X,Y)$ as the classification of gapped Hamiltonians on $X$ with the symmetries (\ref{eq:k_homo_0}) and (\ref{eq:k_homo_n}) up to gapless states on $Y$. 
For example, when the initial symmetry class $n=0$ is composed only of the class AI TRS $T$ with $T^2 = 1$, we find that the symmetry for $n>0$ run over the real AZ symmetry classes as in the following table,  
\begin{align}
\begin{array}{lcllc}
n & \mbox{AZ class} &T&C & \mbox{$K$-spectrum} \\
\hline
n=0 & \mbox{AI} & T^2=1 & & BO \times \Z \\
n=1 & \mbox{BDI} & T^2=1 & C^2 = 1 & O \\
n=2 & \mbox{D} & & C^2=1 & O/U \\
n=3 & \mbox{BDI} & T^2=-1 & C^2 = 1 & U/Sp \\
n=4 & \mbox{AII} & T^2=-1 & & BSp \times \Z \\
n=5 & \mbox{CII} & T^2=-1 & C^2 =-1 & Sp \\
n=6 & \mbox{C} & & C^2=-1 & Sp/U \\
n=7 & \mbox{CI} & T^2=1 & C^2 = -1 & U/O \\
\end{array}
\end{align}
where the empty columns mean the absence of symmetry.

\subsection{$E^1$-page}
The $E^1$-page of the $K$-homology is straightforwardly given by counting the number of irreps.\ at $p$-cells, i.e.\ high-symmetric points in the real space. 
The definition is parallel to the case of SPT phases introduced in Sec.~\ref{sec:e1}
\begin{equation}\begin{split}
E^1_{p,n}
&:= K^G_{p+n}(X_p \cup Y, X_{p-1} \cup Y) \\
&= \prod_{j\in I_p} K^{G_{D^p_j}}_{p+n}(D^p_j,\p D^p_j) \\
&= \prod_{j\in I_p} K^{G_{D^p_j}}_{n}(pt). 
\end{split}\end{equation}
Here, $j$ runs the set (denoted by $I_p$) of inequivalent $p$-cells of $X$ that are not in $Y$, and $G_{D^p_j}$ is the little group that fixing the $p$-cell $D^p_j$.
The first differential 
\begin{align}
d^1_{p,n}: E^1_{p,n} \to E^1_{p-1,n}
\end{align}
is viewed as how free fermion SPT phases with AZ class $(p+n)$ in $(p-1)$-cells are trivialized by ones in $(p-1)$-cells. 
Unlike the AHSS for interacting SPT phases developed in Sec.~\ref{sec:d1}, there is a simple formula to compute the first differential $d^1$. 
It is found that the first differential $d^1_{p,n}$ is determined by the induced representations of the little group $G_{D^{p-1}_j}$ at a $(p-1)$-cell by the adjacent $p$-cells with the little groups $G_{D^p_{j'}} \subset G_{D^{p-1}_j}$, i.e.\ the first differential $d^1_{p,n}$ can be read off solely by the character of the little groups. 
This is analogous to the compatibility relation to compute the first differential $d_1$ in the AHSS in the $K$-cohomology~\cite{KruthoffCompatibilityRelation,po2017complete,bradlyn2017topological,KSAHSS}.

In the rest of this section, we present various examples of the real space AHSS for free fermions. 

\subsection{$1d$ Real AZ classes with mirror reflection symmetry}
Let us consider $1d$ free fermions with reflection symmetry in addition to TRS $T$ and/or PHS $C$.
We assume that the square of reflection is $R^2=-1$ and $R$ commutes with $T$ and $C$ so that the antiunitary operators $T$ and $C$ exchange the reflection eigenvalues $R =i$ and $R=-i$. 
For a complex fermion creation operator $\psi^\dag(x)$ over the $1d$ real space, the symmetries, and their inter relations are summarized as 
\begin{align}
&T \psi^{\dag}(x) T^{-1} = \psi^{\dag}(x) {\cal U}_T, \qquad T^2 \in \{1,-1\}
\end{align}
and/or
\begin{align}
&C \psi(x) C^{-1} = \psi^{\dag}(x) {\cal U}_C, \qquad C^2 \in \{1,-1\} 
\end{align}
and the reflection symmetry 
\begin{align}
&R \psi^{\dag}(x) R^{-1} = \psi^{\dag}(-x) {\cal U}_{R}, \qquad R^2 = -1,  \\
&TR=RT, \qquad CR=RC.
\end{align}
The reflection symmetric cell decomposition of the infinite $1d$ space is given as in Fig.~\ref{fig:cell_inversion}[a].
At the reflection center $A$, the reflection symmetry behaves as a $\Z_2$ onsite symmetry. 
Let us set the initial symmetry class to $T$ and $R$ with $T^2 = 1$. 
We shall compute the relative $K$-homology group $K^{\Z_2^T \times \Z_2^R}_n(\R,\p \R)$. 
The $E^1$-page and the first derivative $d^1$ are straightforwardly determined as 
\begin{align}
\begin{array}{ll|ccccc}
{\rm CI} & n=7 & 0 & 0\\
{\rm C} & n=6& \Z & 0\\
{\rm CII} & n=5& 0 & 0\\
{\rm AII} & n=4& \Z & 2\Z\\ 
{\rm DIII} & n=3 & 0 & 0\\
{\rm D} & n=2 & \Z & \Z_2 \\ 
{\rm BDI} & n=1 & 0 & \Z_2 \\ 
{\rm AI} & n=0 & \Z & \Z \\
\hline
& E^1_{p,n} & p=0 & p=1 \\
\end{array}, 
\end{align}
\begin{align}
&d^1_{1,0} = 1, \qquad d^1_{1,4} = 2.
\end{align}
The homology of $d^1$ gives the $E^2$-page 
\begin{align}
\begin{array}{ll|ccccc}
{\rm CI} & n=7 & 0 & 0\\
{\rm C} & n=6& \Z & 0\\
{\rm CII} & n=5& 0 & 0\\
{\rm AII} & n=4& \Z_2 & 0\\ 
{\rm DIII} & n=3 & 0 & 0\\
{\rm D} & n=2 & \Z & \Z_2 \\ 
{\rm BDI} & n=1 & 0 & \Z_2 \\ 
{\rm AI} & n=0 & 0 & 0 \\
\hline
& E^2_{p,n} & p=0 & p=1 \\
\end{array}
\end{align}
This is the limiting page. 
We got the classification of 1st and 2nd order SPT phases. 
Except for class D ($n=2$) the $K$-groups $K_n^{\Z_2^T \times \Z_2^R}(\R,\p \R)$ have been fixed. 

For class D, there exist two possible group extensions for the short exact sequence 
\begin{align}
0 \to \underbrace{\Z}_{E^2_{0,2}} \to K_2^{\Z_2^T \times \Z_2^R}(\R,\p \R) \to \underbrace{\Z_2}_{E^2_{1,1}} \to 0. 
\label{eq:1d_free_d}
\end{align}
The symmetry class for $n=2$ is the same one discussed in Sec.~\ref{sec:1d fermions}. 
Therefore, we conclude that the group extension (\ref{eq:1d_free_d}) is nontrivial and the $K$-group is $K^{\Z_2^T \times \Z_2^R}_2(\R,\p \R) = \Z$, which is consistent with the result by the $K$-cohomology~\cite{ChiuYaoRyu,MorimotoFurusaki,KS14}.

\subsection{$2d$ Real AZ classes with $C_2$-rotation symmetry}
\label{sec:free_2d_real_c2}
Let us consider $2d$ spinless free fermions with antiunitary symmetry of real AZ classes and $C_2$-rotation symmetry $C_2: (x,y) \mapsto (-x,-y)$ which commutes with TRS $T$ and/or PHS $C$. 
For complex fermion creation and annihilation operators, the symmetries are summarized as 
\begin{align}
&T \psi^{\dag}(x,y) T^{-1} = \psi^{\dag}(x,y) {\cal U}_T, \qquad T^2 \in \{1,-1\}
\end{align}
and/or
\begin{align}
&C \psi(x,y) C^{-1} = \psi^{\dag}(x,y) {\cal U}_C, \qquad C^2 \in \{1,-1\} 
\end{align}
and the $C_2$-rotation symmetry 
\begin{align}
&C_2 \psi^{\dag}(x,y) C_2^{-1} = \psi^{\dag}(-x,-y) {\cal U}_{C_2}, \qquad C_2^2 = 1,  \\
&TC_2=C_2T, \qquad CC_2=C_2C.
\end{align}
A $C_2$-symmetric cell decomposition of the infinite $2d$ real space is shown in Fig.~\ref{fig:cell_inversion}[b].
The $C_2$-rotation symmetry remains only at the $0$-cell $A$, the rotation center. 
The $E^1$-page and the first derivative $d^1$ are straightforwardly determined to be 
\begin{align}
\begin{array}{ll|ccccc}
{\rm CI} & n=7 & 0 & 0&0\\
{\rm C} & n=6& 0 & 0&0\\
{\rm CII} & n=5& 0 & 0&0\\
{\rm AII} & n=4& (2\Z)^{\times 2} & 2\Z & 2\Z \\ 
{\rm DIII} & n=3 & 0 & 0&0\\
{\rm D} & n=2 & (\Z_2)^{\times 2} & \Z_2 & \Z_2 \\ 
{\rm BDI} & n=1 & (\Z_2)^{\times 2} & \Z_2 & \Z_2 \\ 
{\rm AI} & n=0 & (\Z)^{\times 2} & \Z & \Z \\
\hline
& E^1_{p,n} & p=0 & p=1 & p=2 \\
\end{array}
\end{align}
\begin{align}
&d^1_{1,0} = d^1_{1,1} = d^1_{1,2} = d^1_{1,4} = (1,1), \\ 
&d^1_{2,0} = d^1_{2,1} = d^1_{2,2} = d^1_{2,4} = 0.
\end{align}
Taking the homology of $d^1$, we get the $E^2$-page 
\begin{align}
\begin{array}{ll|ccccc}
{\rm CI} & n=7 & 0 & 0&0\\
{\rm C} & n=6& 0 & 0&0\\
{\rm CII} & n=5& 0 & 0&0\\
{\rm AII} & n=4& 2\Z & 0 & 2\Z \\ 
{\rm DIII} & n=3 & 0 & 0&0\\
{\rm D} & n=2 & \Z_2 & 0 & \Z_2 \\ 
{\rm BDI} & n=1 & \Z_2 & 0 & \Z_2 \\ 
{\rm AI} & n=0 & \Z & 0 & \Z \\
\hline
& E^2_{p,n} & p=0 & p=1 & p=2 \\
\end{array}
\end{align}
Comparing this table with the known result of the $K$-cohomology group~\cite{KS14}, we find that the second differentials $d^2_{2,0}$ and $d^2_{2,1}$ must be nontrivial.
The second differentials $d^2_{2,0}$ and $d^2_{2,1}$ are the same ones as $d^2_{2,-2}$ and $d^{2,-1}$ we have computed in Sec.~\ref{sec:2d_fermion_inv_even}, where the $C_2$ rotation square is the identity. 
The homology of $d^2$ gives the $E^3 = E^{\infty}$-page 
\begin{align}
\begin{array}{ll|ccccc}
{\rm CI} & n=7 & 0 & 0&0\\
{\rm C} & n=6& 0 & 0&0\\
{\rm CII} & n=5& 0 & 0&0\\
{\rm AII} & n=4& 2\Z & 0 & 2\Z \\ 
{\rm DIII} & n=3 & 0 & 0&0\\
{\rm D} & n=2 & 0 & 0 & \Z_2 \\ 
{\rm BDI} & n=1 & 0 & 0 & 0 \\ 
{\rm AI} & n=0 & \Z & 0 & 2\Z \\
\hline
& E^3_{p,n} & p=0 & p=1 & p=2 \\
\end{array}
\end{align}
The even integers of $E^3_{0,2}$ reflect that an odd Chern number is forbidden in even parity class D superconductors. 
Except for class AII insulators, the $K$-groups $K^{\Z_2^T \times \Z_2^{C_2}}_n(\R^2,\p \R^2)$ were fixed. 

For class AII, the $K$-group fits into the short exact sequence 
\begin{align}
0 \to \underbrace{2 \Z}_{{\rm Fermion\ number\ at\ the\ rotation\ center}} \to K^{\Z_2^T \times \Z_2^{C_2}}_4(\R^2) \to \underbrace{\Z_2}_{\rm Quantum\ spin\ Hall\ state} \to 0. 
\label{eq:extension_aii_2d_c2}
\end{align}
There exist two inequivalent group extensions 
\begin{align}
&\mbox{(i)}\qquad 0 \to 2 \Z \xrightarrow{n \mapsto (n,0)} 2 \Z \times \Z_2 \xrightarrow{(n,m) \mapsto m} \Z_2 \to 0, \\
&\mbox{(ii)}\qquad 0 \to 2 \Z \xrightarrow{n \mapsto 2n} \Z \xrightarrow{n \mapsto n} \Z_2 \to 0.
\end{align}
To fix the group extension (\ref{eq:extension_aii_2d_c2}), we ask if the double stack of $C_2$-symmetric quantum spin Hall states is adiabatically equivalent to a Kramers pair at the rotation center or not. 
Two layered quantum spin Hall states are modeled as 
\begin{align}
&H = -i \p_x s_x \sigma_x \mu_0-i \p_y s_y \sigma_x \mu_0+ m \sigma_z \mu_0+ M_1(x,y) \sigma_y \mu_y + M_2(x,y) s_z \sigma_x \mu_y, \\
&T = s_y {\cal K}, \qquad C_2 = \sigma_z, 
\end{align}
where $s_{\alpha}$, $\sigma_{\alpha}$, and $\mu_{\alpha}$ ($\alpha \in \{0,x,y,z\}$) are Pauli Matrices for spin, orbital, and layer degrees of freedoms, respectively. 
To preserve the $C_2$-rotation symmetry, the spatially varying mass terms should satisfy $M_j(-x,-y) = - M_j(x,y)$ for $j = 1,2$. 
Thanks to the mass gap $m$, the mass texture of $M_{j \in 1,2}(x,y)$ can be turned on adiabatically. 
In the presence of a single vortex of the mass texture $(M_1(x,y),M_2(x,y))$ with the $C_2$-rotation symmetry, there appears a localized ingap doublet with the effective Hamiltonian $H_{\rm eff} = m s_0$ with $C_2=s_0$ and $T=s_y {\cal K}$, i.e.\ a Kramers pair.  
Therefore, the group extension (\ref{eq:extension_aii_2d_c2}) is nontrivial and the $K$-group for class AII is $K^{\Z_2^T \times \Z_2^{C_2}}_4 (\R^2,\p \R^2) \cong \Z$, which is consistent with the known result by the $K$-cohomology~\cite{KS14}.

\subsection{$C_4 T$-rotation symmetry}
Let us consider, as the symmetry class for $n=0$, a magnetic point group symmetry $C_4 T$ composed of $C_4$-rotation $C_4: (x,y) \mapsto (-y,x)$ and a time-reversal transformation $T$. 
We also assume $(C_4 T)^4 = -1$ as for spinful systems.
According to the recipe in Sec~\ref{sec:k_homology}, the symmetries for $n>0$ read as 
\begin{align}
&n=0: \qquad 
\left\{\begin{array}{ll}
(C_4T) H(x,y) (C_4 T)^{-1} = H(-y,x) \\
(C_4 T)^4=-1, \\
\end{array}\right. \\
&n=1: \qquad 
\left\{\begin{array}{ll}
(C_4T) H(x,y) (C_4 T)^{-1} = H(-y,x) \\
\Gamma H(x,y) \Gamma^{-1} = -H(x,y) \\
(C_4 T)^4=-1, \\
\Gamma (C_4 T) = (C_4 T) \Gamma.
\end{array}\right. \\
&n=2: \qquad 
\left\{\begin{array}{ll}
(C_4T) H(x,y) (C_4 T)^{-1} = -H(-y,x) \\
(C_4 T)^4=-1, \\
\end{array}\right. \\
&n=3: \qquad 
\left\{\begin{array}{ll}
(C_4T) H(x,y) (C_4 T)^{-1} = -H(-y,x) \\
\Gamma H(x,y) \Gamma^{-1} = -H(x,y) \\
(C_4 T)^4=-1, \\
\Gamma (C_4 T) = -(C_4 T) \Gamma.
\end{array}\right. 
\end{align}
Here, we have used that in the presence of a pair of chiral symmetries $\Gamma_1 = \sigma_x$ and $\Gamma_2 = \sigma_y$ a Hamiltonian takes a form of $H = \wt H \otimes \sigma_z$ and the symmetry is recast as for $\wt H$~\cite{KSGeneric}.
The symmetry for $n=4$ is the same as $n=0$, meaning the periodicity $n \sim n+4$. 

A $C_4 T$-symmetric cell decomposition of the infinite real space $\R^2$ is shown as follows. 
$$
\includegraphics[width=\linewidth, trim=0cm 11cm 0cm 0cm]{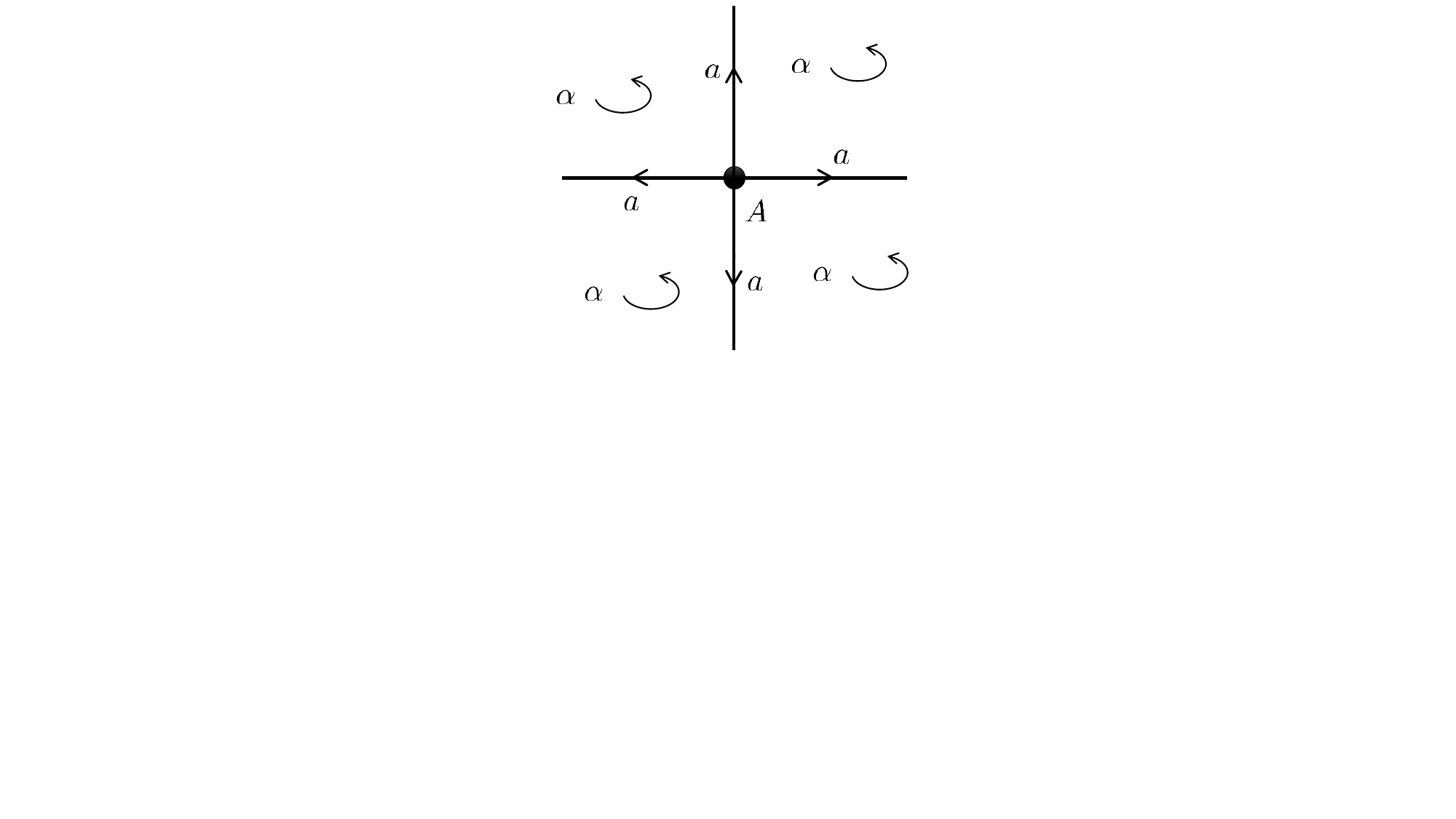}
$$
The $E_1$-page and the 1st differentials are found as 
\begin{align}
\begin{array}{l|ccccc}
n=3 & 0 & 0& 0\\
n=2 & \Z & \Z & \Z \\ 
n=1 & 0 & 0 & 0 \\ 
n=0 & \Z & \Z & \Z \\
\hline
E^1_{p,n} & p=0 & p=1 & p=2 \\
\end{array}
\end{align}
\begin{align}
d^1_{0,1} = 2, \qquad d^1_{1,2} = 0, \qquad 
d^1_{2,0} = 0, \qquad d^1_{2,2} = 2.
\end{align}
Some comments are in order. 
The 1st differentials $d^1_{1,n}$ from 1-cells to the $0$-cell are given by the induced representation. 
The 1st differential $d^1_{2,0}$ ($d^1_{2,2}$) represents how the chiral edge states of Chern insulators in 2-cells contribute to the anomalous edge states on 1-cells in the symmetry class $n=2$ ($n=0$). 
Since for $n=2$ $(n=0)$ the $C_4T$-rotation is a particle-hole (time-reversal) type, chiral edge states cancel out (sum up) at a $1$-cell. 
See the following figure. 
$$
\includegraphics[width=\linewidth, trim=0cm 9cm 0cm 0cm]{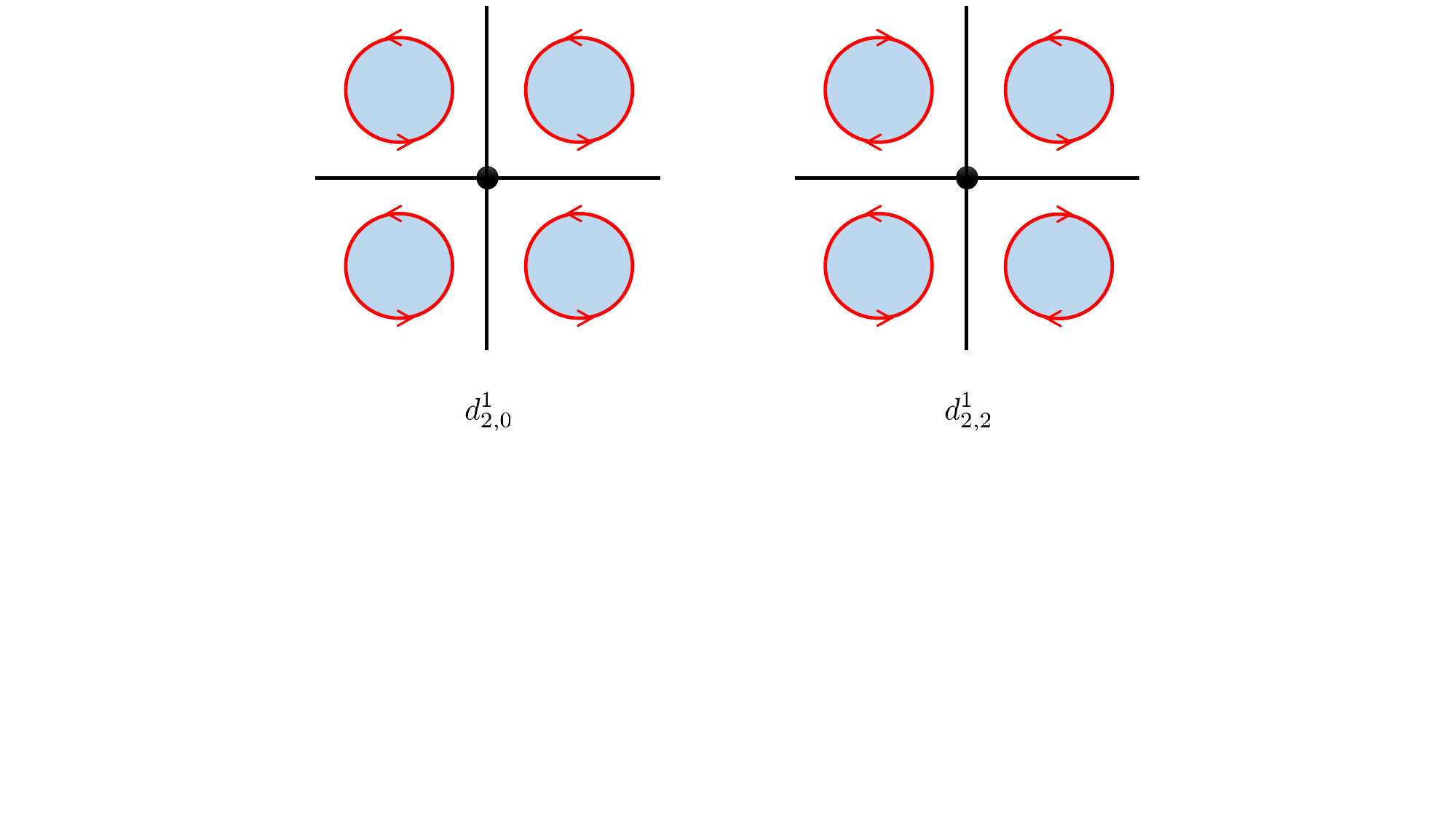}
$$
We have the $E^2$-page 
\begin{align}
\begin{array}{l|ccccc}
n=3 & 0 & 0& 0\\
n=2 & \Z & \Z_2 & 0 \\ 
n=1 & 0 & 0 & 0 \\ 
n=0 & \Z_2 & 0 & \Z \\
\hline
E^2_{p,n} & p=0 & p=1 & p=2 \\
\end{array}
\end{align}
and this is the limit. 
We find that there is a second-order topological insulator $E^{\infty}_{1,2} = \Z_2$ in the symmetry class $n=3$. 
$E^{\infty}_{1,2}$ also describes the second-order anomaly in the symmetry class $n=0$, the magnetic $4$-fold rotation symmetry $C_4T$, where the anomalous edge state is localized at a $1$-skeleton~\cite{SchindlerHigher}: 
$$
\includegraphics[width=\linewidth, trim=0cm 13cm 0cm 0cm]{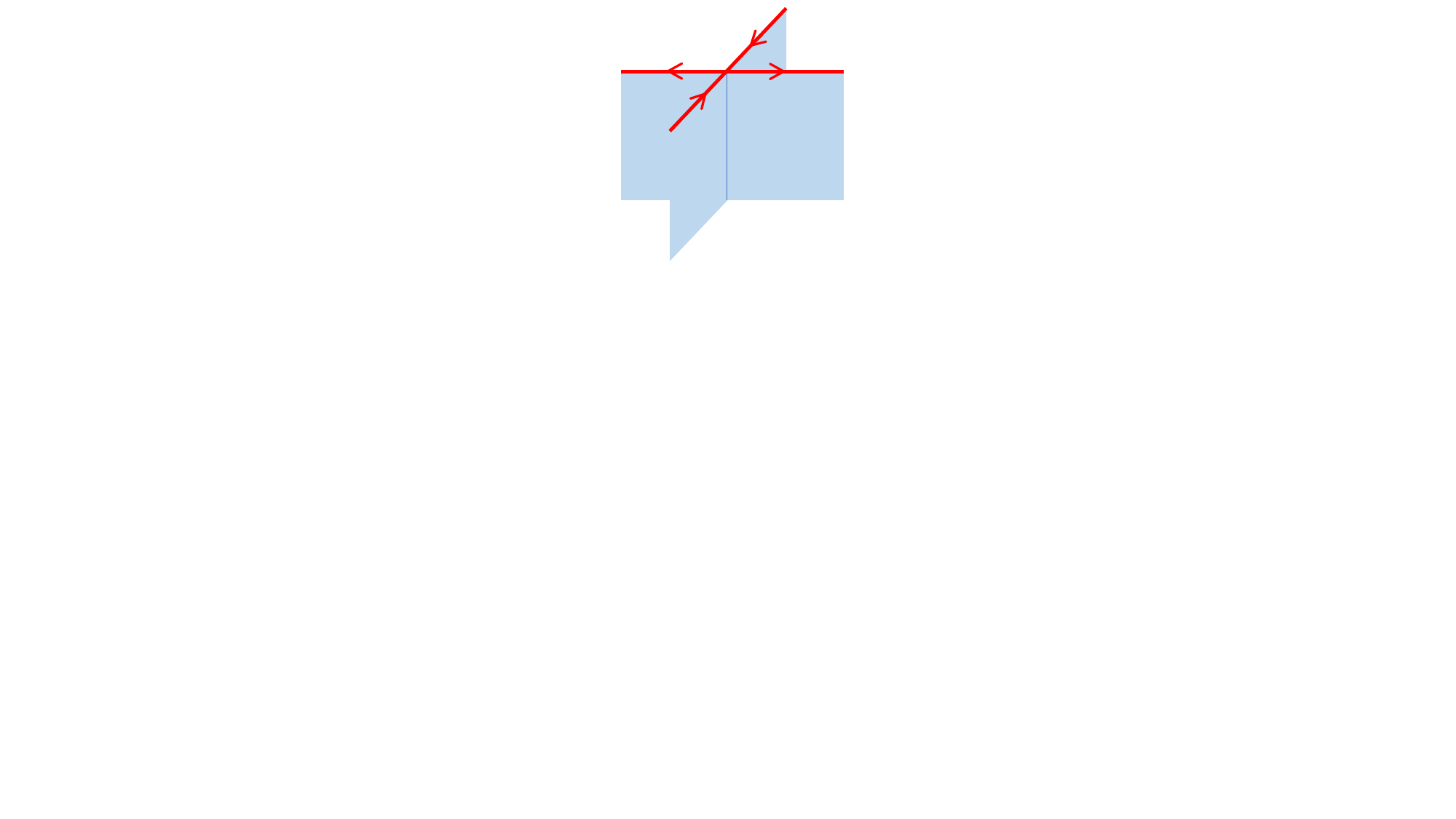}
$$

\subsection{Glide symmetry}

Let us consider $2d$ free fermions with a spatial $\Z$ symmetry generated by the glide transformation $G: (x,y) \mapsto (x+1/2,-y)$. 
Also, we assume a TRS $T$, $T^2=1$, commuting with the glide symmetry as the symmetry class for $n=0$. 
We shall compute the classification of free fermion SPT phases with glide and AZ symmetries that is represented by the $K$-homology $K^{\Z_2^T \times \Z^G}_n(\R^2,\R \times \{\pm \infty\})$ with $\R \times \{\pm\infty\}$ the infinity at $y= \pm \infty$. 
The glide-symmetric filtration of the $2d$ space $\R^2$ is shown as follows:
$$
\includegraphics[width=\linewidth, trim=0cm 12cm 0cm 0cm]{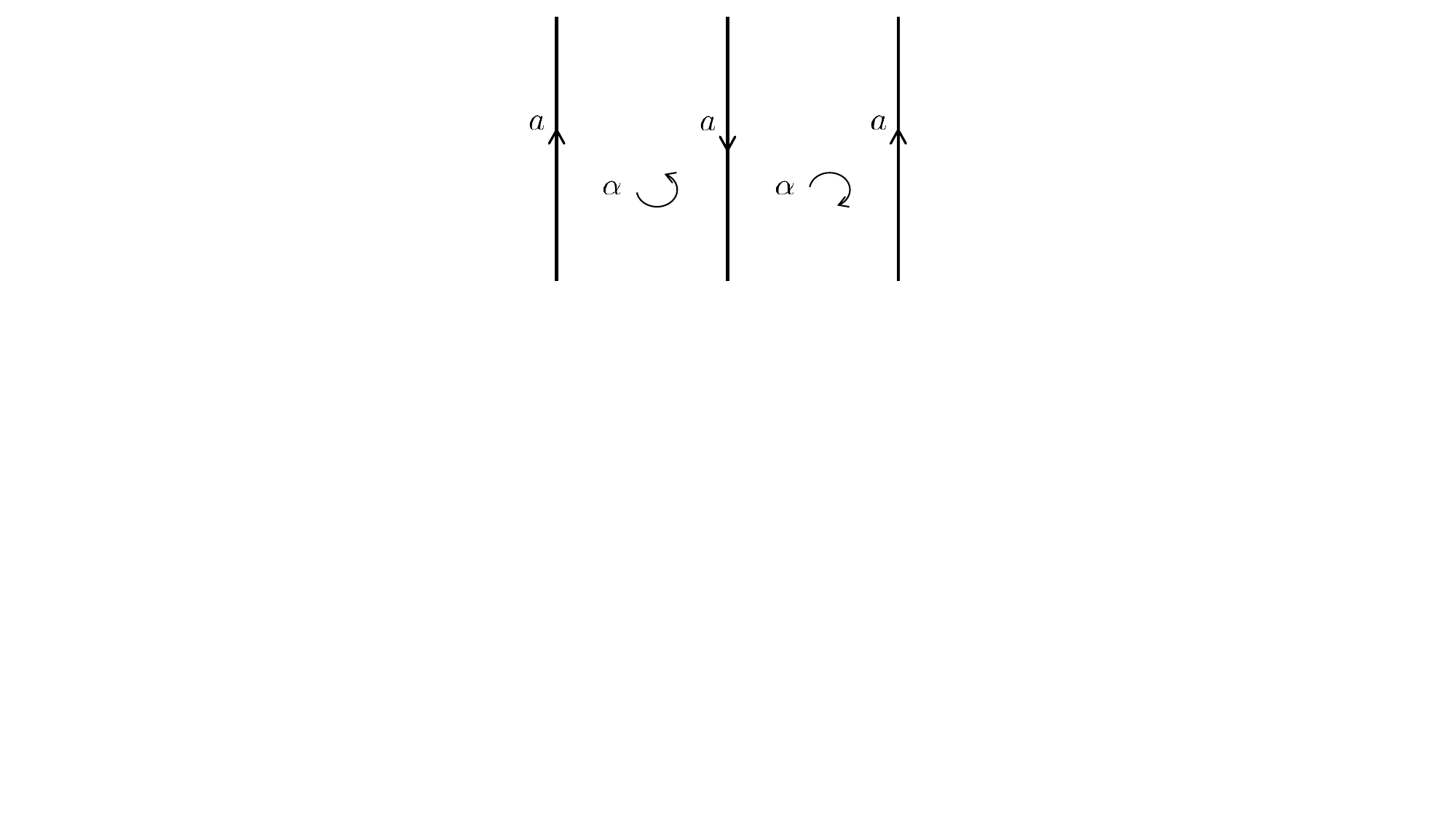}
$$
We have shown a unit cell along the $x$-direction.
Notice that there is no 0-cell.
The $E^1$-page and the first differentials are 
\begin{align}
\begin{array}{ll|ccccc}
{\rm CI} & n=7 & 0 & 0&0\\
{\rm C} & n=6& 0 & 0&0\\
{\rm CII} & n=5& 0 & 0&0\\
{\rm AII} & n=4& 0 & 2\Z & 2\Z \\ 
{\rm DIII} & n=3 & 0 & 0&0\\
{\rm D} & n=2 & 0 & \Z_2 & \Z_2 \\ 
{\rm BDI} & n=1 & 0 & \Z_2 & \Z_2 \\ 
{\rm AI} & n=0 & 0 & \Z & \Z \\
\hline
& E^1_{p,n} & p=0 & p=1 & p=2 \\
\end{array}
\end{align}
\begin{align}
d^1_{2,0}=d^1_{2,4}=2, \qquad d^1_{2,1}=d^1_{2,2}=0. 
\end{align}
Here, $d^1_{2,0}=d^1_{2,4}=2$ is because the glide reflection changes the chirality of the Chern insulator. 
Taking the homology of $d^1$ gives the $E^2$-page 
\begin{align}
\begin{array}{ll|ccccc}
{\rm CI} & n=7 & 0 & 0&0\\
{\rm C} & n=6& 0 & 0&0\\
{\rm CII} & n=5& 0 & 0&0\\
{\rm AII} & n=4& 0 & \Z_2 & 0 \\ 
{\rm DIII} & n=3 & 0 & 0&0\\
{\rm D} & n=2 & 0 & \Z_2 & \Z_2 \\ 
{\rm BDI} & n=1 & 0 & \Z_2 & \Z_2 \\ 
{\rm AI} & n=0 & 0 & \Z_2 & 0 \\
\hline
& E^2_{p,n} & p=0 & p=1 & p=2 \\
\end{array}
\end{align}
This is the limiting page $E^{\infty} = E^2$. 
$E^{\infty}_{2-k,-2+k+n}$ represents the classification of $k$-th order topological insulators/superconductors for the symmetry class $n$, and at the same time, the classification of $k$-th order anomalies for the symmetry class $n+1$. 
Except for $n=3$, the $K$-homology groups $K^{\Z_2^T \times \Z^G}_n(\R^2,\R \times \{\pm \infty\})$ are fixed. 

For $n=3$, the $K$-group fits into the short exact sequence 
\begin{align}
0 \to \underbrace{\Z_2}_{E^{\infty}_{1,2}} \to K^{\Z_2^T \times \Z^G}_3(\R^2,\R \times \{\pm \infty\}) \to \underbrace{\Z_2}_{E^{\infty}_{2,1}} \to 0.
\label{eq:2d_glide_aii}
\end{align}
Let us compute the group extension in the view of an anomaly in class AII system. 
The anomalous state described by $E^{\infty}_{2,1}$, the surface state on top of the $3d$ class AII topological insulator, preserving the glide symmetry is given by 
\begin{align}
H = -i \p_x s_y -i \p_y s_x, \qquad 
T=s_y {\cal K}, \qquad 
G=i s_y.
\end{align}
Doubling the anomalous states allows the surface to have a texture mass term 
\begin{align}
H = -i \p_x s_y \mu_0 -i \p_y s_x \mu_0 + m(x) s_z \mu_y 
\end{align}
with $\mu_{\alpha} (\alpha \in \{0,x,y,z\})$ is the Pauli matrices for two layers.
The glide symmetry implies $m(x+1/2) = -m(x)$, that is, enforcing a kink per the interval $[x,x+1/2]$. 
The low-energy effective Hamiltonian localized at a kink is a helical edge state that is in the same anomaly as $E^{\infty}_{1,2}$. 
Therefore, the extension (\ref{eq:2d_glide_aii}) is nontrivial and we have $K^{\Z_2^T \times \Z^G}_3(\R^2,\R \times \{\pm \infty\}) = \Z_4$. 
This is consistent with the classification by the $K$-cohomology~\cite{KS16}.

\subsection{3d complex AZ classes with time-reversal inversion symmetry}
\label{sec:3d complex AZ classes with time-reversal inversion symmetry}
Here we present an example where a third differential $d^3$ becomes nontrivial.
Let us consider, as the symmetry class for $n=0$, $3d$ complex fermions with the time-reversal inversion symmetry $TI$, 
\begin{align}
(TI) \psi^{\dag}(x,y,z) (TI)^{-1} = \psi^{\dag}(-x,-y,-z) {\cal U}_{TI}, \qquad 
(TI)^2 = 1, 
\end{align}
Fig.~\ref{fig:cell_inversion}[c] showed an inversion symmetric cell decomposition of the real space $\R^3$. 
The $E^1$-page and symmetry classes are summarized as 
\begin{align}
\begin{array}{l|cccc|ll}
n=7 & 0 & 0&0& 0& (TI)^2=1 & (CI)^2=-1\\
n=6& 0 & \Z&\Z&\Z & & (CI)^2=-1\\
n=5& 0 & 0&0 & 0& (TI)^2=-1 & (CI)^2=-1\\
n=4& 2\Z & \Z & \Z &\Z & (TI)^2=-1 & \\ 
n=3 & 0 & 0&0 & 0& (TI)^2=-1 & (CI)^2=1 \\
n=2 & \Z_2 & \Z & \Z &\Z & & (CI)^2=1 \\ 
n=1 & \Z_2 & 0 & 0 & 0 & (TI)^2 & (CI)^2=1 \\ 
n=0 & \Z & \Z & \Z & \Z & (TI)^2=1 & \\
\hline
E^1_{p,n} & p=0 & p=1 & p=2 & p=3 & TI & CI \\
\end{array}
\end{align}
Here, $CI$ is particle-hole inversion symmetry. 
The first differentials are found to be 
\begin{align}
&d^1_{1,0} = 2, \qquad d^1_{1,2}=1, \qquad d^1_{1,4}=1, \\
&d^1_{2,0}=d^1_{2,4}=0, \qquad d^1_{2,2}=d^1_{2,6}=2, \\
&d^1_{3,0}=d^1_{3,4}=2, \qquad d^1_{3,2}=d^1_{3,6}=0.
\end{align}
The $E^2$-page is 
\begin{align}
\begin{array}{l|cccc|ll}
n=7 & 0 & 0&0& 0& (TI)^2=1 & (CI)^2=-1\\
n=6& 0 & \Z_2&0&\Z & & (CI)^2=-1\\
n=5& 0 & 0&0 & 0& (TI)^2=-1 & (CI)^2=-1\\
n=4& 0 & 0 & \Z_2 &0 & (TI)^2=-1 & \\ 
n=3 & 0 & 0&0 & 0& (TI)^2=-1 & (CI)^2=1 \\
n=2 & 0 & 2\Z & 0 &\Z & & (CI)^2=1 \\ 
n=1 & \Z_2 & 0 & 0 & 0 & (TI)^2 & (CI)^2=1 \\ 
n=0 & \Z_2 & 0 & \Z_2 & 0 & (TI)^2=1 & \\
\hline
E^2_{p,n} & p=0 & p=1 & p=2 & p=3 & TI & CI \\
\end{array}
\end{align}
In this table, $d^2_{2,0}:E^2_{2,0}\to E^2_{0,1}$ can be nontrivial and is found to be nontrivial $d^2_{2,0}=1$ in a way similar to $d^2_{2,1}$ in Sec.~\ref{sec:free_2d_real_c2}: 
$E^2_{2,0}$ represents a Chern insulator with an odd Chern number in the 2-cell $\alpha$, and the symmetry $(CI)^2=1$ for $n=2$ implies that the boundary condition of the chiral edge state is periodic, yielding an exact zero energy state generating $E^2_{0,1}$. 

The homology of $d^2$ gives the $E^3$-page 
\begin{align}
\begin{array}{l|cccc|ll}
n=7 & 0 & 0&0& 0& (TI)^2=1 & (CI)^2=-1\\
n=6& 0 & \Z_2&0&\Z & & (CI)^2=-1\\
n=5& 0 & 0&0 & 0& (TI)^2=-1 & (CI)^2=-1\\
n=4& 0 & 0 & \Z_2 &0 & (TI)^2=-1 & \\ 
n=3 & 0 & 0&0 & 0& (TI)^2=-1 & (CI)^2=1 \\
n=2 & 0 & 2\Z & 0 &\Z & & (CI)^2=1 \\ 
n=1 & 0 & 0 & 0 & 0 & (TI)^2 & (CI)^2=1 \\ 
n=0 & \Z_2 & 0 & 0 & 0 & (TI)^2=1 & \\
\hline
E^3_{p,n} & p=0 & p=1 & p=2 & p=3 & TI & CI \\
\end{array}
\end{align}
In this table, $d^3_{3,6}:E^3_{3,6} \to E^3_{0,0}$ can be nontrivial. 
Let us derive $d^3_{3,6}$ in the viewpoint of an adiabatic pump in the symmetry class of $n=0$. 
$E^3_{3,6}$ represents the creation of a Chern insulator with a unit Chern number on a sphere $S^2$ inside the north and south $3$-cells. 
Since $d^1_{3,6}=d^2_{3,6}=0$, these Chern insulators can glue together at 2- and 1-cells. 
Hence, the problem is if the Chern insulator enclosing the inversion center preserving the time-reversal inversion symmetry with $(TI)^2=1$ has a unit $U(1)$ charge or not.
For complex fermions on a 2-sphere, it is known that in the presence of a monopole charge $m_g$ inside the sphere, the $z$-component of the angular momentum is quantized into (i) odd integers if $m_g \in 2 \Z$ and (ii) even integers if $m_g \in 2\Z+1$. 
On the one hand, the symmetry algebra $(TI)^2=1$ implies that the $2 \pi$-rotation is the identity, i.e.\ the angular momentum should be an even integer. 
Therefore, the symmetry algebra $(TI)^2 = 1$ enforces an odd monopole charge $m_g \in 2\Z+1$ inside the 2-sphere. 
Then, from the quantum Hall effect, on the $2$-sphere with an odd monopole charge $m_g$, the Chern insulator with a unit Chern number has an odd $U(1)$ charge $m_g$~\cite{KS_Antiunitary}, the generator of $E^3_{0,0} = \Z_2$. 
Therefore, we conclude that $d^3_{3,6}=1$. 

We arrived at the $E^4 = E^{\infty}$-page
\begin{align}
\begin{array}{l|cccc|ll}
n=7 & 0 & 0&0& 0& (TI)^2=1 & (CI)^2=-1\\
n=6& 0 & \Z_2&0&2\Z & & (CI)^2=-1\\
n=5& 0 & 0&0 & 0& (TI)^2=-1 & (CI)^2=-1\\
n=4& 0 & 0 & \Z_2 &0 & (TI)^2=-1 & \\ 
n=3 & 0 & 0&0 & 0& (TI)^2=-1 & (CI)^2=1 \\
n=2 & 0 & 2\Z & 0 &\Z & & (CI)^2=1 \\ 
n=1 & 0 & 0 & 0 & 0 & (TI)^2 & (CI)^2=1 \\ 
n=0 & 0 & 0 & 0 & 0 & (TI)^2=1 & \\
\hline
E^4_{p,n} & p=0 & p=1 & p=2 & p=3 & TI & CI \\
\end{array}
\end{align}
The $K$-groups and the classification of higher-order topological insulators/superconductors are consistent with \cite{KS14, TB18}.

\subsection{3d complex AZ classes with space group $P222$}
In this and subsequent two sections we illustrate the AHSS for $3d$ space group symmetry. 

\begin{figure}[!]
\begin{center}
\includegraphics[width=0.7\linewidth, trim=0cm 18cm 0cm 0cm]{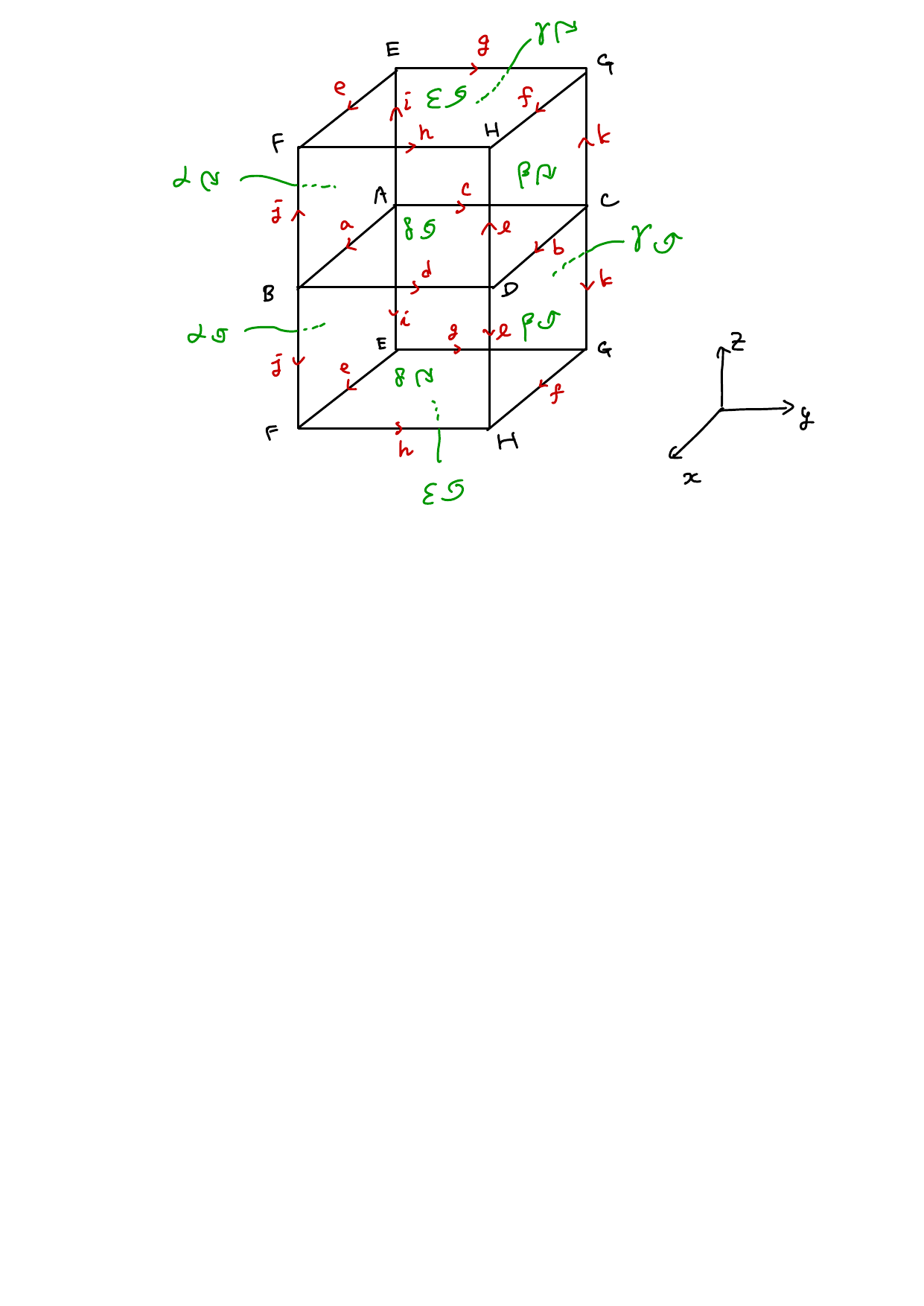}
\end{center}
\caption{
A $P222$-equivariant cell decomposition of the real space $\R^3$.
The figure shows the independent region which is one-quarter of a unit cell. 
}
\label{fig:P222}
\end{figure}
Let us consider the $K$-homology $K^{G}_n(\R^3)$ for complex AZ classes with the space group $G = P222$ (No.\ 16). 
The space group $P222$ is composed of the primitive $3d$ lattice translations $T_{\mu}: \bm{x} \mapsto \bm{x} + \hat x_{\mu} (\mu = x,y,z)$ and the point group $D_2 = \{1,2_{100},2_{010},2_{001}\}$ which is generated by $2_{100}:(x,y,z)\mapsto (x,-y,-z)$ and $2_{010}:(x,y,z)\mapsto(-x,y,-z)$.
We here consider spinless systems. 
A $P222$-symmetric cell decomposition of the real space $\R^3$ is shown in Fig.~\ref{fig:P222}, and it is composed of 
\begin{equation}\begin{split}
&\mbox{0-cells} = \{A,B,C,D,E,F,G,H\}, \\
&\mbox{1-cells} = \{a,b,c,d,e,f,g,h,i,j,k,l\}, \\
&\mbox{2-cells} = \{\alpha,\beta,\gamma,\delta,\epsilon\}, \\
&\mbox{3-cells} = \{vol\}.
\end{split}\end{equation}
The little groups are $D_2$ at $0$-cells and $\Z_2$ at 1-cells. 
Counting the number of irreps.\ we have the $E^1$-page 
\begin{align}
\begin{array}{ll|cccc}
{\rm AIII} & n=1 & 0 & 0 & 0 & 0 \\ 
{\rm A} & n=0 & (\Z^{\times 4})^{\times 8} & (\Z^{\times 2})^{\times 12} & \Z^5 & \Z \\
\hline
& E^1_{p,n} & p=0 & p=1 & p=2 & p=3 \\
\end{array}
\end{align}
From the induced representations, the first differentials are straightforwardly given as 
{\scriptsize
\arraycolsep=0.6pt
\begin{align}
d^1_{1,0}
= \begin{array}{rr|rr|rr|rr|rr|rr|rr|rr|rr|rr|rr|rr||cccccccc}
a&&b&&c&&d&&e&&f&&g&&h&&i&&j&&k&&l&&&\\
1&t&1&t&1&t&1&t&1&t&1&t&1&t&1&t&1&t&1&t&1&t&1&t&&\\
\hline
-1&0&&&-1&0&&&&&&&&&&&-1&0&&&&&&&1&A\\
0&-1&&&-1&0&&&&&&&&&&&0&-1&&&&&&&t_1&\\
-1&0&&&0&-1&&&&&&&&&&&0&-1&&&&&&&t_2&\\
0&-1&&&0&-1&&&&&&&&&&&-1&0&&&&&&&t_1t_2&\\
\hline
1&0&&&&&-1&0&&&&&&&&&&&-1&0&&&&&1&B\\
0&1&&&&&-1&0&&&&&&&&&&&0&-1&&&&&t_1&\\
1&0&&&&&0&-1&&&&&&&&&&&0&-1&&&&&t_2&\\
0&1&&&&&0&-1&&&&&&&&&&&-1&0&&&&&t_1t_2&\\
\hline
&&-1&0&1&0&&&&&&&&&&&&&&&-1&0&&&1&C\\
&&0&-1&1&0&&&&&&&&&&&&&&&0&-1&&&t_1&\\
&&-1&0&0&1&&&&&&&&&&&&&&&0&-1&&&t_2&\\
&&0&-1&0&1&&&&&&&&&&&&&&&-1&0&&&t_1t_2&\\
\hline
&&1&0&&&1&0&&&&&&&&&&&&&&&-1&0&1&D\\
&&0&1&&&1&0&&&&&&&&&&&&&&&0&-1&t_1&\\
&&1&0&&&0&1&&&&&&&&&&&&&&&0&-1&t_2&\\
&&0&1&&&0&1&&&&&&&&&&&&&&&-1&0&t_1t_2&\\
\hline
&&&&&&&&-1&0&&&-1&0&&&1&0&&&&&&&1&E\\
&&&&&&&&0&-1&&&-1&0&&&0&1&&&&&&&t_1&\\
&&&&&&&&-1&0&&&0&-1&&&0&1&&&&&&&t_2&\\
&&&&&&&&0&-1&&&0&-1&&&1&0&&&&&&&t_1t_2&\\
\hline
&&&&&&&&1&0&&&&&-1&0&&&1&0&&&&&1&F\\
&&&&&&&&0&1&&&&&-1&0&&&0&1&&&&&t_1&\\
&&&&&&&&1&0&&&&&0&-1&&&0&1&&&&&t_2&\\
&&&&&&&&0&1&&&&&0&-1&&&1&0&&&&&t_1t_2&\\
\hline
&&&&&&&&&&-1&0&1&0&&&&&&&1&0&&&1&G\\
&&&&&&&&&&0&-1&1&0&&&&&&&0&1&&&t_1&\\
&&&&&&&&&&-1&0&0&1&&&&&&&0&1&&&t_2&\\
&&&&&&&&&&0&-1&0&1&&&&&&&1&0&&&t_1t_2&\\
\hline
&&&&&&&&&&1&0&&&1&0&&&&&&&1&0&1&H\\
&&&&&&&&&&0&1&&&1&0&&&&&&&0&1&t_1&\\
&&&&&&&&&&1&0&&&0&1&&&&&&&0&1&t_2&\\
&&&&&&&&&&0&1&&&0&1&&&&&&&1&0&t_1t_2&\\
\hline
\end{array}, 
\end{align}
\begin{align}
d^1_{2,0}
= 
\begin{array}{r|r|r|r|r||cc}
\alpha&\beta&\gamma&\delta&\epsilon&&\\
\hline
1&&&&&1&a\\
1&&&&&t&\\
\hline
&1&&&&1&b\\
&1&&&&t&\\
\hline
&&-1&&&1&c\\
&&-1&&&t&\\
\hline
&&&1&&1&d\\
&&&1&&t&\\
\hline
-1&&&&1&1&e\\
-1&&&&1&t&\\
\hline
&-1&&&-1&1&f\\
&-1&&&-1&t&\\
\hline
&&1&&-1&1&g\\
&&1&&-1&t&\\
\hline
&&&-1&1&1&h\\
&&&-1&1&t&\\
\hline
-1&&1&&&1&i\\
-1&&1&&&t&\\
\hline
1&&&-1&&1&j\\
1&&&-1&&t&\\
\hline
&-1&-1&&&1&k\\
&-1&-1&&&t&\\
\hline
&1&&1&&1&l\\
&1&&1&&t&\\
\hline
\end{array}, 
\end{align}
}
and $d^1_{3,0}=0$. 
Here, $\{1,t\}$ in 1-cells meant trivial and the sign irreps.\ of $\Z_2$, and $\{1,t_1,t_2,t_1t_2\}$ in $0$-cells are four irreps.\ of $D_2$. 
We can check that $d^1_{2,0} d^1_{3,0}=d^1_{1,0}d^1_{2,0}=0$. 
Taking the homology of $d^1$ gives the $E^2$-page 
\begin{align}
\begin{array}{ll|cccc}
{\rm AIII} & n=1 & 0 & 0 & 0 & 0 \\ 
{\rm A} & n=0 & \Z^{\times 13} \times \Z_2 & 0 & 0 & \Z \\
\hline
& E^2_{p,n} & p=0 & p=1 & p=2 & p=3 \\
\end{array}
\label{eq:e3_p222}
\end{align}
Because $d^2=0$, $E^2 = E^3$. 
In the $E^3$-page, the third differential $d^3_{3,0}: \Z \to \Z^{\times 13} \times \Z_2$ can be nontrivial. 

Interestingly, comparing the $E^3$-page (\ref{eq:e3_p222}) with the $E_{\infty}$-page~\cite{KSAHSS}
\begin{align}
\begin{array}{ll|cccc}
{\rm A} & n=0 & \Z^{\times 13} & \Z_2 & 0 & \Z \\
{\rm AIII} & n=1 & 0 & 0 & 0 & 0 \\ 
\hline
& E_{\infty}^{p,-n} & p=0 & p=1 & p=2 & p=3 \\
\end{array}
\end{align}
of the $K$-cohomology group $K^{-n}_{D_2}(T^3)$ isomorphic to $K^G_n(\R^3)$, we find that $d^3_{3,0}$ must remove the $\Z_2$ subgroup of $E^3_{0,0}$, otherwise the $K$-group $K^G_n(\R^3) \cong K^{0}_{D_2}(T^3) (= \Z^{\times 13}) $ has a torsion.  
We have the $E^{\infty} = E^4$-page 
\begin{align}
\begin{array}{ll|cccc}
{\rm AIII} & n=1 & 0 & 0 & 0 & 0 \\ 
{\rm A} & n=0 & \Z^{\times 13} & 0 & 0 & 2\Z \\
\hline
& E^4_{p,n} & p=0 & p=1 & p=2 & p=3 \\
\end{array}
\end{align}
Here, $E^4_{3,0} = 2 \Z$ means that a class AIII insulator putting in $3$-cells compatible with the $P222$ space group symmetry must have an even $3d$ winding number. 
Also, based on the terminology introduced in Sec.~\ref{sec:lsm_spt}, the nontrivial third differential $d^3_{3,-3}$ leads to the LSM-type theorem to enforce a nontrivial topological insulator: 
If a class AIII system composed of anomalous zero energy degrees of freedom living in $\Z_2 \subset E^2_{0,0}$ forms a fully gapped state, this state must have an odd $3d$ winding number. 
From the $E^4$-page, the $K$-groups are determined as 
\begin{align}
K^G_0(\R^3) \cong \Z^{13}, \qquad 
K^G_1(\R^3) \cong 2 \Z.
\end{align}
It should be also noticed that this result gives us the correct group extension of the $E_{\infty}$-page of the $K$-cohomology $K^{-n}_{D_2}(T^3)$ for class AIII 
\begin{align}
0 \to \underbrace{\Z}_{E_{\infty}^{3,0}} \to \underbrace{K^{-1}_{D_2}(T^3)}_{\cong \Z} \to \underbrace{\Z_2}_{E_{\infty}^{1,0}} \to 0. 
\end{align}
The $E^{\infty}$-page of the $K$-homology and the $E_{\infty}$-page for the dual $K$-cohomology are quite complementary.

\subsection{3d complex AZ classes with space group $P2_12_12_1$}
\begin{figure}[!]
\begin{center}
\includegraphics[width=0.6\linewidth, trim=0cm 11cm 0cm 0cm]{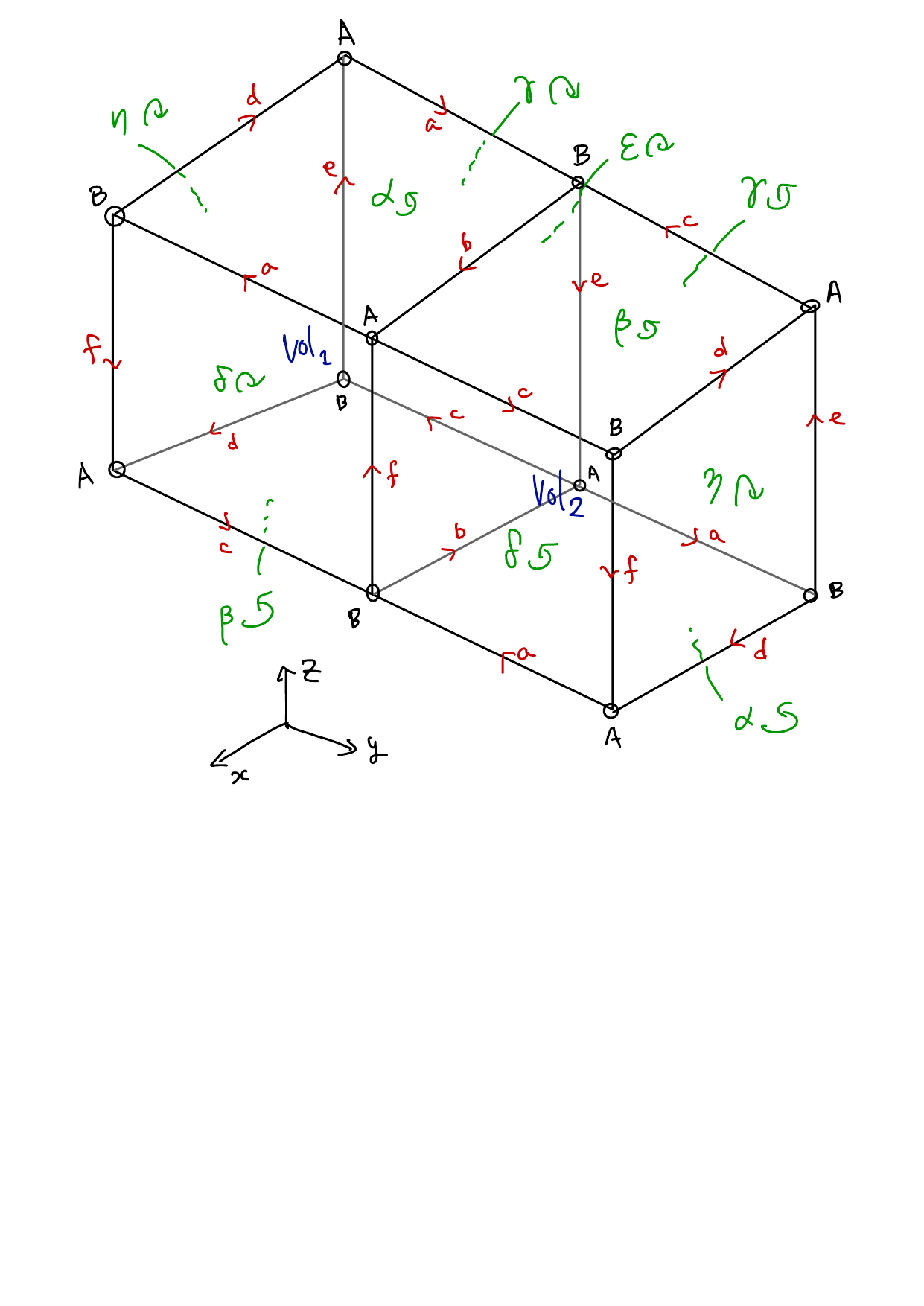}
\end{center}
\caption{
A $P2_12_12_1$-equivariant cell decomposition of the infinite space $\R^3$. 
The figure shows the independent region which is one-quarter of the unit cell. 
}
\label{fig:P212121}
\end{figure}

The next example of space group is $P2_12_12_1$ (No.\ 19) that is composed of the primitive $3d$ lattice translations and $D_2$ group action generated by 
$2_{100}:(x,y,z) \mapsto (x+1/2,-y+1/2,-z)$ and $2_{010}:(x,y,z) \mapsto (-x,y+1/2,-z+1/2)$. 
A $P2_12_12_1$-symmetric cell decomposition is in Fig.~\ref{fig:P212121} where we have shown one-quarter of the unit cell. 
$p$-cells are 
\begin{equation}\begin{split}
&\mbox{0-cells} = \{A,B\}, \\
&\mbox{1-cells} = \{a,b,c,d,e,f\}, \\
&\mbox{2-cells} = \{\alpha,\beta,\gamma,\delta,\epsilon,\eta\}, \\
&\mbox{3-cells} = \{vol_1,vol_2\}.
\end{split}\end{equation}
Because the space group $P2_12_12_1$ acts on the real space $\R^3$ freely, the irrep.\ at each $p$-cell is unique. 
We have the $E^1$-page 
\begin{align}
\begin{array}{ll|cccc}
{\rm AIII} & n=1 & 0 & 0 & 0 & 0 \\ 
{\rm A} & n=0 & \Z^{\times 2} & \Z^{\times 6} & \Z^{\times 6} & \Z^{\times 2} \\
\hline
& E^1_{p,n} & p=0 & p=1 & p=2 & p=3 \\
\end{array}
\end{align}
From the induced representations, the first differentials are given as 
\begin{align}
d^1_{1,0}
= 
\begin{array}{rrrrrr|r}
a&b&c&d&e&f&\\
\hline
1&-1&1&-1&-1&-1&A\\
-1&1&-1&1&1&1&B\\
\end{array}, 
\end{align}
\begin{align}
d^1_{2,0}
= 
\begin{array}{rrrrrr|r}
\alpha&\beta&\gamma&\delta&\epsilon&\eta&\\
\hline
-2&&1&-1&&&a\\
-1&1&&&-2&&b\\
&2&1&-1&&&c\\
-1&1&&&2&d\\
&&2&&1&-1&e\\
&&&-2&1&-1&f
\end{array}, 
\qquad 
d^1_{3,0}
= 
\begin{array}{rr|r}
vol_1&vol_2&\\
\hline
1&-1&\alpha\\
-1&1&\beta\\
1&-1&\gamma\\
-1&1&\delta\\
-1&1&\epsilon\\
1&-1&\eta
\end{array}, 
\end{align}
which satisfy $d^1 \circ d^1=0$. 
The homology of $d^1$ gives the $E^2$-page 
\begin{align}
\begin{array}{ll|cccc}
{\rm AIII} & n=1 & 0 & 0 & 0 & 0 \\ 
{\rm A} & n=0 & \Z & \Z_4^{\times 2} & 0 & \Z \\
\hline
& E^2_{p,n} & p=0 & p=1 & p=2 & p=3 \\
\end{array}
\end{align}
The 3rd differential $d^3_{3,0}:\Z\to \Z$ can be nontrivial. 
Comparing this with the $E_{\infty}$-page~\cite{KSAHSS}
\begin{align}
\begin{array}{ll|cccc}
{\rm A} & n=0 & \Z & \Z_4^{\times 3} & 0 & \Z \\
{\rm AIII} & n=1 & 0 & 0 & 0 & 0 \\ 
\hline
& E_{\infty}^{p,-n} & p=0 & p=1 & p=2 & p=3 \\
\end{array}
\end{align}
of the dual $K$-cohomology $K^{\tau-n}_{D_2}(T^3)$, we find that $d^3_{3,0}=0$. 
Therefore, $E^2 = E^{\infty}$. 
The $K$-homology group is fixed as 
\begin{align}
K^G_0(\R^3) \cong \Z, \qquad 
K^G_1(\R^3) \cong \Z \times \Z_4^{\times 2}.
\end{align}
It should be noticed again that the AHSSs for the $K$-cohomology and homology are complimentary. 
The $E^{\infty}$- and $E_{\infty}$-page implies that the $K$-cohomology group for class AIII obeys the nontrivial extension 
\begin{align}
0 \to \Z \xrightarrow{4} K^{\tau-1}_{D_2}(T^3) \xrightarrow{\rm mod\ 4} \Z_4^{\times 3} \to 0. 
\end{align}

\subsection{3d complex AZ classes with space group $F222$}
\begin{figure}[!]
\begin{center}
\includegraphics[width=0.6\linewidth, trim=0cm 13cm 0cm 0cm]{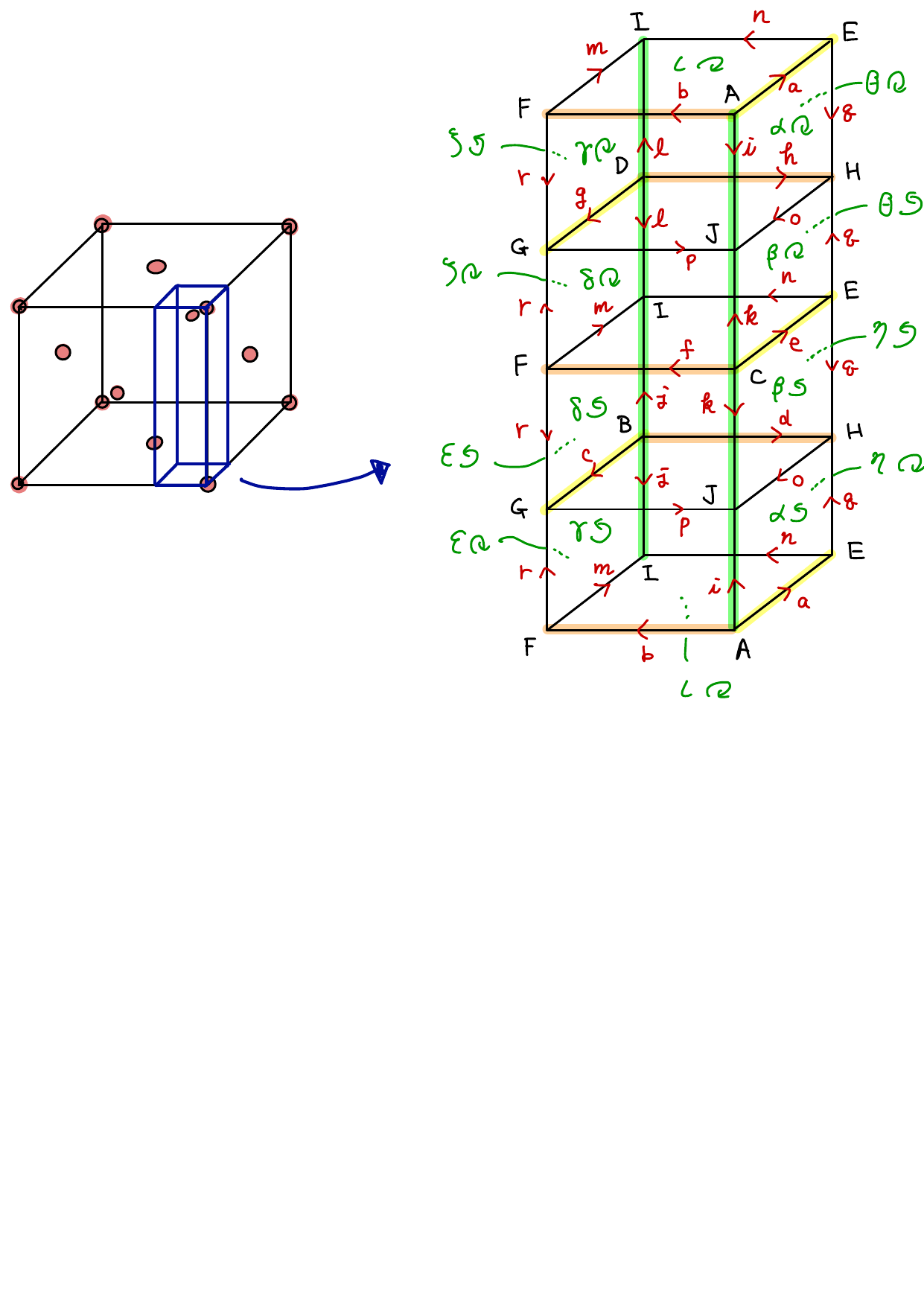}
\end{center}
\caption{
A $F222$-symmetric cell decomposition of the infinite space $\R^3$. 
The left figure shows the face-centered cubic lattice. 
The right figure shows an independent region. 
}
\label{fig:F222}
\end{figure}

The final example is the $3d$ complex AZ classes with the space group $F222$ (No.\ 22) that is generated by the $3d$ lattice translations by $(0,1,1)$, $(0,1,0)$, and $(0,1,1)$ of F222 and the $D_2$ point group. 
A $F222$-symmetric cell decomposition is in Fig.~\ref{fig:F222}. 
$p$-cells are composed of 
\begin{equation}\begin{split}
&\mbox{0-cells} = \{A,B,C,D,E,F,G,H,I,J\}, \\
&\mbox{1-cells} = \{a,b,c,d,e,f,g,h,i,j,k,l,m,n,o,p,q,r\}, \\
&\mbox{2-cells} = \{\alpha,\beta,\gamma,\delta,\epsilon,\zeta,\eta,\theta,\iota\}, \\
&\mbox{3-cells} = \{vol\}.
\end{split}\end{equation}
Let us consider spinful fermions, meaning that at the high-symmetric points $\{A,B,C,D\}$ of the $D_2$ group obeys the nontrivial projective representation of $D_2$. 
We have the $E^1$-page 
\begin{align}
\begin{array}{ll|cccc}
{\rm AIII} & n=1 & 0 & 0 & 0 & 0 \\ 
{\rm A} & n=0 & \Z^{\times 16} & \Z^{\times 30} & \Z^{\times 9} & \Z \\
\hline
& E^1_{p,n} & p=0 & p=1 & p=2 & p=3 \\
\end{array}
\end{align}
The first differentials are given as $d^1_{3,0}=0$, 
{\scriptsize
\arraycolsep=0.6pt
\begin{align}
d^1_{1,0}= 
\begin{array}{rr|rr|rr|rr|rr|rr|rr|rr|rr|rr|rr|rr|r|r|r|r|r|r||ccccccccc}
a&&b&&c&&d&&e&&f&&g&&h&&i&&j&&k&&l&&m&n&o&p&q&r&&\\
1&t&1&t&1&t&1&t&1&t&1&t&1&t&1&t&1&t&1&t&1&t&1&t&&&&&&&&&&&&&\\
\hline
-1&-1&-1&-1&&&&&&&&&&&&&-1&-1&&&&&&&&&&&&&&A\\
\hline
&&&&-1&-1&-1&-1&&&&&&&&&&&-1&-1&&&&&&&&&&&&B\\
\hline
&&&&&&&&-1&-1&-1&-1&&&&&&&&&-1&-1&&&&&&&&&&C\\
\hline
&&&&&&&&&&&&-1&-1&-1&-1&&&&&&&-1&-1&&&&&&&&D\\
\hline
1&0&&&&&&&1&0&&&&&&&&&&&&&&&&-1&&&-1&&1&E\\
0&1&&&&&&&0&1&&&&&&&&&&&&&&&&-1&&&-1&&t&\\
\hline
&&1&0&&&&&&&1&0&&&&&&&&&&&&&-1&&&&&-1&1&F\\
&&0&1&&&&&&&0&1&&&&&&&&&&&&&-1&&&&&-1&t&\\
\hline
&&&&1&0&&&&&&&1&0&&&&&&&&&&&&&&-1&&1&1&G\\
&&&&0&1&&&&&&&0&1&&&&&&&&&&&&&&-1&&1&t&\\
\hline
&&&&&&1&0&&&&&&&1&0&&&&&&&&&&&-1&&1&&1&H\\
&&&&&&0&1&&&&&&&0&1&&&&&&&&&&&-1&&1&&t&\\
\hline
&&&&&&&&&&&&&&&&&&1&0&&&1&0&1&1&&&&&1&I\\
&&&&&&&&&&&&&&&&&&0&1&&&0&1&1&1&&&&&t&\\
\hline
&&&&&&&&&&&&&&&&1&0&&&1&0&&&&&1&1&&&1&J\\
&&&&&&&&&&&&&&&&0&1&&&0&1&&&&&1&1&&&t&\\
\hline
\end{array}, 
\end{align}
\begin{align}
d^1_{2,0}
= 
\begin{array}{r|r|r|r|r|r|r|r|r||cc}
\alpha&\beta&\gamma&\delta&\epsilon&\zeta&\eta&\theta&\iota&&\\
\hline
1&&&&&&&&-1&1&a\\
1&&&&&&&&-1&t&\\
\hline
&&-1&&&&&&1&1&b\\
&&-1&&&&&&1&t&\\
\hline
&&&&-1&&&&&1&c\\
&&&&-1&&&&&t&\\
\hline
&&&&&&1&&&1&d\\
&&&&&&1&&&t&\\
\hline
&-1&&&&&&&&1&e\\
&-1&&&&&&&&t&\\
\hline
&&&1&&&&&&1&f\\
&&&1&&&&&&t&\\
\hline
&&&&&-1&&&&1&g\\
&&&&&-1&&&&t&\\
\hline
&&&&&&&-1&&1&h\\
&&&&&&&-1&&t&\\
\hline
-1&&1&&&&&&&1&i\\
-1&&1&&&&&&&t&\\
\hline
&&&&1&&-1&&&1&j\\
&&&&1&&-1&&&t&\\
\hline
&1&&-1&&&&&&1&k\\
&1&&-1&&&&&&t&\\
\hline
&&&&&1&&1&&1&l\\
&&&&&1&&1&&t&\\
\hline
&&&&-1&-1&&&1&&m\\
\hline
&&&&&&1&-1&-1&&n\\
\hline
1&-1&&&&&&&&&o\\
\hline
&&-1&1&&&&&&&p\\
\hline
1&-1&&&&&-1&1&&&q\\
\hline
&&-1&1&1&1&&&&&r\\
\hline
\end{array}.
\end{align}
}
These satisfy $d^1_{1,0} \circ d^1_{2,0}=0$. 
The homology of $d^1$ gives the $E^2=E^3$-page 
\begin{align}
\begin{array}{ll|cccc}
{\rm AIII} & n=1 & 0 & 0 & 0 & 0 \\ 
{\rm A} & n=0 & \Z \times \Z_2 & \Z^{\times 6} & 0 & \Z \\
\hline
& E^2_{p,n} & p=0 & p=1 & p=2 & p=3 \\
\end{array}
\end{align}
The 3rd differential $d^3_{3,0}$ can be nontrivial. 
Comparing the $E^3$-page with the $E_{\infty}$-page~\cite{KSAHSS}
\begin{align}
\begin{array}{ll|cccc}
{\rm A} & n=0 & \Z & \Z^{\times 6} & \Z_2 & \Z \\
{\rm AIII} & n=1 & 0 & 0 & 0 & 0 \\ 
\hline
& E_{\infty}^{p,-n} & p=0 & p=1 & p=2 & p=3 \\
\end{array}
\end{align}
of the $K$-cohomology group $K^{\tau-n}_{D_2}(T^3)$, we find that $d^3_{3,0}$ is trivial.
Therefore, $E^2 = E^{\infty}$. 
The $K$-homology group is fixed as 
\begin{align}
K^G_0(\R^3) \cong \Z \times \Z_2, \qquad 
K^G_1(\R^3) \cong \Z \times \Z^{\times 6}. 
\end{align}
$E^{\infty}_{0,0}=\Z \times \Z_2$ tells us that the $\Z_2$ nontrivial model in class A insulators is an atomic insulator, even if the $\Z_2$ invariant in the momentum space defined on the $2$-skeleton include the integral of the Berry curvature~\cite{KSAHSS}. 
The origin of the $\Z_2$ in atomic insulators is found in the structure of the 1-skeleton in the real space $\R^3$. 
See Fig.~\ref{fig:F222}. 
The $1$-skeleton $X_1$ is the disjoint union of the two sub $1$-skeletons, one of which includes the 0-cell $A$ and the other includes $B$. 
The $K$-group $K^G_0(\R^3) \cong \Z \times \Z_2$ is generated by two inequivalent atomic insulators $\ket{A}$ and $\ket{B}$ defined by the projective irrep.\ of $D_2$ located at $A$ and $B$, respectively. 
The $\Z_2$ structure is from that the direct sum $\ket{A} \oplus \ket{A}$ of irrep.\ at $A$ is deformable to $\ket{B} \oplus \ket{B}$, the two irreps.\ at $B$.

\section{Conclusion}
\label{sec:conc}
In the present paper, we studied a mathematical structure behind SPT phases and LSM-type theorems protected by crystalline symmetry. 
Our approach is based on the same spirit of Kitaev's proposal that the family of invertible states forms an $\Omega$-spectrum.~\cite{Kit13,Kit15}
Once an $\Omega$-spectrum is given, one can define the generalized (co)homology theory. 
In this paper, we proposed and demonstrated that the classification of a crystalline SPT phase is a generalized homology over the real space manifold on which a physical system is defined. 
This approach divides the problems of crystalline SPT phases into two aspects: SPT phases protected by onsite symmetry as building blocks and the role of crystalline symmetry. 
In the generalized homology description, onsite symmetry is inherited in the $\Omega$-spectrum, and the topological nature compatible with the crystalline symmetry is described by the mathematical structure of the equivariant generalized homology. 
In this sense, the generalized homology approach applies to any SPT phases, including fermionic systems. 
The underlying physical picture of our approach is that regarding topological phenomena, the correlation length of a bulk invertible state can be considered much smaller than the spatial length of crystalline symmetry, such as lattice translation~\cite{TE18}.

We have shown that the AHSS, the spectral sequence associated with a crystalline symmetric cell decomposition of the real space, is the perfect generalization of the prior developed machinery to classify SPT phases and LSM-type theorems in the presence of crystalline symmetry~\cite{Hermele_torsor,PWJZ17,HSHH17}. 
The AHSS successfully unifies various notions in crystalline SPT phases such as the layer construction, higher-order SPT phases~\cite{SchindlerHigher}, LSM theorems as the boundary of an SPT phase~\cite{PWJZ17}, and LSM-type theorem to enforce an SPT phase~\cite{YJVR17,Lu17}.

For free fermions, the generalized homology for free fermion SPT phases is attributed to the $K$-homology. 
It turns out that the AHSS for real space $K$-homology is quite complementary to that for momentum space $K$-cohomology~\cite{KSAHSS}.
As seen in Sec.~\ref{sec:free}, these AHSSs present different limiting pages that converge at the same $K$-group, which helps us to determine the $K$-group without explicitly solving the exact sequences among the limiting page.

Let us close by mentioning a number of future directions.
\begin{itemize}
\item
In this paper, we mainly focused on SPT phases on the infinite real space manifold. However, the generalized homology $h^G_n(X)$ is well-defined for any pairs of real space manifolds $(X,Y)$. 
For example, SPT phases on a sphere, Klein bottle, M\"obius strip, etc. 
It is interesting to explore the topological nature of SPT phases defined on topologically nontrivial real space manifolds which can be engineered. 

\item 
The physics of SPT phases gives us a practical definition to compute the differentials of the AHSS in generalized homology. 
It is also interesting to reinterpret known (co)homological definitions of the differentials in the AHSS for the $K$-theory and some cobordisms in the viewpoint of SPT physics. 

\item
The $\Omega$-spectrum structure of invertible states relates an adiabatic cycle or a kink texture which begins and ends at the trivial $(d+1)$-dimensional invertible state to a $d$-dimensional invertible state. 
Therefore, the $\Omega$-spectrum structure does not tell about the quantum number localized at a topological texture (kink, skyrmion, etc.) {\it within a nontrivial SPT phase} (See, for example, \cite{AbanovWiegmann}).  
In addition to the based loop space $\Omega F_{d+1}$, it should be of importance to understand the generic structure of the free loop space ${\cal L} F_{d+1} = \{\ell: S^1 \to F_{d+1}\}$ of $(d+1)$-dimensional invertible states, the space of adiabatic cycles which begin and end at an arbitrary $(d+1)$-dimensional invertible state. 

\item
For crystalline bosonic SPT phases without the $E_8$ phase as a building block, the AHSS of the corresponding generalized homology $h_0^G(\R^3)$ with a $3d$ space group $G$ is attributed to the strategy by Huang et al.~\cite{HSHH17}.
On the one hand, the classification result of Ref.~\cite{HSHH17} completely matches the cohomology theory $H^{4}_G(\R^3,U(1)^{\rm ori}) \cong H^{4}(BG,U(1)^{\rm ori})$ by Thorngren and Else~\cite{TE18}, where the equivariant cohomology $H^{4}_G(\R^3,U(1)^{\rm ori})$ is regarded as the classification of $G$ symmetric topological response theories over the real space manifold $\R^3$. 
This agreement suggests the existence of a sort of (twisted) generalized cohomology formulation of crystalline SPT phases over the real space $X$, which should be the Poincar\'e dual to the homological formulation. 
A possible route would be the homological AHSS based on the dual cell decomposition of the real space $X$ and reinterpreting the AHSS as a cohomological one. 

\end{itemize}

\medskip

\medskip

\noindent
{\it Acknowledgement---}
K.S.\ thanks 
Yohei Fuji, 
Yosuke Kubota
and
Yuji Tachiawa
for helpful discussions. 
K.G.\ is supported by JSPS Grant-in-Aid for Scientific Research on Innovative Areas "Discrete Geometric Analysis for Materials Design": Grant Number JP17H06462.

\medskip
\noindent
{\it Note added---}
In preparation for this work, we became aware of the following independent works. 
Refs.~\cite{DubinkinHughes18,YouHigher18} studied higher-order SPT phases in strongly interacting systems. 
Ref.~\cite{RasmussenLu18} studied the classification and construction of higher-order crystalline bosonic SPT phases.

\appendix

\section{The cohomology of $K(\Z_N, 2)$ in low degree}
\label{app:k(zn,2)}

The Eilenberg-MacLane space $K(\Z_N, 2)$ is a topological space which is, up to homotopy, uniquely characterized by its homotopy groups $\pi_2(K(\Z_N, 2)) \cong \Z_N$ and $\pi_i(K(\Z_N, 2)) = 0$ for $i \neq 2$. We here give some details of the computation of its integral cohomology group $H^n(K(\Z_N, 2); \Z)$ in low degree.


\subsection{Up to degree $3$}

The main strategy is to apply the Leray-Serre spectral sequence \cite{B-T} to the so-called path fibration $\Omega K(\Z_N, 2) \to PK(\Z_N, 2) \to K(\Z_N, 2)$. Here $P K(\Z_N, 2)$ is the space consisting of paths that start at a base point. Because the path space $P K(\Z_N, 2)$ is contractible, the based loop space $\Omega K(\Z_N, 2)$ is homotopy equivalent to the classifying space of principal $\Z_N$-bundles $K(\Z_N, 1) \simeq B\Z_N$. Its integral cohomology groups in low degree are $H^0(B\Z_N; \Z) = \Z$, $H^1(B\Z_N; \Z) = 0$, and $H^2(B\Z_N; \Z) = \Z_N$ and $H^3(B\Z_N; \Z) \cong 0$.

\begin{thm}
For positive integer $N$, we have
$$
\begin{array}{|c|c|c|c|c|c|}
\hline
& n = 0 & n = 1 & n = 2 & n = 3  \\
\hline
H^n(K(\Z_N, 2); \Z) & \Z & 0 & 0 & \Z_N \\
\hline
\end{array}
$$
\end{thm}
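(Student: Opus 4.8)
The plan is to run the cohomology Leray--Serre spectral sequence for the path fibration $\Omega K(\Z_N,2) \to PK(\Z_N,2) \to K(\Z_N,2)$ set up above. The cases $n=0,1$ are immediate: $K(\Z_N,2)$ is path-connected, so $H^0 = \Z$; and it is simply connected, so $H_1 = 0$ by Hurewicz and hence $H^1(K(\Z_N,2);\Z) = \Hom(H_1,\Z) \oplus {\rm Ext}(H_0,\Z) = 0$ by the universal coefficient theorem. The content lies in degrees $2$ and $3$.

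First I would assemble the inputs. Because the base $K(\Z_N,2)$ is simply connected, the local system on it is trivial, so $E_2^{p,q} = H^p\!\big(K(\Z_N,2);\, H^q(B\Z_N;\Z)\big)$, using $\Omega K(\Z_N,2) \simeq K(\Z_N,1) \simeq B\Z_N$. From the universal coefficient theorem together with $H_*(B\Z_N;\Z) = (\Z,\Z_N,0,\Z_N,0,\dots)$, the relevant fiber cohomology is $H^0 = \Z$, $H^1 = 0$, $H^2 = \Z_N$, $H^3 = 0$. Since $PK(\Z_N,2)$ is contractible, $E_\infty^{p,q} = 0$ for $(p,q)\neq(0,0)$ while $E_\infty^{0,0} = \Z$; this vanishing is what will pin down the unknown groups $H^2(K(\Z_N,2);\Z)$ and $H^3(K(\Z_N,2);\Z)$.

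The key steps, in order, are: (i) bookkeeping the low-degree entries, noting in particular $E_2^{0,1} = E_2^{1,1} = E_2^{2,1} = 0$ since the fiber has $H^1 = 0$; (ii) observing that $E^{2,0}$ neither receives nor emits a nonzero differential, so $E_\infty^{2,0} = E_2^{2,0} = H^2(K(\Z_N,2);\Z)$, forcing $H^2(K(\Z_N,2);\Z) = 0$; (iii) observing that the only differential touching either $E^{0,2}$ or $E^{3,0}$ is the transgression $d_3 : E_3^{0,2} = \Z_N \to E_3^{3,0} = H^3(K(\Z_N,2);\Z)$ (indeed $d_2$ out of $E_2^{0,2}$ lands in $E_2^{2,1} = 0$, and $d_2$ into $E_2^{3,0}$ comes from $E_2^{1,1} = 0$, while all $d_r$ with $r\geq 4$ vanish for position reasons), so that the vanishing of $E_\infty^{0,2}$ and $E_\infty^{3,0}$ forces $d_3$ to be both injective and surjective; (iv) concluding $H^3(K(\Z_N,2);\Z) \cong \Z_N$.

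The only place requiring care --- the ``main obstacle'', such as it is --- is step (iii): one must be sure no earlier or later differential interferes with $d_3$ on the entries $(0,2)$ and $(3,0)$, so that $d_3$ genuinely is the transgression and the contractibility of the total space forces it to be an isomorphism. Everything else is routine spectral-sequence bookkeeping. As a sanity check, the same answer drops out of Hurewicz and universal coefficients directly: $\pi_2 K(\Z_N,2) = \Z_N$ gives $H_2 = \Z_N$, hence $H^3(K(\Z_N,2);\Z) = {\rm Ext}(\Z_N,\Z) = \Z_N$ and $H^2(K(\Z_N,2);\Z) = \Hom(\Z_N,\Z) = 0$.
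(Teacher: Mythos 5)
Your proof is correct and follows essentially the same route as the paper: Hurewicz plus universal coefficients for $n\le 2$, then the Leray--Serre spectral sequence of the path fibration, where contractibility of $PK(\Z_N,2)$ forces the transgression $d_3:E_3^{0,2}\to E_3^{3,0}$ to be an isomorphism, giving $H^3\cong\Z_N$. (Only your closing ``sanity check'' is slightly loose, since $H^3(X;\Z)\cong\Hom(H_3,\Z)\oplus\mathrm{Ext}(H_2,\Z)$ requires also knowing $H_3(K(\Z_N,2);\Z)=0$, but this aside plays no role in the argument.)
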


\begin{proof}
By the Hurewicz theorem \cite{Hat}, we have $H_0(K(\Z_N, 2); \Z) = \Z$, $H_1(K(\Z_N, 2); \Z) = 0$, and $H_2(K(\Z_N, 2); \Z) = \Z_N$. Applying the universal coefficient theorem \cite{Hat}, we conclude $H^0(K(\Z_N, 2); \Z) = \Z$, $H^1(K(\Z_N, 2); \Z) = 0$, $H^2(K(\Z_N, 2); \Z) = 0$. To determine $H^3(K(\Z_N, 2); \Z)$, we use the Lerray-Serre spectral sequence to the path fibration. Its $E_2$-term
$$
E_2^{p, q} = H^p(K(\Z_N, 2); H^q(B\Z_N; \Z))
$$
is summarized as follows.
$$
\begin{array}{c|c|c|c|c|}
\hline
q = 2 & \Z_N & 0 & & \\
\hline
q = 1 & 0 & 0 & 0 & 0 \\
\hline
q = 0 & \Z & 0 & 0 & \\
\hline
E_2^{p, q} & p = 0 & p = 1 & p = 2 & p = 3 \\
\end{array}
$$
From this, we find that $E_2^{0, 2} = E_3^{0, 2}$ and $E_2^{3, 0} = E_3^{3, 0}$. We also find that $E_4^{0, 2} = E_\infty^{0, 2}$ and $E_4^{3, 0} = E_\infty^{3, 0}$, both of which must be trivial, since $H^n(PK(\Z_N, 2); \Z) = 0$ for $n = 2, 3$. For this to be true, the differential $d_3 : E_3^{0, 2} \to E_3^{3, 0}$ must be an isomorphism, so that
$$
H^3(K(\Z_N, 2); \Z)
= E_2^{3, 0} = E_3^{3, 0} \cong E_3^{0, 2} \cong E_2^{0, 2} \cong \Z_N,
$$
and the proof is completed.
\end{proof}


\subsection{Degree $4$}

\begin{thm}
$H^4(K(\Z_N, 2); \Z) = 0$ for any $N > 0$.
\end{thm}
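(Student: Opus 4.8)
The plan is to push the Leray--Serre spectral sequence of the path fibration $\Omega K(\Z_N,2)\simeq B\Z_N \to PK(\Z_N,2)\to K(\Z_N,2)$ one total degree further than in the previous theorem. The inputs I would use are all already available: $K(\Z_N,2)$ is simply connected, so the local system is trivial and $H^1(K(\Z_N,2);M)=0$ for any coefficient group $M$, while $H^0(K(\Z_N,2);M)=M$; the previous theorem gives $H^0(K(\Z_N,2);\Z)=\Z$, $H^1=0$, $H^2=0$, $H^3=\Z_N$; and for the fibre we know $H^1(B\Z_N;\Z)=0$ and $H^3(B\Z_N;\Z)=0$. As before, $E_2^{p,q}=H^p(K(\Z_N,2);H^q(B\Z_N;\Z))$ and the spectral sequence converges to $H^{p+q}(PK(\Z_N,2);\Z)$, which vanishes for $p+q>0$ since the path space is contractible.

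The key step is to argue that the entry $E_2^{4,0}=H^4(K(\Z_N,2);\Z)$ is a permanent cycle. First I would note that no differential leaves this spot: the maps $d_r\colon E_r^{4,0}\to E_r^{4+r,1-r}$ land in negative fibre degree for every $r\ge 2$. Next I would check that no differential enters it. The incoming maps are $d_r\colon E_r^{4-r,\,r-1}\to E_r^{4,0}$; for $r=2$ the source is $E_2^{2,1}=H^2(K(\Z_N,2);H^1(B\Z_N;\Z))=0$ since $H^1(B\Z_N;\Z)=0$; for $r=3$ it is a subquotient of $E_2^{1,2}=H^1(K(\Z_N,2);H^2(B\Z_N;\Z))=0$ by simple-connectivity of $K(\Z_N,2)$; for $r=4$ it is a subquotient of $E_2^{0,3}=H^0(K(\Z_N,2);H^3(B\Z_N;\Z))=0$ since $H^3(B\Z_N;\Z)=0$; and for $r\ge 5$ the source has negative base degree. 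Hence $E_2^{4,0}=E_\infty^{4,0}$, and since $E_\infty^{4,0}=0$ we conclude $H^4(K(\Z_N,2);\Z)=0$.

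I do not expect a genuine obstacle here: the only thing that must be handled with care is the bookkeeping that \emph{every} differential of \emph{every} length that could touch the corner $E_r^{4,0}$ has a vanishing source or target, which reduces entirely to the vanishing of $H^1$ of the base and of $H^1,H^3$ of the fibre together with the low-degree cohomology of $K(\Z_N,2)$ already computed. (An equally short alternative would be to deloop $0\to\Z\xrightarrow{N}\Z\to\Z_N\to 0$ to identify $K(\Z_N,2)$ with the homotopy fibre of $\times N$ on $K(\Z,3)$, giving a fibration $K(\Z,2)\to K(\Z_N,2)\to K(\Z,3)$ with $H^*(K(\Z,2);\Z)=\Z[x]$, $|x|=2$, and transgression $x\mapsto N y$; then $d_3(x^2)=2N\,xy\neq 0$ already kills $E^{0,4}$ and the entire total degree $4$ vanishes. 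I would present the path-fibration version in the paper since it reuses the setup already in place.)
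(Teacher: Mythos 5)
Your proof is correct and takes essentially the same route as the paper: both inspect the Leray--Serre spectral sequence of the path fibration and observe that $E_2^{4,0}$ receives no differentials (the sources $E_2^{2,1}$, $E_2^{1,2}$, $E_2^{0,3}$ all vanish) and supports none, so $H^4(K(\Z_N,2);\Z)=E_2^{4,0}=E_\infty^{4,0}=0$ by contractibility of the path space. Your write-up just makes explicit the bookkeeping that the paper compresses into "it then turns out that $E_2^{4,0}=E_\infty^{4,0}$."
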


\begin{proof}
It is known that the cohomology ring of the classifying space $B\Z_N$ is the following quotient of the polynomial ring:
$$
H^*(B\Z_N; \Z) = \Z[u]/(Nu),
$$
where $u \in H^2(B\Z_N; \Z) = \Z_N$ is a generator. Then the $E_2$-term of the Leray-Serre spectral sequence for the path fibration
$$
E_2^{p, q} = H^p(K(\Z_N, 2); H^q(B\Z_N; \Z))
$$
can be summarized as follows:
$$
\begin{array}{c|c|c|c|c|c|}
\hline
q = 3 & 0 & 0 & 0 & 0 & 0 \\
\hline
q = 2 & \Z_N & 0 &  & & \\
\hline
q = 1 & 0 & 0 & 0 & 0 & 0 \\
\hline
q = 0 & \Z & 0 & 0 & \Z_N & \\
\hline
E_2^{p, q} & p = 0 & p = 1 & p = 2 & p = 3 & p = 4 \\
\end{array}
$$
It then turns out that $H^4(K(\Z_N, 2); \Z) = E_2^{4, 0} = E_\infty^{4, 0}$. To keep the consistency with the fact that $H^4(PK(\Z_N, 2); \Z) = 0$, we must have $E_\infty^{4, 0} = 0$. 
\end{proof}


\subsection{Degree $5$ and $6$}

\begin{lem}
For $N > 0$, the following holds true for $H^5(K(\Z_N, 2); \Z)$.
\begin{itemize}
\item
If $N$ is odd, then $H^5(K(\Z_N, 2); \Z) \cong \Z_N$.

\item
If $N$ is even, then $H^5(K(\Z_N, 2); \Z)$ is either $\Z_N \oplus \Z_2$ or $\Z_{2N}$.

\end{itemize}
\end{lem}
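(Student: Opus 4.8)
The plan is to continue the analysis of the Leray–Serre spectral sequence for the path fibration $\Omega K(\Z_N,2) \to PK(\Z_N,2) \to K(\Z_N,2)$, now carried one degree further than in the previous lemma. First I would assemble the $E_2$-page $E_2^{p,q} = H^p(K(\Z_N,2);H^q(B\Z_N;\Z))$ using the already-established low-degree cohomology of $K(\Z_N,2)$, namely $H^0=\Z$, $H^1=H^2=H^4=0$, $H^3\cong\Z_N$, together with $H^*(B\Z_N;\Z) = \Z[u]/(Nu)$ with $\deg u = 2$. This fills the relevant corner of the page: the fiber-degree rows $q=0,2,4$ are the nonzero ones, with $E_2^{*,0}$ given by $H^*(K(\Z_N,2);\Z)$, $E_2^{*,2} \cong H^*(K(\Z_N,2);\Z_N)$, and $E_2^{*,4}\cong H^*(K(\Z_N,2);\Z_N)$ as well (using $H^4(B\Z_N;\Z)\cong\Z_N$). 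The unknowns to be pinned down are $E_2^{4,2} = H^4(K(\Z_N,2);\Z_N)$ and $E_2^{5,0} = H^5(K(\Z_N,2);\Z)$, plus the already-known $E_2^{0,4}\cong\Z_N$.

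Next I would exploit contractibility of $PK(\Z_N,2)$: every entry must die by $E_\infty$, so in total degree $5$ the surviving contributions $E_\infty^{5,0}$, $E_\infty^{3,2}$, $E_\infty^{0,4}$ must all vanish, and likewise in total degree $6$. The key transgression is $d_3\colon E_3^{0,2}\to E_3^{3,0}$, which the previous argument already showed is an isomorphism $\Z_N\xrightarrow{\sim}\Z_N$; multiplicativity of the spectral sequence then forces $d_3$ on $E_3^{0,4}$ (the class $u^2$) to be cup product with the transgressed generator, i.e. $d_3(u^2) = 2u\cdot d_3(u)$ up to sign, landing in $E_3^{3,2} = H^3(K(\Z_N,2);\Z_N)$. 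For $N$ odd, $2$ is invertible mod $N$ so this map is injective with cokernel whose size I can read off; tracking what must then cancel $E^{5,0}$ gives $H^5(K(\Z_N,2);\Z)\cong\Z_N$. For $N$ even the map $d_3(u^2)$ has kernel and cokernel each of order $2$, so the page does not collapse as cleanly, and the surviving $2$-torsion must be absorbed by $E^{5,0}$; combined with a universal-coefficients constraint on $H_5$ versus $H_4$ this narrows $H^5$ to exactly two possibilities, $\Z_N\oplus\Z_2$ or $\Z_{2N}$ — an extension ambiguity the spectral sequence alone cannot resolve.

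Concretely the step-by-step order is: (i) write out the $E_2$ corner through total degree $6$; (ii) use multiplicativity and the known $d_3$ on $E_3^{0,2}$ to compute $d_3$ on the $q=4$ row, in particular on $u^2$; (iii) compute $E_4$ in the relevant spots; (iv) identify which later differentials ($d_5$ into $E^{5,0}$) can be nonzero and use $E_\infty = 0$ off the base to solve for $H^5(K(\Z_N,2);\Z)$ as an extension; (v) split into the $N$ odd versus $N$ even cases, getting a clean answer in the odd case and the two-option ambiguity in the even case.

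\emph{Main obstacle.} The hard part will be the even-$N$ case: the spectral-sequence bookkeeping determines the \emph{order} of $H^5(K(\Z_N,2);\Z)$ and the associated graded of a filtration, but not the group extension, so without extra input (e.g. a Bockstein computation, the action of Steenrod squares $Sq^2$, or comparison with the known mod-$2$ cohomology of $K(\Z_2,2)$) one genuinely cannot distinguish $\Z_N\oplus\Z_2$ from $\Z_{2N}$. I would flag this honestly and state the lemma as the best conclusion obtainable from the Leray–Serre argument alone, deferring the sharper statement (and the degree-$6$ group $H^6$) to the subsequent analysis.
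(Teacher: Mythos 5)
Your proposal is correct and follows essentially the same route as the paper: the Leray--Serre spectral sequence for the path fibration, the derivation property giving $d_3(u^2)=2u\,d_3(u)$ from the known transgression $d_3\colon E_3^{0,2}\to E_3^{3,0}$, the vanishing of $E_\infty$ forced by contractibility of the path space, and for even $N$ the unresolved extension $0\to\Z_N\to H^5\to\Z_2\to 0$ classified by $\mathrm{Ext}(\Z_2,\Z_N)\cong\Z_2$. The paper likewise defers resolving the even-$N$ ambiguity to a separate mod-$2$ cohomology computation, exactly as you flag.
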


\begin{proof}
The $E_2$-term of the Leray-Serre spectral sequence for the path fibration reads:
$$
\begin{array}{c|c|c|c|c|c|c|}
q = 5 & 0 & 0 & 0 & 0 & 0 & 0  \\
\hline
q = 4 & \Z_N & 0 & \Z_N & \Z_N & & \\
\hline
q = 3 & 0 & 0 & 0 & 0 & 0 & 0  \\
\hline
q = 2 & \Z_N & 0 & \Z_N & \Z_N & & \\
\hline
q = 1 & 0 & 0 & 0 & 0 & 0 & 0  \\
\hline
q = 0 & \Z & 0 & 0 & \Z_N & 0 & \\
\hline
E_2^{p, q} & p = 0 & p = 1 & p = 2 & p = 3 & p = 4 & p = 5  \\
\end{array}
$$
For $p + q = 4$, the differential $d_2 : E_2^{p, q} \to E_2^{p+2, q-1}$ is trivial, so that $E_2^{p, q} = E_3^{p, q}$ in this case. Then we have possibly non-trivial differentials $d_3 : E_3^{0, 4} \to E_3^{3, 2}$ and $d_3 : E_3^{2, 2} \to E_3^{5, 0}$. It is straight to see that $d_3 : E_3^{2, 2} \to E_3^{5, 0}$ is injective. Therefore 
$$
E_5^{5, 0} = E_4^{5, 0} = \mathrm{Coker}[d_3 : E_3^{2, 2} \to E_3^{5, 0}].
$$
To compute $d_3 : E_3^{0, 4} \to E_3^{3, 2}$, recall that $\Z$ is a ring. Then, as a matter of fact about the Leray-Serre spectral sequence, the differential $d_3$ acts as a derivation. If $u \in H^2(B\Z_N; \Z) = \Z_N$ is a generator, then so is $u^2 \in H^4(B\Z_N; \Z) = \Z_N$. We can regard $u \in E_2^{0, 2} = E_2^{3, 0}$, so that $u^2 \in E_2^{0, 4} = E_3^{0, 4}$. We know that $d_3 : E_3^{0, 2} \to E_3^{3, 0}$ is an isomorphism. This implies that $ud_3(u) \in E_3^{3, 0}$ is a generator. We have $d_3(u^2) = d_3(u) u + ud_3(u) = 2 u d_3(u)$. Thus, if $N$ is odd, then $d_3 : E_3^{0, 4} \to E_3^{3, 2}$ is an isomorphism $\Z_N \cong \Z_N$, so that $0 = E_4^{0, 4} = E_5^{0, 4}$ and $E_\infty^{5, 0} = E_5^{5, 0} = E_4^{5, 0}$. Then $E_\infty^{5, 0} = 0$ implies that $d_3 : E_3^{2, 2} \to E_3^{5, 0}$ is isomorphic, and 
$$
H^5(K(\Z_N, 2); \Z) = E_2^{5, 0} = E_3^{5, 0} 
\cong E_3^{2, 2} = E_2^{2,2} = \Z_N.
$$
To the contrary, if $N$ is even, then $E_5^{0, 4} = E_4^{0, 4} = \Z_2$. To have $E_\infty^{5, 0} = E_6^{5, 0} = 0$, the map $d_5 : E_5^{0, 4} \to E_5^{5, 0}$ must be bijective. As a result, we have an exact sequence 
$$
0 \to \Z_N \to E_2^{5, 0} \to \Z_2 \to 0.
$$
The isomorphism classes of extensions of $\Z_2$ by $\Z_N$ are classified by the $\mathrm{Ext}$ group \cite{B-T} $\mathrm{Ext}(\Z_2, \Z_N) = \Z_2$. Hence $E_2^{5, 0} = H^5(K(\Z_N, 2); \Z)$ is $\Z_2 \oplus \Z_N$ or $\Z_{2N}$.
\end{proof}

To complete the computation of $H^5(K(\Z_N, 2); \Z)$, we appeal to the following fact \cite{Serre}.

\begin{prop} \label{prop:Z2_cohomology_ring}
Let $f \ge 1$. The cohomology ring of $K(\Z_{2^f}, 2)$ with coefficients in $\Z_2$ is the polynomial ring
$$
H^*(K(\Z_{2^f}, 2); \Z_2) 
\cong \Z_2[w_2, w_3, w_5, w_9, w_{17}, \ldots]
$$
generated by elements $w_d \in H^d(K(\Z_{2^f}, 2); \Z_2)$ of degree $d = 2 + \sum_{i = 0}^r2^i$ with $r = 0, 1, 2, \cdots$.
\end{prop}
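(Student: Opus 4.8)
The plan is to follow Serre's original computation of the mod-$2$ cohomology of Eilenberg--MacLane spaces, which proceeds by running the Leray--Serre spectral sequence of the path--loop fibration
\[
B\Z_{2^f} \simeq \Omega K(\Z_{2^f},2) \longrightarrow PK(\Z_{2^f},2) \longrightarrow K(\Z_{2^f},2),
\]
whose total space is contractible, and then invoking Borel's transgression theorem together with Kudo's transgression theorem. The point is that the ring structure of $H^*(K(\Z_{2^f},2);\Z_2)$ is essentially forced by the requirement that the spectral sequence converge to the cohomology of a point, once one knows the mod-$2$ cohomology of the fiber $B\Z_{2^f}$ and the behaviour of the transgression.

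First I would recall the mod-$2$ cohomology of the fiber: $H^*(B\Z_2;\Z_2)=\Z_2[t]$ with $|t|=1$, while for $f\ge 2$ one has $H^*(B\Z_{2^f};\Z_2)=\Lambda(a_1)\otimes\Z_2[a_2]$ with $|a_1|=1$ and $|a_2|=2$. In either case this graded-commutative algebra admits a \emph{simple system of generators} sitting in degrees $1,2,4,8,\dots$: the classes $t^{2^i}$ for $i\ge 0$ when $f=1$, and $\{a_1\}\cup\{a_2^{2^i}\}_{i\ge0}$ when $f\ge 2$. The degree-$1$ generator is transgressive for dimensional reasons, and each remaining generator is the square of a transgressive class, hence transgressive by Kudo's theorem, with $\tau$ multiplicative on squares in the sense $\tau(y^2)=\mathrm{Sq}^{|y|}\tau(y)$. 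Applying Borel's theorem to the acyclic path fibration then yields that $H^*(K(\Z_{2^f},2);\Z_2)$ is the \emph{polynomial} algebra on the transgressions of these generators. Since the transgression raises degree by one, the polynomial generators $w_d$ occur precisely in degrees $2,3,5,9,17,\dots$, that is, in degree $2$ and in degrees $2+\sum_{i=0}^r 2^i$ for $r\ge 0$, which is exactly the asserted list. Finally, identifying $w_2=\iota_2$ with the fundamental class and using $\tau(y^2)=\mathrm{Sq}^{|y|}\tau(y)$ iteratively gives $w_3=\mathrm{Sq}^1 w_2$ (when $f=1$), $w_5=\mathrm{Sq}^2 w_3$, $w_9=\mathrm{Sq}^4 w_5$, and in general $w_{2^{k+1}+1}=\mathrm{Sq}^{2^k}w_{2^k+1}$, exhibiting the generators as the images of $\iota_2$ under the admissible Steenrod monomials of excess less than $2$.

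The delicate part is the case $f\ge 2$, where the fiber cohomology $\Lambda(a_1)\otimes\Z_2[a_2]$ genuinely differs from $\Z_2[t]$: one must check that $a_2$ (rather than $a_1^2=0$) is the class transgressing to $w_3$, that it is indeed transgressive, and that the exterior generator $a_1$ contributes only the single polynomial generator $w_2=\tau(a_1)$ without disturbing the count, so that the final answer is independent of $f$. This is where naturality under the map $K(\Z_2,2)\to K(\Z_{2^f},2)$ induced by the order-two subgroup, and a careful bookkeeping of the simple system of generators, are needed in order to apply Borel's theorem legitimately. As a fallback, the low-degree part of the spectral sequence can be analysed by hand exactly as in the lemmas above, and the full statement then follows by the standard inductive use of Kudo's theorem once the pattern of transgressions has been established.
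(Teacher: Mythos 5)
First, note that the paper does not actually prove this proposition: it is stated as a quoted fact with a citation to Serre and used only as external input to resolve the extension for $H^5(K(\Z_N,2);\Z)$. So there is no in-paper argument to compare against; what you have written is essentially the standard Serre--Borel proof that the citation points to, and its skeleton is correct: the simple system of generators of $H^*(B\Z_{2^f};\Z_2)$ sits in degrees $2^i$ ($i\ge 0$), Kudo's theorem propagates transgressivity through squaring via $\tau(y^2)=\mathrm{Sq}^{|y|}\tau(y)$, and Borel's theorem for the acyclic path fibration then yields a polynomial algebra on classes in degrees $2^i+1$, i.e.\ $2,3,5,9,17,\dots$, matching $d=2+\sum_{i=0}^r 2^i=2^{r+1}+1$ together with $d=2$.

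Two points need repair. (i) The $f\ge 2$ step you flag is a real gap, and the tool you propose does not close it: naturality along $K(\Z_2,2)\to K(\Z_{2^f},2)$ only tells you that the restriction of $d_2(a_2)$ to the $\Z_2$-fibration vanishes, which is automatic and does not force $d_2(a_2)=0$. The clean argument is internal to the $\Z_{2^f}$ fibration: $E_2^{3,0}=H^3(K(\Z_{2^f},2);\Z_2)$ is nonzero (it contains $H^3(\cdot;\Z)\otimes\Z_2=\Z_2$ by the degree-$3$ integral computation already carried out in the appendix, together with universal coefficients), and the only differential that can kill it is $d_3:E_3^{0,2}\to E_3^{3,0}$; since $E_2^{0,2}=\Z_2\{a_2\}$, this forces $d_2(a_2)=0$, so $a_2$ is transgressive with $\tau(a_2)=w_3$. (ii) Your closing identification of the generators as the admissible Steenrod monomials of excess less than $2$ applied to $\iota_2$ is false for $f\ge 2$: there $\mathrm{Sq}^1\iota_2=0$ because $\iota_2$ lifts to $\Z_4$ coefficients, and $w_3=\tau(a_2)$ is detected by the higher Bockstein $\beta_{2^f}$, not by $\mathrm{Sq}^1$. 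The relations $w_5=\mathrm{Sq}^2 w_3$, $w_9=\mathrm{Sq}^4 w_5,\dots$ do hold for all $f$ by Kudo, and neither issue affects the ring-structure statement itself, which only needs the degrees of the polynomial generators.
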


\begin{thm}
The following holds true for $N > 0$
$$
H^5(K(\Z_N, 2); \Z)
= 
\left\{
\begin{array}{ll}
\Z_N, & (\mbox{$N$ odd}) \\
\Z_{2N}. & (\mbox{$N$ even})
\end{array}
\right.
$$
\end{thm}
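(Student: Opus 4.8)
The plan is to combine the previous lemma---which pins down $H^5(K(\Z_N,2);\Z)$ completely for odd $N$ and narrows it to $\Z_N\oplus\Z_2$ versus $\Z_{2N}$ for even $N$---with the mod-$2$ cohomology ring computation of Proposition~\ref{prop:Z2_cohomology_ring}. Since the odd case is already settled, the only work is the even case, and there it suffices to decide the extension
\begin{align}
0 \to \Z_N \to H^5(K(\Z_N,2);\Z) \to \Z_2 \to 0
\end{align}
for $N$ even. The key observation is that this extension is detected by $2$-torsion, so we may reduce to $N = 2^f$: writing $N = 2^f M$ with $M$ odd, a transfer/restriction argument (or the Bockstein long exact sequence applied to $\Z \xrightarrow{N} \Z \to \Z_N$) shows the $2$-primary part of $H^5(K(\Z_N,2);\Z)$ agrees with that of $H^5(K(\Z_{2^f},2);\Z)$, and the extension is nonsplit for general even $N$ iff it is nonsplit for $N = 2^f$.

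First I would use the universal coefficient theorem to relate $H^5(K(\Z_{2^f},2);\Z)$ to $H^*(K(\Z_{2^f},2);\Z_2)$. If $H^5(K(\Z_{2^f},2);\Z) = \Z_{2^f}\oplus\Z_2$ (the split case), then by universal coefficients $H^5(K(\Z_{2^f},2);\Z_2)$ would pick up a rank contribution from both the $\mathrm{Hom}$ part of $H^5$ and the $\mathrm{Ext}$ part of $H^4$ and $H^6$; if instead it is $\Z_{2^{f+1}}$ (the nonsplit case), $H^5(-;\Z_2)$ is smaller by one dimension. So the strategy is to compute $\dim_{\Z_2} H^5(K(\Z_{2^f},2);\Z_2)$ from Proposition~\ref{prop:Z2_cohomology_ring} and compare. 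From $H^*(K(\Z_{2^f},2);\Z_2) \cong \Z_2[w_2,w_3,w_5,w_9,\dots]$, the degree-$5$ part is spanned by $w_2 w_3$ and $w_5$, hence $\dim_{\Z_2}H^5 = 2$; similarly one reads off $\dim H^4 = 1$ (spanned by $w_2^2$) and $\dim H^6 = 2$ (spanned by $w_2^3, w_3^2$). Since $H^4(K(\Z_{2^f},2);\Z) = 0$ by the earlier theorem, universal coefficients gives $H^5(-;\Z_2) \cong \mathrm{Hom}(H_5,\Z_2)\oplus\mathrm{Ext}(H_4,\Z_2)$ with $H_4$ contributing via its torsion; cross-referencing the ranks then forces $H^5(K(\Z_{2^f},2);\Z)$ to be cyclic, i.e.\ $\Z_{2^{f+1}}$, ruling out the split form.

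An alternative, perhaps cleaner, route to the same conclusion: the nonzero $d_3$ differential from the proof of the lemma shows the generator of $E_3^{0,4}$ supports $d_3(u^2) = 2u\,d_3(u)$, which is $2$-divisible but nonzero; tracking this through the $d_5$ differential $E_5^{0,4}\xrightarrow{\sim} E_5^{5,0}$ and using multiplicativity of the Leray--Serre spectral sequence, one sees the surviving $\Z_2$ in $E_\infty^{5,0}$-adjacent filtration is ``glued'' to the $\Z_N$ by a nontrivial Steenrod/Bockstein operation---concretely, $\mathrm{Sq}^1$ acting on the mod-$2$ classes---which is exactly the statement that the integral extension is $\Z_{2N}$. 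I expect the main obstacle to be making the comparison between the spectral-sequence output and the ring structure airtight: one must be careful about which $\Z_2$'s in $E_\infty$ belong to $H^5$ versus $H^6$ and how the multiplicative structure constrains $d_5$, since a naive dimension count can be off by the contribution of $\mathrm{Ext}$ terms in adjacent degrees. Once the reduction to $N = 2^f$ and the identification of the relevant Bockstein are in place, the remaining verification is a routine bookkeeping of the first few columns of the spectral sequence, which I would not write out in full here.
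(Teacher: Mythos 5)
Your overall strategy --- reduce to $N = 2^f$, then use Serre's mod-$2$ cohomology ring of $K(\Z_{2^f},2)$ together with a universal-coefficient dimension count to decide the extension --- is the same as the paper's. But the execution has a gap at the decisive step: you run the comparison in degree $5$, where it is not conclusive. By the universal coefficient theorem,
\begin{align}
H^5(K(\Z_{2^f},2);\Z_2) \cong \bigl(H^5(K(\Z_{2^f},2);\Z)\otimes\Z_2\bigr)\oplus \mathrm{Tor}\bigl(H^6(K(\Z_{2^f},2);\Z),\Z_2\bigr),
\end{align}
so the observed value $\dim_{\Z_2}H^5(-;\Z_2)=2$ (spanned by $w_2w_3$ and $w_5$) is consistent with the split answer $\Z_N\oplus\Z_2$ if $H^6(-;\Z)$ were $0$, and with the cyclic answer $\Z_{2N}$ if $H^6(-;\Z)=\Z_2$. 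At this point in the paper $H^6(K(\Z_N,2);\Z)$ has not yet been computed (it is the subsequent corollary), so your degree-$5$ count cannot distinguish the two cases without importing adjacent-degree information that is not yet available; you flag this danger yourself but do not resolve it. Your ``alternative route'' via multiplicativity and a Bockstein does not repair this: the Leray--Serre spectral sequence only ever yields the associated graded of the filtration --- which is precisely why the lemma leaves the extension ambiguous --- and the claimed $\mathrm{Sq}^1$/Bockstein gluing is asserted rather than computed.

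The fix is the paper's one-line observation: do the count in degree $4$ instead. Since $H^4(K(\Z_N,2);\Z)=0$ is already known, the universal coefficient theorem collapses to
\begin{align}
H^4(K(\Z_N,2);\Z_2)\cong\mathrm{Tor}\bigl(H^5(K(\Z_N,2);\Z),\Z_2\bigr)
\end{align}
with no contamination from neighboring degrees. The K\"unneth decomposition $K(\Z_N,2)\simeq K(\Z_{2^f},2)\times K(\Z_q,2)$ with $q$ odd --- which is also the clean way to justify your reduction to $N=2^f$; a transfer argument is not available here since $K(\Z_N,2)\to K(\Z_{2^f},2)$ is not a finite cover --- gives $H^4(K(\Z_N,2);\Z_2)\cong H^4(K(\Z_{2^f},2);\Z_2)=\Z_2\cdot w_2^2$, which is one-dimensional. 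Since $\mathrm{Tor}(\Z_N\oplus\Z_2,\Z_2)\cong\Z_2\oplus\Z_2$ while $\mathrm{Tor}(\Z_{2N},\Z_2)\cong\Z_2$, the extension must be nonsplit and $H^5(K(\Z_N,2);\Z)\cong\Z_{2N}$. Your remark that $\dim H^4(-;\Z_2)=1$ ``cross-references'' to cyclicity is exactly this argument; it should be promoted from an aside to the main step, after which the degree-$5$ and degree-$6$ counts can be dropped entirely.
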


\begin{proof}
The universal coefficient theorem gives
$$
H^4(K(\Z_N, 2); \Z_2)
\cong \mathrm{Tor}(H^5(K(\Z_N, 2); \Z), \Z_2).
$$
Suppose that $N$ is even, so that $N = 2^fq$ with $f$ positive and $q$ odd. Applying the universal coefficient theorem to the results of $H^n(K(\Z_q, 2); \Z)$ so far, we find:
$$
\begin{array}{|c|c|c|c|c|c|}
\hline
& n = 0 & n = 1 & n = 2 & n = 3 &  n = 4 \\
\hline
H^n(K(\Z_q, 2); \Z_2) & \Z_2 & 0 & 0 & 0 & 0 \\
\hline
\end{array}
$$
Applying the K\"{u}nneth formula to $K(\Z_N, 2) = K(\Z_{2^f}, 2) \times K(\Z_q, 2)$, we have
$$
H^4(K(\Z_N, 2); \Z_2) 
=
H^4(K(\Z_{2^f}, 2), \Z_2),
$$
which is $\Z_2$ by Proposition \ref{prop:Z2_cohomology_ring}. Thus, we conclude $H^5(K(\Z_N, 2); \Z) = \Z_{2N}$.
\end{proof}

We remark that $H^5(K(\Z_N, 2); \Z) \cong H^4(K(\Z_N, 2); \R/\Z)$ can be identified with the group of quadratic functions \cite{KT13}.

\begin{cor}
The following holds true for $N > 0$
$$
H^6(K(\Z_N, 2); \Z)
= 
\left\{
\begin{array}{ll}
0, & (\mbox{$N$ odd}) \\
\Z_2. & (\mbox{$N$ even})
\end{array}
\right.
$$
\end{cor}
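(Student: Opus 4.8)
The plan is to reduce to the case where $N$ is a prime power, handle odd prime powers and $N=2^f$ separately, and in both cases reuse the Leray--Serre spectral sequence of the path fibration $B\Z_N \simeq \Omega K(\Z_N,2) \to PK(\Z_N,2) \to K(\Z_N,2)$ that was already set up for the $H^5$ computation. Writing $N=\prod_p p^{f_p}$, the equivalence $K(\Z_N,2)\simeq\prod_p K(\Z_{p^{f_p}},2)$ together with the K\"unneth formula writes $H^6(K(\Z_N,2);\Z)$ as a sum of $\otimes$- and $\mathrm{Tor}$-terms built from the factors; since each $\tilde H^{*}(K(\Z_{p^{f_p}},2);\Z)$ is a $p$-group in positive degrees, every cross term (an $\otimes$ or $\mathrm{Tor}$ of groups of coprime order) vanishes, so $H^6(K(\Z_N,2);\Z)\cong\bigoplus_p H^6(K(\Z_{p^{f_p}},2);\Z)$. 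It therefore suffices to show this group is $0$ for $p$ odd and $\Z_2$ for $p=2$.

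For a prime power $N$, I would extend the $E_2$-page used in the $H^5$ lemma by one column and locate $E_2^{6,0}=H^6(K(\Z_N,2);\Z)$. Since the total space is contractible, $E_\infty^{6,0}=0$, so I must account for every differential touching the $(6,0)$ spot. The odd rows of the $E_2$-page vanish, $H^1(K(\Z_N,2);-)=0$ by $2$-connectivity, and $H^5(B\Z_N;\Z)=0$; checking $d_r$ for $r=2,3,\dots$ this leaves $d_3\colon E_3^{3,2}\to E_3^{6,0}$ as the only possibly nonzero differential into $E^{6,0}$ (and none leaves it). Here $E_3^{3,2}=H^3(K(\Z_N,2);\Z_N)\cong\Z_N$ by the universal coefficient theorem, and $E_3^{0,4}=H^4(B\Z_N;\Z)=\Z_N$ is generated by $u^2$ with $u$ a generator of $H^2(B\Z_N;\Z)$. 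Since $d_3$ is a derivation, $d_3(u^2)=2u\,d_3(u)$, and because $d_3\colon E_3^{0,2}\to E_3^{3,0}$ is the isomorphism from the degree-$3$ theorem, $u\,d_3(u)$ generates $E_3^{3,2}\cong\Z_N$. Hence $\mathrm{Im}(d_3\colon E_3^{0,4}\to E_3^{3,2})$ is all of $E_3^{3,2}$ when $N$ is odd, so $d_3\circ d_3=0$ forces $d_3\colon E_3^{3,2}\to E_3^{6,0}$ to vanish and $H^6=E_\infty^{6,0}=0$; and it is $2\Z_N$ when $N$ is even, so $d_3\colon E_3^{3,2}\to E_3^{6,0}$ factors through $\Z_N/2\Z_N=\Z_2$, and the requirement that it be surjective onto $E_\infty^{6,0}=0$ forces $E_2^{6,0}=H^6(K(\Z_N,2);\Z)$ to be a quotient of $\Z_2$, i.e.\ $0$ or $\Z_2$.

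To exclude $H^6(K(\Z_{2^f},2);\Z)=0$, I would pass to homology. The universal coefficient theorem with the already-known $H^4(K(\Z_N,2);\Z)=0$ and $H^5(K(\Z_N,2);\Z)=\Z_{2N}$ gives $H_3(K(\Z_N,2);\Z)=0$, $H_4(K(\Z_N,2);\Z)=\Z_{2N}$, and $H^6(K(\Z_N,2);\Z)\cong H_5(K(\Z_N,2);\Z)$. Feeding this into the homology UCT with $\Z_2$ coefficients yields $0\to H_5(K;\Z)\otimes\Z_2\to H_5(K;\Z_2)\to\mathrm{Tor}(H_4(K;\Z),\Z_2)\to 0$, i.e.\ $0\to H_5(K;\Z)\otimes\Z_2\to H_5(K;\Z_2)\to\Z_2\to 0$ for $N$ even. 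For $N=2^f$, Proposition~\ref{prop:Z2_cohomology_ring} shows $H^5(K(\Z_{2^f},2);\Z_2)$ has $\Z_2$-basis $\{w_5,\,w_2w_3\}$, hence $\dim_{\Z_2}H_5(K(\Z_{2^f},2);\Z_2)=2$; the exact sequence then forces $H_5(K;\Z)\otimes\Z_2=\Z_2\neq 0$, so $H^6(K(\Z_{2^f},2);\Z)\neq 0$, and with the dichotomy above it is $\Z_2$. The K\"unneth reduction then gives $H^6(K(\Z_N,2);\Z)=0$ for odd $N$ and $\Z_2$ for even $N$.

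The main obstacle I anticipate is the spectral-sequence bookkeeping: identifying $E_3^{3,2}\cong\Z_N$ correctly, verifying that $d_3\colon E_3^{3,2}\to E_3^{6,0}$ really is the only differential into $E^{6,0}$ at every page, and keeping the multiplicative structure straight ($d_3$ a derivation and $u\,d_3(u)$ a generator of $E_3^{3,2}$) --- this is exactly the step where the odd and even cases diverge. The passage to $\Z_2$ coefficients via the Proposition is routine once the $\Z_2$-cohomology ring is in hand.
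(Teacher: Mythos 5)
Your proof is correct, and it reaches the stated answer, but in the even case it takes a genuinely different (and longer) route than the paper. Both arguments use the Leray--Serre spectral sequence of the path fibration and agree that $d_3\colon E_3^{3,2}\to E_3^{6,0}$ is the only differential that can kill $E^{6,0}$, and that $\mathrm{Im}\bigl[d_3\colon E_3^{0,4}\to E_3^{3,2}\bigr]$ is all of $\Z_N$ for $N$ odd and $2\Z_N$ for $N$ even. The paper then observes that $E_\infty^{3,2}=E_4^{3,2}=0$ (no further differentials touch the $(3,2)$ spot), which upgrades the containment $\mathrm{Im}\subset\mathrm{Ker}$ coming from $d_3\circ d_3=0$ to an equality, so that $H^6(K(\Z_N,2);\Z)=E_3^{6,0}\cong E_3^{3,2}/\mathrm{Im}(d_3)$ is computed exactly in one stroke: $0$ for $N$ odd and $\Z_N/2\Z_N=\Z_2$ for $N$ even. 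You only use the containment, which in the even case yields the weaker statement that $H^6$ is a quotient of $\Z_2$, and you then rule out the answer $0$ by a separate lower-bound argument: a prime-power reduction via the K\"unneth formula, the universal coefficient theorem to convert the problem into $H_5(K;\Z)\otimes\Z_2\neq 0$, and the mod-$2$ cohomology ring $\Z_2[w_2,w_3,w_5,\ldots]$ of $K(\Z_{2^f},2)$ to count $\dim_{\Z_2}H^5(K;\Z_2)=2$. All of these steps check out (the cross terms in the K\"unneth decomposition vanish for coprime torsion, and the UCT bookkeeping $H_3=0$, $H_4\cong\Z_{2N}$, $H^6\cong H_5$ is right), so your argument is complete; what it buys is an independent consistency check against the mod-$2$ ring structure, at the cost of invoking Proposition~\ref{prop:Z2_cohomology_ring} where the paper needs only the internal exactness of the spectral sequence. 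If you want to shorten your even case to match the paper's, note that $E_4^{3,2}=E_\infty^{3,2}=0$ forces $\mathrm{Ker}\bigl[d_3\colon E_3^{3,2}\to E_3^{6,0}\bigr]=\mathrm{Im}\bigl[d_3\colon E_3^{0,4}\to E_3^{3,2}\bigr]=2\Z_N$, which pins down $H^6\cong\Z_2$ without leaving the spectral sequence.
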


\begin{proof}
Let us see the spectral sequence:
$$
\begin{array}{c|c|c|c|c|c|c|c|}
q = 5 & 0 & 0 & 0 & 0 & 0 & 0 & 0 \\
\hline
q = 4 & \Z_N & 0 & \Z_N & \Z_N & & & \\
\hline
q = 3 & 0 & 0 & 0 & 0 & 0 & 0 & 0 \\
\hline
q = 2 & \Z_N & 0 & \Z_N & \Z_N & & & \\
\hline
q = 1 & 0 & 0 & 0 & 0 & 0 & 0 & 0 \\
\hline
q = 0 & \Z & 0 & 0 & \Z_N & 0 & & \\
\hline
E_2^{p, q} & p = 0 & p = 1 & p = 2 & p = 3 & p = 4 & p = 5 & p = 6 \\
\end{array}
$$
For $E_\infty^{6, 0}$ to be killed, the homomorphism $d_3 : E_3^{3, 2} \to E_3^{6, 0}$ must be surjective. Also, since $0 = E_\infty^{3, 2} = E_4^{3,2}$, we have an exact sequence
$$
E_3^{0, 4} \overset{d_3}{\to} E_3^{3, 2} \overset{d_3}{\to} E_3^{6, 0}.
$$
Putting these results together, we have
\begin{align*}
E_2^{6, 0}
&=
E_3^{6, 0} \cong E_3^{3, 2}/\mathrm{Ker}[d_3 : E_3^{3, 2} \to E_3^{6, 0}]\\
&= E_3^{3, 2}/\mathrm{Im}[d_3 : E_3^{0, 4} \to E_3^{3, 2}].
\end{align*}
As is seen, if $N$ is odd, then $d_3 : E_3^{0, 4} \to E_3^{3, 2}$ is bijective, so that $E_2^{6, 0} = 0$. If $N$ is even, then the image of $d_3$ is $2\Z_N \subset \Z_N$, so that $E_2^{6, 0} = \Z_N/2\Z_N = \Z_2$.
\end{proof}

\bibliography{ref,bib_June2017,glide_spt_bib_01}
\bibliographystyle{unsrt} 

\end{document}